\documentclass[11pt]{article}

\usepackage{fullpage}

% ------------------------------------------------------------------------------
\usepackage[utf8]{inputenc}
% ------------------------------------------------------------------------------
\usepackage[backend=bibtex,
maxbibnames=99,
style=numeric,
sorting=nyt,
firstinits=true
]{biblatex}

%tikz
\usepackage{tikz}
\usetikzlibrary{math,calc,arrows, shapes,positioning,decorations.pathreplacing}

\usepackage[english]{babel}
\usepackage{amsmath,amsthm,amssymb,bm}
\usepackage{enumerate}

\newtheorem{theorem}{Theorem}
\newtheorem{lemma}{Lemma}
\newtheorem{corollary}{Corollary}

\newcommand{\abs}[1]{\vert #1 \vert}

\newcommand{\argmin}{\operatorname{argmin}}
\newcommand{\argmax}{\operatorname{argmax}}
\newcommand{\conv}{\operatorname{conv}}

\newenvironment{keywords}{\begin{trivlist}\item[]{\bfseries\sffamily Keywords:}\ }{\end{trivlist}}

\newenvironment{MSCclass}{\begin{trivlist}\item[]{\bfseries\sffamily MSC classes:}\ }{\end{trivlist}}
  
  \makeatletter

\newcommand*\samethanks[1][\value{footnote}]{\footnotemark[#1]}

\makeatother

% ------------------------------------------------------------------------------
\bibliography{ORprecsJournal.bib}
% ------------------------------------------------------------------------------
% ==============================================================================
% Main part of the document

\begin{document}

\title{Approximation Algorithms and LP Relaxations for Scheduling Problems Related to Min-Sum Set Cover\thanks{This work has been supported by the Alexander von Humboldt Foundation with funds from the German Federal Ministry of Education and Research (BMBF). An extended abstract of an earlier version of this paper is scheduled to appear in \emph{Proceedings of WAOA 2019}.}}
\author{Felix Happach\thanks{ 
     Department of Mathematics and School of Management,
    Technische Universit\"{a}t M\"{u}nchen, Germany.\newline Emails: {\tt \{felix.happach,andreas.s.schulz\}@tum.de}}
    \and
    Andreas S.\ Schulz\samethanks[2]
}
\date{}

\maketitle

\begin{abstract}
We consider single-machine scheduling problems that are natural generalizations or variations of the min-sum set cover problem and the min-sum vertex cover problem. For each of these problems, we give new approximation algorithms. Some of these algorithms rely on time-indexed LP relaxations. We show how a variant of alpha-point scheduling leads to the best-known approximation ratios, including a guarantee of 4 for an interesting special case of the so-called generalized min-sum set cover problem. We also make explicit the connection between the greedy algorithm for min-sum set cover and the concept of Sidney decomposition for precedence-constrained single-machine scheduling, and show how this leads to a 4-approximation algorithm for single-machine scheduling with so-called bipartite OR-precedence constraints.  
\end{abstract}

\begin{keywords}
scheduling; precedence constraints; min-sum set cover; linear programming relaxation; approximation algorithm
\end{keywords}

\begin{MSCclass}
90B35; 90C05; 68W25
\end{MSCclass}

\section{Introduction}\label{sec:intro}

We consider the problem of scheduling jobs subject to AND/OR-precedence constraints on a single machine.
These scheduling problems are closely related to (precedence-constrained) min-sum set cover~\cite{FeigeLovaszTetali2002,FeigeLovaszTetali2004,McClintockMestreWirth2017} and generalized min-sum set cover~\cite{AzarGamzuYin2009,BansalGupta2010,SkutellaWilliamson2011,ImSviridenkoZwaan2014}.
Let $N = A \dot{\cup} B$ be the set of $n$ jobs with processing times $p_j \geq 0$ and weights $w_j \geq 0$ for all $j \in N$.
The precedence constraints are given by a directed graph $G=(N,E_{\wedge} \dot{\cup} E_{\vee})$, where $(i,j) \in E_{\wedge} \cup E_{\vee}$ means that job~$i$ is a predecessor of job~$j$.
The arcs in $E_{\wedge} \subseteq (A \times A) \cup (B \times B)$ and $E_{\vee} \subseteq A \times B$ represent AND- and OR-precedence constraints, respectively.
That is, a job in $N$ requires that \emph{all} its predecessors w.r.t.~$E_{\wedge}$ are completed before it can start.
A job in $B$, however, requires that \emph{at least one} of its predecessors w.r.t.~$E_{\vee}$ is completed beforehand.
The set of \emph{OR-predecessors} of job $b \in B$ is denoted by $\mathcal{P}(b) := \{a \in A \, | \, (a,b) \in E_{\vee} \}$.
Note that $\mathcal{P}(b)$ might be empty for some $b \in B$.

A \emph{schedule} $C$ is an ordering of the jobs on a single machine such that each job $j$ is processed non-preemptively for $p_j$ units of time, and no jobs overlap.
The \emph{completion time} of $j \in N$ in the schedule $C$ is denoted by $C_j$.
A schedule $C$ is \emph{feasible} if (i) $C_j \geq \max\{C_i\ | \ (i,j) \in E_{\wedge} \} + p_j$ for all $j \in N$ (AND-constraints), and (ii) $C_b \geq \min\{C_a \ | \ a \in \mathcal{P}(b) \} + p_b$ for all $b \in B$ with $\mathcal{P}(b) \not= \emptyset$ (OR-constraints).
The goal is to determine a feasible schedule $C$ that minimizes the sum of weighted completion times, $\sum_{j \in N} w_j C_j$.
We denote this problem by $1 \, | \, ao\text{-}prec=A \dot{\vee} B \, | \, \sum w_j C_j$, in an extension of the notation of Erlebach, K\"a\"ab and M\"ohring~\cite{ErlebachKaabMohring2003} and the three-field notation of Graham et al.~\cite{GrahamLawlerLenstraKan1979}. 
This scheduling problem is NP-hard. In fact, it generalizes a number of NP-hard problems, as discussed below.
Therefore, we focus on approximation algorithms.
Let $\Pi$ be a minimization problem, and $\rho \geq 1$.
Recall that a $\rho$-approximation algorithm for $\Pi$ is a polynomial-time algorithm that returns, for every instance of $\Pi$, a feasible solution with objective value at most $\rho$ times the optimal objective value.
If $\rho$ does not depend on the input parameters, we call the algorithm a \emph{constant-factor approximation}.

As already indicated, the scheduling problem we consider is motivated by its close connection to (min-sum) set covering problems.
Figure~\ref{fig:overview} gives an overview of related problems, which we describe briefly in the following paragraphs.

\paragraph*{Min-Sum Set Cover.}
The most basic problem is \emph{min-sum set cover} (MSSC), where the input consists of a hypergraph with vertices $V$ and hyperedges $\mathcal{E}$.
Given a linear ordering of the vertices $f: V \to \abs{V}$, the covering time of hyperedge $e \in \mathcal{E}$ is defined as $f(e) := \min_{v \in e} f(v)$.
The goal is to find a linear ordering of the vertices that minimizes the sum of covering times, $\sum_{e \in \mathcal{E}} f(e)$.
MSSC is indeed a special case of $1 \, | \, ao\text{-}prec=A \dot{\vee} B \, | \, \sum w_j C_j$:
we introduce a job in $A$ for every vertex of $V$ and a job in $B$ for every hyperedge in $\mathcal{E}$, and we set $p_a = w_b  = 1$ and $p_b = w_a = 0$ for all jobs $a \in A$ and $b \in B$.
Further, we let $E_{\wedge} = \emptyset$ and introduce an arc $(a,b) \in E_{\vee}$ in the precedence graph, if the vertex corresponding to~$a$ is contained in the hyperedge corresponding to~$b$. %, then it becomes obvious that MSSC is indeed a special case of $1 \, | \, ao\text{-}prec=A \dot{\vee} B \, | \, \sum w_j C_j$.

MSSC was first introduced by Feige, Lov\'asz and Tetali~\cite{FeigeLovaszTetali2002}, who observed that a simple greedy heuristic due to Bar-Noy et al.~\cite{Bar-NoyEtAl1998} yields an approximation factor of 4.
Feige et al.~\cite{FeigeLovaszTetali2002} simplified the analysis via a primal/dual approach based on a time-indexed linear program.
In the journal version of their paper, Feige et al.~\cite{FeigeLovaszTetali2004} also proved that it is NP-hard to obtain an approximation factor strictly better than 4.
The special case of MSSC where the hypergraph is an ordinary graph is called min-sum vertex cover (MSVC), and is APX-hard~\cite{FeigeLovaszTetali2004}.
Feige et al.~\cite{FeigeLovaszTetali2002,FeigeLovaszTetali2004} provided a 2-approximation for MSVC that is also based on a time-indexed linear program and uses randomized rounding.
Iwata, Tetali and Tripathi~\cite{IwataTetaliTripathi2012} improved the rounding scheme to obtain a 1.79-approximation for MSVC using the same linear program.

Munagala et al.~\cite{Munagala2005} generalized MSSC by introducing non-negative costs~$c_v$ for each vertex $v \in V$ and non-negative weights $w_e$ for each hyperedge $e \in \mathcal{E}$.
Here, the goal is to minimize the sum of weighted covering costs, $\sum_{e \in \mathcal{E}} w_e f(e)$, where the covering cost of $e \in \mathcal{E}$ is defined as $f(e):= \min_{v \in e} \sum_{w: f(w) \leq f(v)} c_w$.
The authors called this problem \emph{pipelined set cover} and proved, among other things, that the natural extension of the greedy algorithm of Feige et al.~for MSSC still yields a 4-approximation.
Similar to MSSC, one can model pipelined set cover as an instance of $1 \, | \, ao\text{-}prec=A \dot{\vee} B \, | \, \sum w_j C_j$.

Munagala et al.~\cite{Munagala2005} posed as an open problem whether there is still a constant-factor approximation for pipelined set cover, if there are AND-precedence constraints in form of a partial order~$\prec$ on the vertices of the hypergraph. That is, any feasible linear ordering $f: V \to \abs{V}$ must satisfy $f(v) < f(w)$, if $v \prec w$.
This question was partly settled by McClintock, Mestre and Wirth~\cite{McClintockMestreWirth2017}. They presented a $4\sqrt{\abs{V}}$-approximation algorithm for \emph{precedence-constrained MSSC}, which is the extension of MSSC where $E_{\wedge} = \{(a',a) \in A \times A \, | \, a' \prec a\}$. 
The algorithm uses a $\sqrt{\abs{V}}$-approximative greedy algorithm on a problem called max-density precedence-closed subfamily.
The authors also propose a reduction from the so-called planted dense subgraph conjecture~\cite{CharikarNaamadWirth2016} to precedence-constrained MSSC.
Roughly speaking, the conjecture says that for all $\varepsilon > 0$ there is no polynomial-time algorithm that can decide with advantage $> \varepsilon$ whether a random graph on $m$ vertices is drawn from $(m,m^{\alpha -1})$ or contains a subgraph drawn from $(\sqrt{m},\sqrt{m}^{\beta -1})$ for certain $0 < \alpha, \beta < 1$.\footnote{A random graph drawn from $(m,p)$ contains $m$ vertices and the probability of the existence of an edge between any two vertices is equal to $p$.}
If the conjecture holds true, then this implies that there is no $\mathcal{O}(\abs{V}^{1/12-\varepsilon})$-approximation for precedence-constrained MSSC~\cite{McClintockMestreWirth2017}.

The ordinary set cover problem~\cite{Karp1972} is also a special case of $1 \, | \, ao\text{-}prec=A \dot{\vee} B \, | \allowbreak \, \sum w_j C_j$: we can introduce a job in $A$ with $p_a = 1$ and $w_a = 0$ for every vertex of the hypergraph, a job in $B$ with $p_b = w_b = 0$ for every hyperedge, and an arc $(a,b) \in E_{\vee}$ in the precedence graph, if the vertex corresponding to $a$ is contained in the hyperedge corresponding to $b$.
Further, we include an additional job $x$ in $B$ with $p_x = 0$ and $w_x = 1$, and introduce an arc $(b,x) \in E_{\wedge}$ for every job $b \in B \setminus \{x\}$.
If the set cover instance admits a cover of cardinality $k$, we first schedule the corresponding vertex-jobs in $A$, so all hyperedge-jobs are available for processing at time~$k$.
Then job $x$ can complete at time $k$, which gives an overall objective value of~$k$.
Similarly, any schedule with objective value equal to $k$ implies that all hyperedge-jobs are completed before time~$k$, so there exists a cover of size at most~$k$.
Recall that set cover admits an $\ln(m)$-approximation~\cite{Johnson1974,Lovasz1975}, where $m$ is the number of hyperedges, and this is best possible, unless P=NP~\cite{DinurSteuer2014}.

 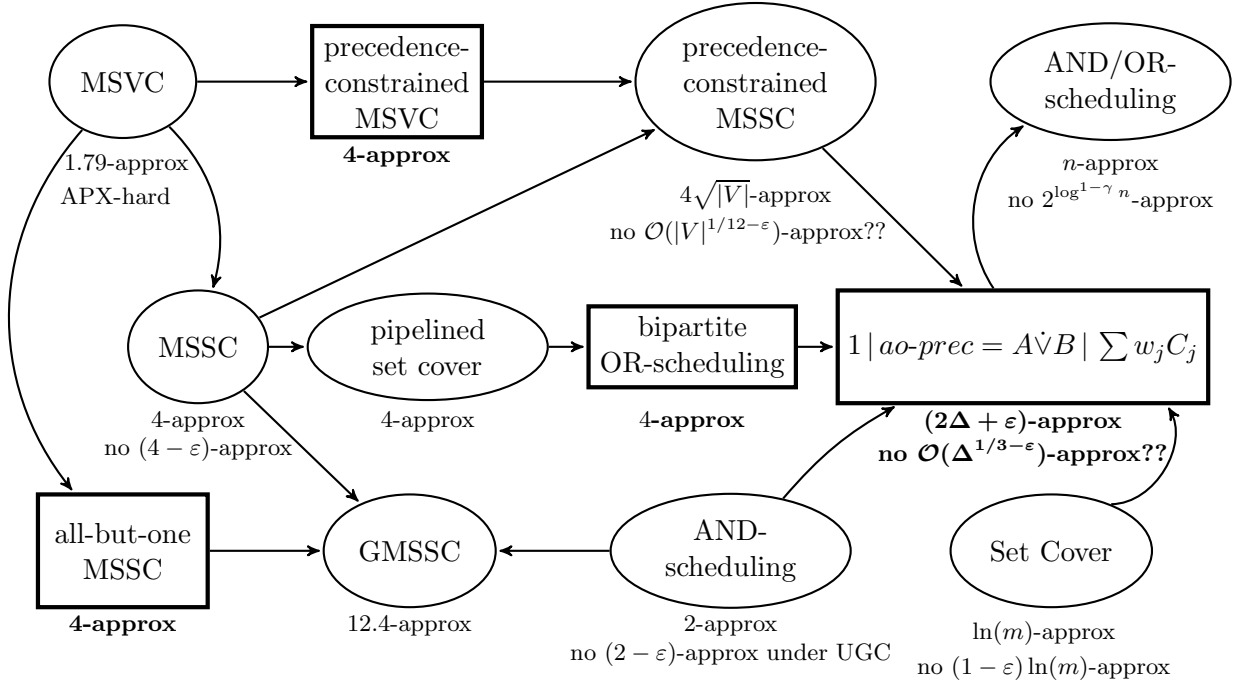
\begin{figure}
\centering
\begin{tikzpicture}[->,>=stealth',shorten >=1pt,auto,thick]

		% Min-Sum Set Cover (MSSC)
		\node[ellipse,draw,minimum size=1.5cm] (MSSC) {MSSC};
		\node [below of=MSSC] {{\footnotesize $4$-approx}};
		\node [below=0.3cm of MSSC] {{\footnotesize no $(4-\varepsilon)$-approx}};
		
		% Pipelined set cover
		\node[ellipse,draw,text width=2cm,align=center] (pipeMSSC) [right=0.5cm of MSSC] {pipelined set cover};
		\node [below of=pipeMSSC] {{\footnotesize $4$-approx}};
		
		% bipartite OR
		\node[rectangle,ultra thick,draw,text width=2.5cm,align=center] (bipOR) [right=0.5cm of pipeMSSC] {bipartite OR-scheduling};
		\node [below of=bipOR] {{\footnotesize \bf $4$-approx}};
		
		% our scheduling problem
		\node[rectangle,ultra thick,draw,minimum size=1.5cm] (scheduling) [right=0.5cm of bipOR] {$1 \, | \, ao\text{-}prec=A \dot{\vee} B \, | \, \sum w_j C_j$};
		\node [below of=scheduling] {{\footnotesize \bf $\bm{(2\Delta + \varepsilon)}$-approx}};
		\node [below=0.3cm of scheduling] {{\footnotesize \bf no $\bm{\mathcal{O}(\Delta^{1/3 - \varepsilon})}$-approx??}};

		% Min-Sum Vertex Cover (MSVC)
		\node[ellipse,draw,minimum size=1.5cm] (MSVC) [above=2cm of MSSC,xshift=-1cm] {MSVC};
		\node [below of=MSVC,yshift=-0.1cm,xshift=0.05cm] {{\footnotesize $1.79$-approx}};
		\node [below=0.5cm of MSVC,xshift=-0.1cm] {{\footnotesize APX-hard}};
		
		% precedence-constrained MSVC
		\node[rectangle,ultra thick,draw,text width=2cm,align=center] (precMSVC) [right=1.5cm of MSVC] {precedence-constrained MSVC};
		\node [below of=precMSVC] {{\footnotesize \bf $\bm{4}$-approx}};
		
		% precedence-constrained MSSC
		\node[ellipse,draw,text width=2cm,align=center] (precMSSC) [right=2cm of precMSVC] {precedence-constrained MSSC};
		\node [below=0.1cm of precMSSC] {{\footnotesize $4\sqrt{\abs{V}}$-approx}};
		\node [below=0.6cm of precMSSC,xshift=-0.1cm] {{\footnotesize no $\mathcal{O}(\abs{V}^{1/12- \varepsilon})$-approx??}};
		
		% AND/OR-scheduling
		\node[ellipse,draw,minimum size=1cm,text width=2cm,align=center] (ANDOR) [right=1.5cm of precMSSC] {AND/OR-scheduling};
		\node [below=0.1cm of ANDOR] {{\footnotesize $n$-approx}};
		\node [below=0.4cm of ANDOR] {{\footnotesize no $2^{\log^{1-\gamma} n}$-approx}};

		% all-but-one MSSC
		\node[rectangle,ultra thick,draw,minimum size=1.5cm,text width=2cm,align=center] (aboMSSC) [below=4.7cm of MSVC] {all-but-one MSSC};
		\node [below of=aboMSSC] {{\footnotesize \bf $\bm{4}$-approx}};
		
		% Generalized MSSC
		\node[ellipse,draw,minimum size=1.5cm] (GMSSC) [right=1.5cm of aboMSSC] {GMSSC};
		\node [below of=GMSSC] {{\footnotesize $12.4$-approx}};
		
		% AND-scheduling
		\node[ellipse,draw,text width=2cm,align=center] (AND) [right=1.5cm of GMSSC] {AND-scheduling};
		\node [below of=AND] {{\footnotesize $2$-approx}};
		\node [below=0.3cm of AND] {{\footnotesize no $(2-\varepsilon)$-approx under UGC}};		
		
		% Set Cover
		\node[ellipse,draw,minimum size=1.5cm] (setcover) [right=1.3cm of AND,align=center] {Set Cover};
		\node [below of=setcover,yshift=-0.1cm,xshift=-0.1cm] {{\footnotesize $\ln(m)$-approx}};
		\node [below=0.5cm of setcover,xshift=-0.1cm] {{\footnotesize no $(1-\varepsilon)\ln(m)$-approx}};
		
		\path (MSVC) edge (precMSVC)
				(precMSVC) edge (precMSSC)
				(precMSSC) edge (scheduling)
				(scheduling) edge[bend left=45] (ANDOR);
		\path (MSVC) edge [bend left=30] (MSSC)
				(MSSC) edge (pipeMSSC)
				(pipeMSSC) edge (bipOR)
				(bipOR) edge (scheduling);
		\path (MSSC) edge (precMSSC);
		\path (MSSC) edge (GMSSC);	
		\path (MSVC) edge[bend right=40] (aboMSSC)
				(aboMSSC) edge (GMSSC);	
		\path (AND) edge (GMSSC);
		\path (AND) edge [bend left=10] (scheduling);
		\path (setcover) edge [bend right=60] (scheduling);
		
\end{tikzpicture}
\caption{\small Overview of related problems and results. An arrow from problem $\Pi_1$ to $\Pi_2$ indicates that $\Pi_2$ generalizes $\Pi_1$. Problems in rectangular frames are explicitly considered in this paper, and our results are depicted in bold. Lower bounds indicated with ``??'' are assuming hardness of the planted dense subgraph problem~\cite{CharikarNaamadWirth2016}.}
\label{fig:overview}
\end{figure}

\paragraph*{New Approximation Algorithms.}
W.l.o.g., suppose that $E_{\wedge}$ is transitively closed, i.e.~$(i,j) \in E_{\wedge}$ and $(j,k) \in E_{\wedge}$ implies $(i,k) \in E_{\wedge}$.
We may further assume that there are no redundant OR-precedence constraints, i.e.~if $(a,b) \in E_{\vee}$ and $(a',a) \in E_{\wedge}$, then $(a',b) \notin E_{\vee}$.
Otherwise we could remove the arc $(a,b)$ from $E_{\vee}$, since any feasible schedule has to schedule $a'$ before $a$.
Let $\Delta := \max_{b \in B} \abs{\mathcal{P}(b)}$ be the maximum number of OR-predecessors of a job in $B$.
One can see that $\Delta$ is bounded from above by the cardinality of a maximum independent set in the induced subgraph on $E_{\wedge} \cap (A \times A)$.
Note that $\Delta$ is often relatively small compared to the total number of jobs.
For instance, if the precedence constraints are derived from an underlying graph, where the predecessors of each edge are its incident vertices (as in MSVC), then $\Delta \leq 2$.
Our first result is the following.
\begin{theorem}\label{thm:2DeltaApproximation}
There is a $2\Delta$-approximation algorithm for $1 \, | \, ao\text{-}prec=A \dot{\vee} B, \allowbreak \, p_j \in \{0,1\} \, |  \, \sum w_j C_j$.
Moreover, for any  $\varepsilon > 0$, there is a $(2\Delta + \varepsilon)$-approximation algorithm for $1 \, | \, ao\text{-}prec=A \dot{\vee} B \, | \, \sum w_j C_j$.
\end{theorem}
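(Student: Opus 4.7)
The plan is to obtain both statements from a single time-indexed LP relaxation combined with a randomized $\alpha$-point rounding. The extra factor of $\Delta$ (compared with the classical factor of $2$ for alpha-point schedules under AND-precedence) will come from the fact that an OR-constraint only forces \emph{some} predecessor's LP mass to accumulate, so we lose at most a factor of $\Delta=\max_{b\in B}|\mathcal{P}(b)|$ when we pick one specific predecessor to lower-bound the LP completion time of $b$.

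For the unit-length case, I would first formulate a standard time-indexed LP with variables $x_{j,t}$ for $t\in\{1,\dots,n\}$ indicating that job $j$ completes at time $t$, subject to $\sum_t x_{j,t}=1$, the machine-capacity constraints $\sum_j x_{j,t}\le 1$, the AND-constraint $\sum_{s\le t}x_{j,s}\le\sum_{s\le t-1}x_{i,s}$ for every $(i,j)\in E_{\wedge}$, and the relaxed OR-constraint $\sum_{s\le t}x_{b,s}\le \sum_{a\in\mathcal{P}(b)}\sum_{s\le t-1}x_{a,s}$ for every $b\in B$ with $\mathcal{P}(b)\ne\emptyset$. The LP completion time is $C_j^{LP}=\sum_t t\, x_{j,t}$, and the objective is $\sum_j w_j C_j^{LP}$. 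The OR-constraint is valid because in any integral schedule, once $b$ has completed, at least one of its OR-predecessors must have completed strictly earlier.

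The rounding is then a two-parameter $\alpha$-point scheme. Given $\alpha\in(0,1]$, let $C_j^{\alpha}$ denote the smallest $t$ with $\sum_{s\le t}x_{j,s}\ge\alpha$. The algorithm picks $\alpha\in(0,1]$ uniformly at random and schedules the jobs in order of their $\alpha$-points (ties broken arbitrarily, but consistent with $E_{\wedge}$). Two observations drive the analysis. First, the standard Markov-style bound $C_j^{\alpha}\le C_j^{LP}/\alpha$ holds for every $j$, and the machine-capacity constraint yields that the completion time in the rounded schedule is at most $C_j^{\alpha}$. Second, and this is the crucial new ingredient, if $b\in B$ with $\mathcal{P}(b)\ne\emptyset$ has $\sum_{s\le t}x_{b,s}\ge \alpha$, then by the OR-constraint $\sum_{a\in\mathcal{P}(b)}\sum_{s\le t-1}x_{a,s}\ge\alpha$, so by pigeonhole some $a^\ast\in\mathcal{P}(b)$ satisfies $\sum_{s\le t-1}x_{a^\ast,s}\ge\alpha/\Delta$. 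Hence $C_b^{\alpha}\le C_{a^\ast}^{\alpha/\Delta}+1\le C_{a^\ast}^{LP}\cdot\Delta/\alpha+1\le C_b^{LP}\cdot\Delta/\alpha+1$, where the last inequality uses that any feasible integral solution has $C_b\ge C_{a^\ast}+1$, hence the LP enforces $C_b^{LP}\ge C_{a^\ast}^{LP}$ on average (which is exactly the conclusion one derives by summing the OR inequality over $t$). Jobs in $A$ need only the AND-bound $C_j^{\alpha}\le C_j^{LP}/\alpha$. Taking expectation with $\alpha$ uniform on $(0,1]$ gives the crucial $\mathbb{E}[1/\alpha]$-free bound by the usual trick of integrating over $\alpha$ and exchanging sums, yielding an overall factor of $2\Delta$ on the LP objective.

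For the general processing-time case, the time horizon is now exponentially large, so I would replace the time-indexed LP by a standard \emph{interval-indexed} relaxation on geometrically growing intervals $[(1+\varepsilon)^\ell,(1+\varepsilon)^{\ell+1})$, as in the classical framework of Hall, Schulz, Shmoys, and Wein. The AND- and OR-constraints are reformulated per interval, each rounding step loses an extra multiplicative factor of $(1+\varepsilon)$, and one re-runs the same $\alpha$-point analysis. The hardest step, and the one I would treat most carefully, is verifying the OR-bound $C_b^{\alpha}\le C_{a^\ast}^{\alpha/\Delta}+p_b$ in the interval-indexed setting and pushing the pigeonhole argument through the geometric discretization; once that is done, the factor $2\Delta$ becomes $(1+\varepsilon)\cdot 2\Delta$, which after rescaling $\varepsilon$ yields the claimed $(2\Delta+\varepsilon)$-approximation.
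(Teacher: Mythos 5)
You have the right high-level plan---a time-indexed LP, $\alpha$-point rounding, and a pigeonhole argument over $\mathcal{P}(b)$ that costs a factor $\Delta$---but the rounding scheme as written does not produce feasible schedules, and the analysis chain contains several errors. The central gap is feasibility: you propose picking a \emph{single} $\alpha$, sorting all jobs by their $\alpha$-points, and breaking ties consistently with $E_{\wedge}$ only. This need not respect the OR-constraints. Take $b\in B$ with $\mathcal{P}(b)=\{a_1,a_2\}$, $p_{a_1}=p_{a_2}=1$, $p_b=0$, and the LP solution $x_{a_i,1}=x_{a_i,2}=1/2$ for $i=1,2$ and $x_{b,1}=1$, which satisfies all constraints of the LP. For $\alpha=0.6$ one has $C_b^{\alpha}=1$ but $C_{a_1}^{\alpha}=C_{a_2}^{\alpha}=2$, so your ordering puts $b$ strictly before both of its OR-predecessors. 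The paper instead draws $\beta$, sets $\alpha=\beta/\Delta$, sorts $A$-jobs by $t^{\alpha}_a$ and $B$-jobs by $t^{\beta}_b$, and breaks ties so that $a\prec b$ whenever $a\in\mathcal{P}(b)$ and $t^{\alpha}_a=t^{\beta}_b$; the pigeonhole step is used there to prove \emph{feasibility}, namely that some $a_b\in\mathcal{P}(b)$ satisfies $t^{\alpha}_{a_b}\le t^{\beta}_b$ and hence $a_b\prec b$. You allude to a ``two-parameter $\alpha$-point scheme'' in your first paragraph---which is exactly the right idea---but then describe a one-parameter scheme.

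Several of the stated bounds are also wrong. The inequality $C_b^{\alpha}\le C_{a^\ast}^{\alpha/\Delta}+1$ points the wrong way: the pigeonhole gives $\sum_{s\le C_b^{\alpha}-1}x_{a^\ast,s}\ge\alpha/\Delta$, hence $C_{a^\ast}^{\alpha/\Delta}\le C_b^{\alpha}-1$. The claim that the rounded completion time of $j$ is ``at most $C_j^{\alpha}$'' drops a factor of $1/\alpha$; the correct bound, via the machine-capacity constraint together with the fact that every job placed before $b$ has at least an $\alpha$-fraction of its LP mass by time $t^{\beta}_b$, is $C_b(\beta)\le t^{\beta}_b/\alpha=\Delta\,t^{\beta}_b/\beta$. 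And with $\alpha$ uniform on $(0,1]$ the expectation of any bound of the form $\Delta\,\overline{C}_b/\alpha$ diverges; the paper draws $\beta$ with density $f(\beta)=2\beta$ precisely to cancel the $1/\beta$, so that $\mathbb{E}[C_b(\beta)]\le\int_0^1 2\beta\cdot\frac{\Delta}{\beta}\,t^{\beta}_b\,d\beta=2\Delta\int_0^1 t^{\beta}_b\,d\beta\le 2\Delta\,\overline{C}_b$. Finally, the intermediate claim $C_{a^\ast}^{LP}\le C_b^{LP}$ for the pigeonhole choice of $a^\ast$ does not follow from the LP constraints, which only bound $b$'s cumulative mass by the \emph{aggregate} mass of $\mathcal{P}(b)$; fortunately, the correct analysis never needs this inequality.
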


The proof of Theorem~\ref{thm:2DeltaApproximation} is contained in Section~\ref{sec:MSSC}.
First, we exhibit a randomized approximation algorithm for $1\, | \, ao\text{-}prec=A \dot{\vee} B, \, p_j \in \{0,1\} \, | \, \sum w_j C_j$, i.e.~if all processing times are 0/1, and then we show how to derandomize it.
This proves the first part of Theorem~\ref{thm:2DeltaApproximation}.

A natural question that arises in the context of real-world scheduling problems is whether approximation guarantees for 0/1-problems still hold for arbitrary processing times. 
As observed by Munagala et al.~\cite{Munagala2005}, the natural extension of the greedy algorithm for MSSC still works, if the processing times of jobs in $A$ are arbitrary, but all jobs in $B$ have zero processing time, and there are no AND-precedence constraints.
Once jobs in $B$ have non-zero processing times, their analysis of the greedy algorithm fails.
Our algorithm can be extended to arbitrary processing times (which proves the second part of Theorem~\ref{thm:2DeltaApproximation}) and, additionally, release dates.
In Section~\ref{sec:bipartite}, we also provide a 4-approximation algorithm for $1 \, | \, ao\text{-}prec=A \dot{\vee} B \, | \, \sum w_j C_j$ with $E_{\wedge} = \emptyset$, i.e., for single-machine scheduling with bipartite OR-precedence constraints.
Since $E_{\wedge} = \emptyset$, this problem is a special case of scheduling with OR-precedence constraints only, and we therefore denote it by $1 \, | \, or\text{-}prec=bipartite \, | \, \sum w_j C_j$.
The algorithm in Section~\ref{sec:bipartite} generalizes the greedy algorithm of~\cite{Munagala2005}, and is the first constant-factor approximation for bipartite OR-scheduling.

Note that the result of Theorem~\ref{thm:2DeltaApproximation} improves on the algorithm of~\cite{McClintockMestreWirth2017} for precedence-constrained MSSC in two ways.
First, the approximation factor of $2\Delta$ does not depend on the total number of jobs, but on the maximum number of OR-predecessors of a job in $B$.
In particular, we immediately obtain a 4-approximation for the special case of precedence-constrained MSVC.
Secondly, the algorithm works for arbitrary processing times, additional AND-precedence constraints on $B \times B$, and it can be extended to non-trivial release dates $r_j \geq 0$ of the jobs.
Note that, in general, $\Delta$ and $\sqrt{\abs{V}}$ are incomparable.
In most practically relevant instances, $\Delta$ should be considerably smaller than $\sqrt{\abs{V}}$.

It is important to highlight that the approximation factor of $2 \Delta$ in Theorem~\ref{thm:2DeltaApproximation} does not contradict the conjectured hardness of precedence-constrained MSSC stated in~\cite{McClintockMestreWirth2017}.
The set $A$ in the reduction of~\cite{McClintockMestreWirth2017} from the planted dense subgraph problem contains a job for every vertex and every edge of the random graph on $m$ vertices.
Each vertex-job consists of the singleton $\{0\}$ whereas each edge-job is a (random) subset of $[q] := \{1,\dots,q\}$, for some non-negative integer $q$.
Every element in $[q]$ appears in expectation in $mp^2$ many edge-jobs, where $p$ is a carefully chosen probability.
If we interpret this as a scheduling problem, we can delete the dummy element $0$ from the instance.
So the maximum indegree of a job in $B =[q]$ (maximum number of appearances of the element) is $\Delta \approx m p^2 \geq m^{\frac{1}{4}}$, see~\cite{McClintockMestreWirth2017}.
Hence the gap $\Omega(m^{\frac{1}{8}})$ in the reduction translates to a gap of $\Omega(\sqrt{\Delta})$ in our setting.
Therefore, if the planted dense subgraph conjecture~\cite{CharikarNaamadWirth2016} holds true, then there is no $\mathcal{O}(\Delta^{1/3 - \varepsilon})$-approximation algorithm for $1 \, | \, ao\text{-}prec=A \dot{\vee} B \, | \, \sum w_j C_j$ for any $\varepsilon > 0$.

Note that in the reduction from set cover to $1 \, | \, ao\text{-}prec=A \dot{\vee} B \, | \, \sum w_j C_j$ the parameter $\Delta$ equals the maximum cardinality of any hyperedge in the set cover instance.
Hochbaum~\cite{Hochbaum1982} presented an approximation algorithm for set cover with a guarantee of~$\Delta$.
Hence, the $2\Delta$-approximation of Theorem~\ref{thm:2DeltaApproximation} does not contradict the hardness of obtaining a $(1-\varepsilon)\ln(m)$-approximation for set cover~\cite{DinurSteuer2014}.
If the planted dense subgraph conjecture~\cite{CharikarNaamadWirth2016} is false, then constant-factor approximations for $1 \, | \, ao\text{-}prec=A \dot{\vee} B \, | \, \sum w_j C_j$ with $E_{\wedge} \subseteq A \times A$ may be possible.
However, the reduction from set cover shows that, in general, we cannot get a constant-factor approximation if $E_{\wedge} \cap (B \times B) \not= \emptyset$.

\paragraph*{Generalized Min-Sum Set Cover.}
A different generalization of MSSC, called \emph{generalized min-sum set cover} (GMSSC), was introduced by Azar, Gamzu and Yin~\cite{AzarGamzuYin2009}.
The input of GMSSC is similar to MSSC, but, in addition, each hyperedge $e \in \mathcal{E}$ is associated with a covering requirement $\kappa(e) \in [\abs{e}]$, where $\abs{e}$ is the cardinality of hyperedge $e$.
Given a linear ordering of the vertices, the covering time of $e \in \mathcal{E}$ is now the first point in time when $\kappa(e)$ of its incident vertices appear in the linear ordering.
The goal is again to minimize the sum of covering times over all hyperedges.

In our notation, this means that $E_{\wedge} = \emptyset$ and each job $b \in B$ requires at least $\kappa(b) \in [\abs{\mathcal{P}(b)}]$ of its OR-predecessors to be completed before it can start.
The extreme cases $\kappa(b) = 1$ and $\kappa(b) = \abs{\mathcal{P}(b)}$ are MSSC and the minimum latency set cover problem, respectively.
The latter is, in fact, equivalent to single-machine scheduling with AND-precedence constraints~\cite{Woeginger2003}.
Over time, several constant-factor approximations for GMSSC were proposed.
Bansal, Gupta and Krishnaswamy~\cite{BansalGupta2010} presented an algorithm with an approximation guarantee of 485, which was improved to 28 by Skutella and Williamson~\cite{SkutellaWilliamson2011}.
Both algorithms are based on the same time-indexed linear program, but use different rounding techniques, namely standard randomized rounding~\cite{BansalGupta2010} and $\alpha$-points~\cite{SkutellaWilliamson2011}, respectively.

The currently best-known approximation ratio for GMSSC is 12.4, due to Im, Sviridenko and Zwaan~\cite{ImSviridenkoZwaan2014}.
However, Im et al.~\cite{ImSviridenkoZwaan2014} conjecture that GMSSC admits a 4-approximation. 
By adapting the proof of Theorem~\ref{thm:2DeltaApproximation}, we obtain a 4-approximation for GMSSC if $\kappa(b) = \max\{\abs{\mathcal{P}(b)} - 1,1\}$, for all $b \in B$.
To the best of the authors' knowledge, this case, which we call \emph{all-but-one MSSC}, was not considered before.
Here, each job (with more than one predecessor) needs at least all but one of them to be completed before it can start.
This is a natural special case inbetween MSSC and AND-precedence constrained scheduling (where $\kappa(b)= 1$ and $\kappa(b) = \abs{\mathcal{P}(b)}$, respectively).
Note that all-but-one MSSC generalizes MSVC.
The proof of Theorem~\ref{thm:4approximation:allbutoneMSSC} below is contained in Section~\ref{sec:GMSSC}.

\begin{theorem}\label{thm:4approximation:allbutoneMSSC}
There is a 4-approximation algorithm for all-but-one MSSC.
\end{theorem}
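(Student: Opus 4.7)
My plan is to reduce all-but-one MSSC to an instance of $1 \, | \, ao\text{-}prec=A \dot{\vee} B, \, p_j \in \{0,1\} \, | \, \sum w_j C_j$ whose maximum number of OR-predecessors equals $2$, and then invoke Theorem~\ref{thm:2DeltaApproximation}, which delivers a $2\Delta = 4$-approximation in this regime. The key structural observation is that, for $b \in B$ with $\abs{\mathcal{P}(b)} = k \geq 2$, the requirement ``at least $k-1$ of the $k$ OR-predecessors of $b$ are completed before $b$ starts'' is logically equivalent to ``for every pair $P \subseteq \mathcal{P}(b)$ with $\abs{P}=2$, at least one element of $P$ is completed before $b$ starts'': if fewer than $k-1$ predecessors are done, at least two are missing and form an uncovered pair, while conversely any uncovered pair witnesses at least two missing predecessors.

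The reduced instance is constructed as follows. Keep every job of $A$ and $B$ with its original processing time and weight. For each $b \in B$ with $\abs{\mathcal{P}(b)} \geq 2$ and each pair $P \subseteq \mathcal{P}(b)$ of size~$2$, introduce a helper job $b_P$ into $B$ with $p_{b_P} = w_{b_P} = 0$, add OR-edges $(a, b_P) \in E_{\vee}$ for both $a \in P$, and add AND-edges $(b_P, b) \in E_{\wedge}$. All former OR-edges entering $b$ are removed and replaced by these helper gadgets. The new instance has polynomial size (at most $\binom{k}{2} \leq n^2$ helpers per $b$), has $p_j \in \{0,1\}$, maximum OR-in-degree equal to~$2$, and respects the model's assumptions because the added AND-edges lie inside $B \times B$.

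It remains to show that the reduction preserves optimal objective values. Given any feasible all-but-one MSSC schedule, each helper $b_P$ can be assigned completion time $\min_{a \in P} C_a$ without occupying machine capacity (since $p_{b_P}=0$), so the original completion times $(C_a, C_b)_{a \in A, b \in B}$ remain feasible in the reduced instance; conversely, any feasible reduced schedule satisfies $C_b \geq p_b + \max_P C_{b_P} \geq p_b + \max_P \min_{a \in P} C_a = p_b + c_{(k-1)}$, where $c_{(k-1)}$ denotes the second-largest predecessor completion time, so the induced schedule on $A \cup B$ is feasible for the original problem. Since helpers and $A$-jobs carry weight zero, the objective $\sum_j w_j C_j$ matches exactly in both directions and the two optima coincide, so Theorem~\ref{thm:2DeltaApproximation} applied to the reduced instance yields a $4$-approximation for all-but-one MSSC. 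The main obstacle is really the careful bookkeeping, namely verifying the two-sided identity $p_b + \max_P \min_{a \in P} C_a = p_b + c_{(k-1)}$ and confirming that the reduced instance satisfies the model's side conditions (no redundant OR-edges, transitive closure of $E_\wedge$); no new rounding or LP analysis is required beyond what is already used to prove Theorem~\ref{thm:2DeltaApproximation}.
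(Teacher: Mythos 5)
Your proposal is correct and takes a genuinely different, more modular route than the paper's. The paper proves Theorem~\ref{thm:4approximation:allbutoneMSSC} directly via a specialized time-indexed relaxation, LP~(\ref{GMSSC:allbutone:LP}), whose constraints~(\ref{GMSSC:allbutone:LP:OR}) enforce the pairwise condition for each $\{i,j\}\subseteq\mathcal{P}(b)$, and then reruns the $\alpha$-point rounding of Lemma~\ref{lem:ANDprecsbipartiteunitp} with $\alpha=\beta/2$ (Algorithm~3). You instead reduce all-but-one MSSC, in an objective-preserving way, to an instance of $1\,|\,ao\text{-}prec=A\dot{\vee}B,\,p_j\in\{0,1\}\,|\,\sum w_jC_j$ with $\Delta\leq 2$, by introducing a zero-weight, zero-processing-time helper $b_P\in B$ for every size-two subset $P\subseteq\mathcal{P}(b)$, with OR-edges from $P$ into $b_P$ and an AND-edge $(b_P,b)\in E_\wedge\cap(B\times B)$, and then invoking Theorem~\ref{thm:2DeltaApproximation} as a black box. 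Your reduction is sound: the identity $\max_P\min_{a\in P}C_a=c_{(k-1)}$ is correct, the reduced instance respects all the model's side conditions ($E_\wedge$ has no chains so it is trivially transitively closed, there are no AND-edges within $A$ so the no-redundant-OR-edge assumption is vacuous, and $T=\sum_j p_j$ is unchanged so the time-indexed LP remains polynomial despite the $\mathcal{O}(n^3)$ helpers), and combining, in LP~(\ref{ANDprecsbipartiteunitp:LP}), the OR-constraint for $b_P$ with the AND-constraint for $(b_P,b)$ recovers exactly constraints~(\ref{GMSSC:allbutone:LP:OR}). So the underlying relaxation and rounding coincide; your route buys modularity (no new rounding lemma needed), while the paper's keeps the LP compact by building the pairwise constraints directly rather than through gadget jobs.
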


\paragraph*{Related Work on Scheduling Problems.}
The first polynomial-time algorithm for scheduling jobs on a single machine to minimize the sum of weighted completion times is due to Smith~\cite{Smith1956}.
Once there are AND-precedence constraints, the problem becomes strongly NP-hard~\cite{LenstraKan1978}.
The first constant-factor approximation for AND-precedence constraints was proposed by Hall, Shmoys and Wein~\cite{HallShmoysWein1996}. It had an approximation factor of $4 + \varepsilon$.
Their algorithm is based on a time-indexed linear program and $\alpha$-point scheduling, but with a fixed value of $\alpha$.
Subsequently, various $2$-approximations based on linear programs~\cite{Schulz1996,HallSchulz1997,ChudakHochbaum1999} as well as purely combinatorial algorithms~\cite{ChekuriMotwani1999,MargotQueyranneWang2003} were derived.
Assuming a variant of the Unique Games Conjecture of Khot~\cite{Khot2002}, Bansal and Khot~\cite{BansalKhot2009} showed that the approximation ratio of $2$ is essentially best possible.

If the precedence constraints are of AND/OR-structure, then the problem does not admit constant-factor approximations anymore.
Let $0< c < \frac{1}{2}$ and $\gamma = (\log \log n )^{-c}$.
It is NP-hard to approximate the sum of weighted completion times of unit processing time jobs on a single machine within a factor of $2^{\log^{1-\gamma} n}$, if AND/OR-precedence constraints are involved~\cite{ErlebachKaabMohring2003}.
The precedence graph in the reduction consists of four layers with an OR/AND/OR/AND-structure.
Erlebach, K\"a\"ab and M\"ohring~\cite{ErlebachKaabMohring2003} also showed that scheduling the jobs in order of non-decreasing processing times (among the available jobs) yields an $n$-approximation for general weights and a $\sqrt{n}$-approximation for unit weights, respectively.
It can easily be verified that $1 \, | \, ao\text{-}prec=A \dot{\vee} B \, | \, \sum w_j C_j$ is a special case of the problem considered in~\cite{ErlebachKaabMohring2003}.

Scheduling unit processing time jobs with OR-precedence constraints only on parallel machines to minimize the sum of completion times can be solved in polynomial time~\cite{Johannes2005}.
However, once we want to minimize the sum of \emph{weighted} completion times, already the single-machine problem with unit processing times becomes strongly NP-hard~\cite{Johannes2005}.
In Section~\ref{sec:NPhardness}, we extend this result by showing that the problem remains NP-hard, even if we restrict the weights to be 0/1.

\paragraph*{Our Techniques and LP Relaxations.}
The algorithms that lead to Theorems~\ref{thm:2DeltaApproximation} and~\ref{thm:4approximation:allbutoneMSSC} are based on time-indexed linear programs and the concept of random $\alpha$-point scheduling, similar to, e.g.,~\cite{GoemansPersonal,HallShmoysWein1996,HallSchulz1997,SchulzSkutella1997,ChekuriSteinEtal2001,GoemansEtAl2002}.
One new element here is to not use a global value for $\alpha$, but to use different values of $\alpha$ for the jobs in $A$ and $B$, respectively.
This is crucial in order to obtain feasible schedules.
We focus on time-indexed linear programs, since other standard LP formulations fail in the presence of OR-precedence constraints.

More specifically, we show in Section~\ref{sec:otherformulations} that these relaxations have an integrality gap that is linear in the number of jobs, even on instances with $\Delta = 2$ and $E_{\wedge} = \emptyset$.
In Section~\ref{sec:linearordering}, we discuss a formulation in linear ordering variables that was introduced by Potts~\cite{Potts1980}.
We present a class of constraints that is facet-defining for the integer hull (Theorem~\ref{thm:ORfacetdefining}), and prove that the integrality gap remains linear, even if we add these inequalities.
In Section~\ref{sec:completiontime}, we consider an LP relaxation in completion time variables, which was proposed by Wolsey~\cite{WolseyISMPTalk} and Queyranne~\cite{Queyranne1993}.
We first generalize the well-known parallel inequalities~\cite{WolseyISMPTalk,Queyranne1993}, which fully describe the polytope in the absence of precedence constraints, to OR-precedence constraints (Theorem~\ref{thm:singlemachinepolytope:validrelaxation}).
Then we show that, even though we add an exponential number of tight valid inequalities, the corresponding LP relaxation still exhibits a linear integrality gap.

%====================================================================================
\section{A New Generalization of Min-Sum Set Cover}\label{sec:MSSC}

Consider an instance of $1\, | \, ao\text{-}prec=A \dot{\vee} B \, | \, \sum w_j C_j$. W.l.o.g., we may assume that $w_a = 0$ for all $a \in A$.
Otherwise, we can shift a positive weight of a job in $A$ to an additional successor in $B$ with zero processing time.
Further, we may assume that all data is integer and $p_j \geq 1$ for every job $j \in N$ that has no predecessors (otherwise such a job can be disregarded).
So no job can complete at time 0 in a feasible schedule.

Suppose that $p_j \in \{0,1\}$ for all $j \in N$, and let $T = \sum_{j \in N} p_j$ be the time horizon.
We consider the time-indexed linear programming formulation of Sousa and Wolsey~\cite{SousaWolsey1992} with AND-precedence constraints~\cite{HallShmoysWein1996}.
The binary variable $x_{jt}$ indicates whether job $j \in N$ completes at time $t \in [T]$ or not.
Additionally, we introduce constraints corresponding to $E_{\vee}$.
The resulting linear relaxation is
\begin{subequations}  \label{ANDprecsbipartiteunitp:LP}
\begin{align}
& \min & \sum\limits_{b \in B} \sum\limits_{t = 1}^{T} w_b \cdot t \cdot x_{bt}  \\
& \text{s.t.} & \sum\limits_{t=1}^T x_{jt} &= 1 & \forall &\,j \in N, \label{ANDprecsbipartiteunitp:LP:execute} \\
				&&  \sum\limits_{j \in N} \sum\limits_{s=t -p_j + 1}^{t} x_{js} &\leq 1  & \forall &\, t \in [T], \label{ANDprecsbipartiteunitp:LP:nooverlap} \\
 				&& \sum\limits_{s=1}^{t + p_b} x_{bs} - \sum\limits_{a \in \mathcal{P}(b)} \sum\limits_{s=1}^{t}  x_{as} &\leq 0 & \forall &\, b \in B: \, \mathcal{P}(b) \not= \emptyset, \ \forall \, t \in [T -p_b], \label{ANDprecsbipartiteunitp:LP:OR} \\
 				&& \sum\limits_{s =1}^{t+p_j} x_{js} - \sum\limits_{s=1}^t x_{is} &\leq 0  &\forall &\, (i,j) \in E_{\wedge}, \ \forall \, t \in [T - p_j], \label{ANDprecsbipartiteunitp:LP:AND}\\
 			&& x_{jt} 	&\geq 0  & \forall &\, j \in N, \ \forall \, t \in [T] \label{ANDprecsbipartiteunitp:LP:relax}.
\end{align} 
\end{subequations}
Constraints~(\ref{ANDprecsbipartiteunitp:LP:execute}) and~(\ref{ANDprecsbipartiteunitp:LP:nooverlap}) ensure that each job is executed and no jobs overlap, respectively.
Note that only jobs with $p_j = 1$ appear in~(\ref{ANDprecsbipartiteunitp:LP:nooverlap}).
Constraints~(\ref{ANDprecsbipartiteunitp:LP:OR}) and~(\ref{ANDprecsbipartiteunitp:LP:AND}) ensure OR- and AND-precedence constraints, respectively.
Note that we can solve LP~(\ref{ANDprecsbipartiteunitp:LP}) in polynomial time, since $T \leq n$. If $\mathcal{P}(b) = \emptyset$ for all $b \in B$ then the instance is an instance of scheduling with AND-precedence constraints only. In this case, we set $\Delta = 1$ in the following.

Let $\overline{x}$ be an optimal fractional solution of LP~(\ref{ANDprecsbipartiteunitp:LP}).
For $j \in N$, we call $\overline{C}_j = \sum_t t \cdot \overline{x}_{jt}$ its \emph{fractional completion time}.
Note that $\sum_j w_j \overline{C}_j$ is a lower bound on the objective value of an optimal integer solution, which corresponds to an optimal schedule.
For $0 < \alpha \leq 1$ and $j \in N$, we define its $\alpha$-point, $t^\alpha_j := \min\{t \, | \, \sum_{s=1}^t \overline{x}_{js} \geq \alpha\}$, to be the first integer point in time when an $\alpha$-fraction of $j$ is completed~\cite{HallShmoysWein1996}.

The algorithm, hereafter called Algorithm 1, works as follows. % and is summarized in Figure~\ref{alg:ANDprecsbipartiteunitp}.
First, solve LP~(\ref{ANDprecsbipartiteunitp:LP}) to optimality, and let $\overline{x}$ be an optimal fractional solution.
Then, draw $\beta$ at random from the interval $(0,1]$ with density function $f(\beta) = 2\beta$, and set $\alpha = \frac{\beta}{\Delta}$.
(Choosing $\alpha$ as a function of $\beta$ is crucial in order to obtain a feasible schedule in the end. This together with~(\ref{ANDprecsbipartiteunitp:LP:OR}) ensures that at least one OR-predecessor of a job $b \in B$ completes early enough in the constructed schedule. The density function $f(\beta) = 2\beta$ is chosen to cancel out an unbounded term of $\frac{1}{\beta}$ in the expected value of the completion time of job $b$, as in~\cite{GoemansPersonal,SchulzSkutella1997}.)
Now, compute $t^\alpha_a$ and $t^\beta_b$ for all jobs $a \in A$ and $b \in B$, respectively.
Sort the jobs in order of non-decreasing values $t^\alpha_a$ ($a \in A$) and $t^\beta_b$ ($b \in B$), and denote this total order by $\prec$.
If there is $b \in B$ and $a \in \mathcal{P}(b)$ with $t^\alpha_a = t^\beta_b$, then set $a \prec b$.
Similarly, set $i \prec j$, if $(i,j) \in E_{\wedge}$ and $t^\alpha_{i} = t^\alpha_j$ (for $i,j \in A$) or $t^\beta_{i} = t^\beta_j$ (for $i,j \in B$).
(Recall that $E_{\wedge} \subseteq (A \times A) \cup (B \times B)$, so $(i,j) \in E_{\wedge}$ implies $i,j \in A$ or $i,j \in B$.)
Break all other ties arbitrarily.
Our main result shows that ordering jobs according to $\prec$ yields a feasible schedule and that the expected objective value of this schedule is at most $2\Delta$ times the optimum.

\begin{lemma}\label{lem:ANDprecsbipartiteunitp}
Algorithm 1 is a randomized $2 \Delta$-approximation for $1 \, | \, ao\text{-}prec=A \dot{\vee} B, \, p_j \in \{0,1\} \, |\allowbreak \, \sum w_j C_j$.
\end{lemma}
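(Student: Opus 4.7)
The plan is to prove two things: (i) that the ordering $\prec$ yields a feasible schedule, and (ii) that its expected weighted sum of completion times is at most $2\Delta$ times the LP optimum, which is itself a lower bound on the true optimum. The subtle point driving both parts is the choice $\alpha = \beta/\Delta$: it must be small enough so that a pigeonhole argument on OR-predecessors forces at least one of them early in the ordering, yet the same small $\alpha$ is also used to \emph{count} all jobs in the completion-time bound.

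\textbf{Feasibility.} For an AND-arc $(i,j) \in E_{\wedge}$ with $i,j \in A$ (the case $i,j \in B$ is symmetric), I plug $t = t^\alpha_j - p_j$ into~(\ref{ANDprecsbipartiteunitp:LP:AND}) to obtain $\sum_{s \leq t^\alpha_j - p_j} \overline{x}_{is} \geq \sum_{s \leq t^\alpha_j} \overline{x}_{js} \geq \alpha$, hence $t^\alpha_i \leq t^\alpha_j - p_j$; this inequality is strict when $p_j = 1$, and for $p_j = 0$ the tie-breaking rule ensures $i \prec j$. For an OR-job $b \in B$ with $\mathcal{P}(b) \neq \emptyset$, plugging $t = t^\beta_b - p_b$ into~(\ref{ANDprecsbipartiteunitp:LP:OR}) yields $\sum_{a \in \mathcal{P}(b)} \sum_{s \leq t^\beta_b - p_b} \overline{x}_{as} \geq \sum_{s \leq t^\beta_b} \overline{x}_{bs} \geq \beta$. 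Since $|\mathcal{P}(b)| \leq \Delta$, averaging produces some $a \in \mathcal{P}(b)$ with $\sum_{s \leq t^\beta_b - p_b} \overline{x}_{as} \geq \beta/\Delta = \alpha$, so $t^\alpha_a \leq t^\beta_b - p_b$, and the same tie-breaking puts $a \prec b$. This step is the only reason for rescaling $\alpha$ by $\Delta$.

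\textbf{Bounding $C_b$ for $b \in B$.} Since all processing times lie in $\{0,1\}$, the completion time of $b$ in the produced schedule equals $|\{j \in N : j \preceq b,\ p_j = 1\}|$. I apply the smaller threshold $\alpha$ uniformly to bound this cardinality: every $a \in A$ with $a \preceq b$ satisfies $\sum_{s \leq t^\beta_b} \overline{x}_{as} \geq \alpha$ by the definition of $t^\alpha_a$, and every $b' \in B$ with $b' \preceq b$ satisfies $\sum_{s \leq t^\beta_b} \overline{x}_{b's} \geq \beta \geq \alpha$. Combining these lower bounds with the no-overlap constraint~(\ref{ANDprecsbipartiteunitp:LP:nooverlap}) aggregated from $s = 1$ to $t^\beta_b$ (which gives $\sum_{j:\, p_j = 1} \sum_{s \leq t^\beta_b} \overline{x}_{js} \leq t^\beta_b$) produces $\alpha \, C_b \leq t^\beta_b$, i.e., $C_b \leq \Delta\, t^\beta_b / \beta$. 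This is where the ``collapse'' to the single factor $\Delta$, rather than the naive $\Delta + 1$ that one obtains by treating $A$ and $B$ separately, occurs.

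\textbf{Taking expectations.} With the density $f(\beta) = 2\beta$, a standard calculation partitioning $(0,1]$ according to the value of $t^\beta_b$ yields $E[t^\beta_b/\beta] = \int_0^1 2\, t^\beta_b\, d\beta = 2\sum_t t \cdot \overline{x}_{bt} = 2\,\overline{C}_b$, so $E[C_b] \leq 2\Delta\, \overline{C}_b$. Weighting by $w_b$ and summing over $b \in B$ then gives $E\bigl[\sum_b w_b C_b\bigr] \leq 2\Delta \sum_b w_b \overline{C}_b \leq 2\Delta \cdot \mathrm{OPT}$, since LP~(\ref{ANDprecsbipartiteunitp:LP}) is a relaxation. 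The main obstacle is not the randomization itself but recognizing that the same threshold $\alpha$ can be used for both $A$-side and $B$-side contributions in the completion-time bound; once this is in place, the expectation computation and the feasibility argument are routine applications of $\alpha$-point scheduling.
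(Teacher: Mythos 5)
Your proposal is correct and takes essentially the same approach as the paper: use the LP constraints (\ref{ANDprecsbipartiteunitp:LP:AND}) and (\ref{ANDprecsbipartiteunitp:LP:OR}) together with the choice $\alpha = \beta/\Delta$ and a pigeonhole argument to show feasibility, then bound $\alpha\,C_b(\beta) \leq t^\beta_b$ via the no-overlap constraint and take expectations with density $f(\beta)=2\beta$. The only cosmetic difference is that the paper phrases the completion-time bound in terms of the fractions $\eta^b_i(\beta)$ and uses the no-idle-time identity $\sum_i \eta^b_i(\beta)\,p_i = t^\beta_b$, whereas you aggregate the no-overlap constraints over $s \le t^\beta_b$; both give the same inequality, and your remark about collapsing to $\Delta$ rather than $\Delta+1$ accurately captures why the uniform threshold $\alpha$ is applied to both $A$- and $B$-jobs.
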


\begin{proof}
Note that Algorithm 1 runs in polynomial time, since we can solve LP~(\ref{ANDprecsbipartiteunitp:LP}) in polynomial time.
We first show that scheduling the jobs in order of $\prec$ yields a feasible schedule for any fixed $0 < \beta \leq 1$.
Recall the definition of $\Delta = \max_{b \in B} \abs{\mathcal{P}(b)}$.
Let $0 < \beta \leq 1$ and set $\alpha = \alpha(\beta) = \frac{\beta}{\Delta}$.

Note that $t^{\alpha}_{i} \leq t^{\alpha}_j$ for any $(i,j) \in E_{\wedge} \cap (A \times A)$ and $t^{\beta}_{i} \leq t^{\beta}_j$ for any $(i,j) \in E_{\wedge} \cap (B \times B)$, by~(\ref{ANDprecsbipartiteunitp:LP:AND}).
Due to the tie breaking rule, we have $i \prec j$ whenever $(i,j) \in E_{\wedge}$.
For $b \in B$ with $\mathcal{P}(b) \not= \emptyset$, constraint~(\ref{ANDprecsbipartiteunitp:LP:OR}) implies
$\beta \leq \sum_{s=1}^{t^\beta_b} \overline{x}_{bs} \leq \sum_{a \in \mathcal{P}(b)} \sum_{s=1}^{t^\beta_b - p_b} \overline{x}_{as}$.
So there is $a_b \in \mathcal{P}(b)$ such that $\sum_{s=1}^{t^\beta_b -p_b} \overline{x}_{a_b s} \geq \frac{\beta}{\abs{\mathcal{P}(b)}} \geq \alpha$.
Hence, $t^{\alpha}_{a_b} \leq t^\beta_b$, and thus $a_b \prec b$, by the tie breaking rule.
So, $\prec$ satisfies all prececende constraints, and Algorithm 1 returns a feasible schedule.

As for the approximation factor, fix $j \in N$.
For $t \in \{0,\dots,T\}$, let $\alpha_t = \sum_{s=1}^t \overline{x}_{js}$ be the fraction of job $j$ that is completed by time $t$.
Note that $\alpha_0 = 0$, $\alpha_T = 1$ and $t^\gamma_j \leq t$ if $\gamma \leq \alpha_t$.
For $0 < \gamma \leq 1$ and $j \in N$, we observe, similar to~\cite{Goemans1997,SchulzSkutella1997}, that
\begin{equation}\label{ineq:integral:t}
\int_0^1 t^\gamma_j d\gamma = \sum_{t=1}^{T} \int_{\alpha_{t-1}}^{\alpha_{t}} t^\gamma_j d \gamma \leq \sum_{t=1}^{T} (\alpha_{t} - \alpha_{t-1}) t = \sum_{t=1}^{T} \left( \sum_{s=1}^{t} \overline{x}_{js} - \sum_{s=1}^{t-1} \overline{x}_{js} \right) t = \sum_{t=1}^T \overline{x}_{jt} \cdot t = \overline{C}_j.
\end{equation}

For fixed $0 < \beta \leq 1$ and $i, j \in N$, let $\eta_i^j(\beta) = \sum_{s = 1}^{t^\beta_j} \overline{x}_{is}$ be the fraction of $i$ that is processed before $t^\beta_j$.
Note that there is no idle time on the machine in the optimal fractional solution $\overline{x}$, so $\sum\limits_{i \in N} \eta^j_i(\beta) p_i =t^\beta_j $.
Let $b \in B$ and $i \in N$ with $i \prec b$.
The completion times of the schedule returned by Algorithm 1 for a specific realization of $\beta$ are denoted by $C_j(\beta)$.
By construction we have $\alpha \leq \eta^b_i (\beta)$ (if $i \in A$) and $\alpha \leq \beta \leq \eta^b_i(\beta)$ (if $i \in B$), respectively.
So
\begin{equation}\label{ineq:completion:t}
\alpha \, C_b(\beta) = \sum_{i \preceq b} \alpha \, p_i \leq \sum_{i \preceq b} \eta^b_i(\beta) \, p_i \leq t^\beta_b.
\end{equation}
Thus, the expected completion time of $b \in B$ is
\begin{equation}\label{ineq:expectation:t}
\mathbb{E}[C_b(\beta)] = \int_0^1 f(\beta) C_b(\beta) d \beta \leq \int_0^1 f(\beta) \frac{\Delta}{\beta} t^\beta_b d \beta = 2 \Delta \int_0^1 t^\beta_b d \beta \leq 2 \Delta \overline{C}_b.
\end{equation}
Since only jobs in $B$ contribute to the objective function, this yields the claim.
\end{proof}

For fixed $\overline{x}$ and $0 < \beta \leq 1$ we call the schedule that orders the jobs according to $\prec$ the \emph{$\beta$-schedule of $\overline{x}$}.
Given $\overline{x}$ and $0 < \beta \leq 1$, we can construct the $\beta$-schedule in time $\mathcal{O}(n)$.
We derandomize Algorithm 1 by a simple observation similar to~\cite{ChekuriSteinEtal2001,GoemansEtAl2002}.
List all possible schedules that occur as $\beta$ goes from 0 to 1, and pick the best one.
The next lemma shows that the number of different $\beta$-schedules is not too large.

\begin{lemma}\label{lem:ANDprecsbipartite:derandomize}
For every $\overline{x}$ there are $\mathcal{O}(n^2)$ different $\beta$-schedules. 
\end{lemma}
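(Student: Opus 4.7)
The plan is to observe that the $\beta$-schedule is fully determined by the total order $\prec$, which itself is determined by the values $\{t^\alpha_a\}_{a \in A}$ and $\{t^\beta_b\}_{b \in B}$ (together with the fixed tie-breaking rules). Hence it suffices to bound the number of values of $\beta \in (0,1]$ at which at least one of these quantities changes.

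First I would note that for a fixed job $j$, the function $\beta \mapsto t^\beta_j = \min\{t : \sum_{s=1}^t \overline{x}_{js} \geq \beta\}$ is a nondecreasing step function of $\beta$, with jumps exactly at the values $\beta = \sum_{s=1}^t \overline{x}_{js}$ for $t \in [T]$. Thus each $t^\beta_b$ (for $b \in B$) has at most $T$ breakpoints in $(0,1]$, and each $t^\alpha_a = t^{\beta/\Delta}_a$ (for $a \in A$) has at most $T$ breakpoints in $(0,1]$, occurring at $\beta = \Delta \sum_{s=1}^t \overline{x}_{as}$. Taking the union of these breakpoints over all $n$ jobs and all $t \in [T]$ yields at most $n T = \mathcal{O}(n^2)$ values of $\beta$ at which some $\alpha$- or $\beta$-point changes, since $T \leq n$.

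Between any two consecutive such breakpoints, every $t^\alpha_a$ and every $t^\beta_b$ takes a constant value, so the relative order of these points is invariant. Applying the fixed tie-breaking rules (preferring AND-predecessors and OR-predecessors, and breaking remaining ties, say, by job index), the resulting order $\prec$ and hence the $\beta$-schedule are identical throughout each such interval. Therefore, the number of distinct $\beta$-schedules is at most the number of intervals induced by the breakpoints, which is $\mathcal{O}(n^2)$.

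There is no real obstacle here; the only care needed is to handle ties deterministically (so the ordering truly is constant between breakpoints) and to account for both the $\beta$-points of jobs in $B$ and the $\alpha = \beta/\Delta$-points of jobs in $A$ when enumerating breakpoints. To derandomize Algorithm 1, we enumerate one representative $\beta$ from each of the $\mathcal{O}(n^2)$ intervals, construct the corresponding $\beta$-schedule in time $\mathcal{O}(n)$, and return the best, yielding an overall polynomial-time deterministic $2\Delta$-approximation.
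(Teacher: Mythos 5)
Your proof is correct and follows essentially the same reasoning as the paper's: the paper phrases it in terms of the preemptive-schedule interpretation of $\overline{x}$ and counts preemptions (at most one per job per time slot, hence $nT = \mathcal{O}(n^2)$), whereas you directly count the breakpoints of the step functions $\beta \mapsto t^\beta_j$ and $\beta \mapsto t^{\beta/\Delta}_j$, which is the same quantity.
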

\begin{proof}
Note that, as $\beta$ goes from 0 to 1, $\alpha(\beta) = \frac{\beta}{\Delta}$ is a linear function with values from $0$ to~$\frac{1}{\Delta}$.
We interpret the fractional solution $\overline{x}$ as a preemptive schedule where an $\overline{x}_{jt}$-fraction of job $j$ is contiguously scheduled in time slot $[t-1;t]$.
Thus the order of jobs in the $\beta$-schedule of $\overline{x}$ only changes, if the $\alpha(\beta)$-point or $\beta$-point of a job in $A$ or $B$ reaches a point when this job gets preempted, respectively. 
So the number of different $\beta$-schedules is bounded from above by the number of preemptions in $\overline{x}$.
Each job is preempted at most once within each time step.
Recall that $T \in \mathcal{O}(n)$, since $p_j \in \{0,1\}$ for all $j \in N$.
So there are at most $n \cdot T \in \mathcal{O}(n^2)$ preemptions.
\end{proof}

Lemmas~\ref{lem:ANDprecsbipartiteunitp} and~\ref{lem:ANDprecsbipartite:derandomize} together prove the first part of Theorem~\ref{thm:2DeltaApproximation}.
Note that for scheduling instances that are equivalent to MSVC, $\Delta \leq 2$. Hence, we immediately obtain a 4-approximation for these instances.
\begin{corollary}\label{cor:ANDprecsMSVC}
There is a 4-approximation algorithm for precedence-constrained MSVC.
\end{corollary}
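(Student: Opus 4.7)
The plan is to reduce precedence-constrained MSVC to an instance of $1 \, | \, ao\text{-}prec=A \dot{\vee} B, \, p_j \in \{0,1\} \, | \, \sum w_j C_j$ on which Lemmas~\ref{lem:ANDprecsbipartiteunitp} and~\ref{lem:ANDprecsbipartite:derandomize} are directly applicable, so that the $2\Delta$-bound specializes to $4$.

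First, I would describe the reduction explicitly, following the reduction from (unconstrained) MSSC to the scheduling problem given in the introduction. Let $(V,E,\prec)$ be an instance of precedence-constrained MSVC with a partial order $\prec$ on the vertices. Introduce a job $a_v \in A$ with $p_{a_v} = 1$ and $w_{a_v} = 0$ for every vertex $v \in V$, and a job $b_e \in B$ with $p_{b_e} = 0$ and $w_{b_e} = 1$ for every edge $e \in E$. Set $E_\wedge := \{(a_u, a_v) \in A \times A \, | \, u \prec v\}$ and, for every edge $e = \{u,v\} \in E$, insert the OR-arcs $(a_u, b_e), (a_v, b_e) \in E_\vee$. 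This is exactly the construction used for MSSC, augmented by the AND-arcs encoding $\prec$, and it lies within the scope of the scheduling problem studied in Section~\ref{sec:MSSC}.

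Second, I would verify that the objective values match up to a constant that does not affect the approximation ratio. Any feasible linear ordering $f$ of $V$ respecting $\prec$ can be turned into a feasible schedule by processing the vertex-jobs in the order $f$ and appending each edge-job $b_e$ (which has zero processing time) immediately after the later of its two endpoints; the completion time of $b_e$ equals the covering time $f(e) = \min_{v \in e} f(v) + \mathrm{(something)}$ up to the standard identification used throughout the introduction, so the total weighted completion time equals the MSSC objective. Conversely, the schedule returned by Algorithm~1 induces a linear ordering of the vertex-jobs in $A$ that respects $\prec$ (by feasibility), and its cost bounds the cost of the resulting MSVC solution.

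Third, and this is the whole point of the corollary, I would observe that in the reduction every edge-job $b_e$ has exactly $\abs{\mathcal{P}(b_e)} = 2$ OR-predecessors, namely its two endpoints, so $\Delta = \max_{b \in B} \abs{\mathcal{P}(b)} \le 2$. (Edges that become trivial under $\prec$ have $\abs{\mathcal{P}(b)} = 1$, which only helps.) Applying the first part of Theorem~\ref{thm:2DeltaApproximation} (equivalently, Lemmas~\ref{lem:ANDprecsbipartiteunitp} and~\ref{lem:ANDprecsbipartite:derandomize}, since all processing times are $0/1$) yields a $2\Delta \le 4$ approximation for the constructed scheduling instance, hence for precedence-constrained MSVC. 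There is no real obstacle here; the only thing to be careful about is the correspondence between schedules and linear orderings in the presence of the zero-processing-time edge-jobs, but this is the same bookkeeping already carried out when MSSC was embedded into $1 \, | \, ao\text{-}prec = A \dot{\vee} B \, | \, \sum w_j C_j$ at the start of the paper.
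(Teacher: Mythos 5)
Your proposal is correct and matches the paper's approach exactly: encode precedence-constrained MSVC as an instance of $1 \, | \, ao\text{-}prec=A \dot{\vee} B, p_j \in \{0,1\} \, | \, \sum w_j C_j$ in which every edge-job has at most two OR-predecessors, so $\Delta \le 2$, and apply the first part of Theorem~\ref{thm:2DeltaApproximation}. One small slip in your write-up: to realize the covering time $f(e)=\min_{v\in e}f(v)$ as the completion time of the zero-length job $b_e$, you should insert $b_e$ immediately after the \emph{earlier} (not the later) of its two endpoints; as written, appending after the later endpoint would give completion time $\max_{v\in e}f(v)$, contradicting the equality you then assert.
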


If we use an interval-indexed LP instead of a time-indexed LP (see also~\cite{HallShmoysWein1996,HallSchulz1997}), then Algorithm~1 can be generalized to arbitrary processing times. This will prove the second part of Theorem~\ref{thm:2DeltaApproximation}.
Let $\varepsilon' > 0$, and recall that all processing times are non-negative integers.
Let $T = \sum_{j \in N} p_j$ be the time horizon, and let $L$ be minimal such that $(1+\varepsilon')^{L-1} \geq T$.
Set $\tau_{0} := 1$, and let $\tau_l = (1+ \varepsilon')^{l-1}$ for every $l \in [L]$.
We call $(\tau_{l-1},\tau_l]$ the $l$-th interval for $l \in [L]$. 
(The first interval is the singleton $(1,1]:=\{1\}$.)
We introduce a binary variable $x_{jl}$ for every $j \in N$ and for every $l \in [L]$ that indicates whether or not job $j$ completes in the $l$-th interval.
If we relax the integrality constraints on the variables we obtain the following relaxation:
\begin{subequations}\label{ANDprecsMSSC:generalprocessing:LP}
\begin{align}
&\min & \sum\limits_{b \in B} \sum\limits_{l = 1}^{L} w_b \cdot \tau_{l-1} \cdot x_{bl} \\
 &\text{s.t.} & \sum\limits_{l=1}^L x_{jl} &= 1  & \forall &\,j \in N, \label{ANDprecsMSSC:generalprocessing:LP:execute}\\
 				&& \sum\limits_{j \in N} \sum\limits_{k=1}^l  p_j \, x_{jk} &\leq \tau_l  & \forall &\, l \in [L], \label{ANDprecsMSSC:generalprocessing:LP:nooverlap}\\
 				&& \sum\limits_{k=1}^l x_{bk} - \sum\limits_{a \in \mathcal{P}(b)} \sum\limits_{k=1}^l x_{ak} &\leq 0 & \forall &\, b \in B: \, \mathcal{P}(b) \not= \emptyset, \ \forall \, l \in [L], \label{ANDprecsMSSC:generalprocessing:LP:OR} \\
 				&& \sum\limits_{k =1}^{l} x_{jk} - \sum\limits_{k=1}^l x_{ik} &\leq 0  &\forall &\, (i,j) \in E_{\wedge}, \ \forall \, l \in [L], \label{ANDprecsMSSC:generalprocessing:LP:AND} \\
 				&& x_{jl} 	&\geq 0 & \forall &\, j \in N, \ \forall \, l \in [L]: \, \tau_{l-1} \geq p_j \label{ANDprecsMSSC:generalprocessing:LP:relax}.
\end{align}
\end{subequations}
Given $\varepsilon'$, the size of LP~(\ref{ANDprecsMSSC:generalprocessing:LP}) is polynomial, so we can solve it in polynomial time.
Again~(\ref{ANDprecsMSSC:generalprocessing:LP:execute}) ensures that every job is executed.
Constraints~(\ref{ANDprecsMSSC:generalprocessing:LP:nooverlap}) are valid for any feasible schedule, since the total processing time of all jobs that complete within the first $l$ intervals cannot exceed $\tau_l$. 
Constraints~(\ref{ANDprecsMSSC:generalprocessing:LP:OR}) and~(\ref{ANDprecsMSSC:generalprocessing:LP:AND}) ensure that, at the end of each interval, the fractions of the jobs satisfy OR- and AND-precedence constraints, respectively.

Let $\overline{x}$ be an optimal fractional solution of LP~(\ref{ANDprecsMSSC:generalprocessing:LP}), and let $\overline{C}_j = \sum_l \tau_{l-1} \, \overline{x}_{jl}$.
Note that $\sum_j w_j \, \overline{C}_j$ is a lower bound on the optimal objective value of an integer solution, which is a lower bound on the optimal value of a feasible schedule.
Let $l^\alpha_j = \min\{l \, | \, \sum_{k=1}^l \overline{x}_{jk} \geq \alpha\}$ be the $\alpha$-interval of job $j \in N$.
This generalizes the notion of $\alpha$-points from before.

The algorithm for arbitrary processing times is similar to Algorithm 1.
We call it Algorithm 2 and it works as follows.
In order to achieve a $(2 \Delta + \varepsilon)$-approximation, solve LP~(\ref{ANDprecsMSSC:generalprocessing:LP}) with $\varepsilon' = \frac{\varepsilon}{2\Delta}$ and let $\overline{x}$ be an optimal solution.
Then, draw $\beta$ at random from the interval $(0,1]$ with density function $f(\beta) = 2\beta$, and set $\alpha = \frac{\beta}{\Delta}$.
Compute $l^\alpha_a$ and $l^\beta_b$ for all jobs $a \in A$ and $b \in B$, respectively.
Sort the jobs in order of non-decreasing values $l^\alpha_a$ ($a \in A$) and $l^\beta_b$ ($b \in B$) and denote this total order by $\prec$.
If $l^\alpha_a = l^\beta_b$ for some $b \in B$ and $a \in \mathcal{P}(b)$, set $a \prec b$.
Similarly, set $i \prec j$, if $(i,j) \in E_{\wedge}$ and $l^\alpha_{i} = l^\alpha_j$ (for $i,j \in A$) or $l^\beta_{i} = l^\beta_j$ (for $i,j \in B$).
Break all other ties arbitrarily.
Finally, schedule the jobs in the order of $\prec$.
Note that $\prec$ extends the order for $\alpha$-points from Algorithm 1 to $\alpha$-intervals.

\begin{lemma}\label{lem:ANDprecsMSSC:generalprocessingtime}
Algorithm 2 is a randomized $(2 \Delta + \varepsilon)$-approximation for $1 \, | \, ao\text{-}prec=A \dot{\vee} B \, | \, \sum w_j C_j$, for any $\varepsilon > 0$.
\end{lemma}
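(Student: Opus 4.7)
The plan is to mirror the proof of Lemma~\ref{lem:ANDprecsbipartiteunitp}, replacing $\alpha$-points with $\alpha$-intervals and picking up a factor of $(1+\varepsilon')$ whenever we translate between $\tau_{l-1}$ (contribution to $\overline{C}_b$) and $\tau_l$ (upper endpoint used to bound $C_b(\beta)$). The choice $\varepsilon' = \varepsilon/(2\Delta)$ then turns the $2\Delta(1+\varepsilon')$ factor into $2\Delta+\varepsilon$. Algorithm~2 runs in polynomial time because, for fixed $\varepsilon$, LP~(\ref{ANDprecsMSSC:generalprocessing:LP}) has polynomially many variables and constraints.

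First I would establish feasibility of the schedule induced by $\prec$. For $(i,j) \in E_\wedge$, constraint~(\ref{ANDprecsMSSC:generalprocessing:LP:AND}) yields $l^\alpha_i \leq l^\alpha_j$ (if $i,j \in A$) or $l^\beta_i \leq l^\beta_j$ (if $i,j \in B$), and the tie-breaking rule forces $i \prec j$. For $b \in B$ with $\mathcal{P}(b) \neq \emptyset$, constraint~(\ref{ANDprecsMSSC:generalprocessing:LP:OR}) with $l = l^\beta_b$ gives $\beta \leq \sum_{a \in \mathcal{P}(b)} \sum_{k=1}^{l^\beta_b} \overline{x}_{ak}$, so some $a_b \in \mathcal{P}(b)$ satisfies $\sum_{k=1}^{l^\beta_b} \overline{x}_{a_b k} \geq \beta/|\mathcal{P}(b)| \geq \beta/\Delta = \alpha$, hence $l^\alpha_{a_b} \leq l^\beta_b$ and $a_b \prec b$.

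Next, I would bound $C_b(\beta)$ for each $b \in B$. For every $i \preceq b$, we have $\sum_{k=1}^{l^\beta_b} \overline{x}_{ik} \geq \alpha$ (using $i \in A$ and $l^\alpha_i \leq l^\beta_b$, or $i \in B$ and $l^\beta_i \leq l^\beta_b$ together with $\beta \geq \alpha$). Multiplying by $p_i$ and summing, constraint~(\ref{ANDprecsMSSC:generalprocessing:LP:nooverlap}) yields
\begin{equation*}
\alpha\, C_b(\beta) \;=\; \sum_{i \preceq b} \alpha\, p_i \;\leq\; \sum_{i \in N} p_i \sum_{k=1}^{l^\beta_b} \overline{x}_{ik} \;\leq\; \tau_{l^\beta_b}.
\end{equation*}
Now I use the analogue of~(\ref{ineq:integral:t}) for intervals: writing $\alpha_l := \sum_{k=1}^l \overline{x}_{bk}$ (so $\alpha_0 = 0$, $\alpha_L = 1$, and $l^\beta_b \leq l$ whenever $\beta \leq \alpha_l$), and using $\tau_{l^\beta_b} \leq (1+\varepsilon')\tau_{l^\beta_b - 1}$,
\begin{equation*}
\int_0^1 \tau_{l^\beta_b}\, d\beta \;\leq\; (1+\varepsilon') \sum_{l=1}^{L} (\alpha_l - \alpha_{l-1})\, \tau_{l-1} \;=\; (1+\varepsilon') \sum_{l=1}^L \overline{x}_{bl}\, \tau_{l-1} \;=\; (1+\varepsilon')\, \overline{C}_b.
\end{equation*}

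Combining these, the expected completion time satisfies
\begin{equation*}
\mathbb{E}[C_b(\beta)] \;=\; \int_0^1 f(\beta) C_b(\beta)\, d\beta \;\leq\; \int_0^1 2\beta \cdot \frac{\Delta}{\beta} \tau_{l^\beta_b}\, d\beta \;=\; 2\Delta \int_0^1 \tau_{l^\beta_b}\, d\beta \;\leq\; 2\Delta(1+\varepsilon')\, \overline{C}_b \;=\; (2\Delta+\varepsilon)\, \overline{C}_b,
\end{equation*}
by the choice $\varepsilon' = \varepsilon/(2\Delta)$. Summing $w_b \mathbb{E}[C_b(\beta)]$ over $b \in B$ (the jobs in $A$ carry zero weight by the opening reduction) gives an expected objective at most $(2\Delta+\varepsilon)$ times $\sum_b w_b \overline{C}_b$, which is a lower bound on the optimum. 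The main subtlety, and the only place where the argument genuinely differs from Lemma~\ref{lem:ANDprecsbipartiteunitp}, is exactly this $\tau_{l^\beta_b}$ versus $\tau_{l^\beta_b - 1}$ discrepancy; absorbing it into a controllable $(1+\varepsilon')$ factor is what motivates the geometric interval partition and the tuning $\varepsilon' = \varepsilon/(2\Delta)$. Derandomization proceeds exactly as in Lemma~\ref{lem:ANDprecsbipartite:derandomize}, since the $\beta$-schedule changes only polynomially often as $\beta$ traverses $(0,1]$.
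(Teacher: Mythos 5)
Your proof is correct and follows essentially the same route as the paper's Appendix proof: establish feasibility via constraints~(\ref{ANDprecsMSSC:generalprocessing:LP:OR}) and~(\ref{ANDprecsMSSC:generalprocessing:LP:AND}), bound $\alpha\,C_b(\beta) \leq \tau_{l^\beta_b}$ via~(\ref{ANDprecsMSSC:generalprocessing:LP:nooverlap}), integrate against $f(\beta)=2\beta$, and absorb the $\tau_{l^\beta_b}$ versus $\tau_{l^\beta_b - 1}$ slack into the $(1+\varepsilon')$ factor. The only difference is cosmetic bookkeeping---you push the $(1+\varepsilon')$ into the integral bound $\int_0^1 \tau_{l^\beta_b}\,d\beta \leq (1+\varepsilon')\overline{C}_b$ whereas the paper keeps the integral as $\int_0^1 \tau_{l^\beta_b - 1}\,d\beta \leq \overline{C}_b$ and introduces the factor inside the expectation computation---which is algebraically equivalent.
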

\begin{proof}
The proof is similar to the proof of Lemma~\ref{lem:ANDprecsbipartiteunitp}, see Appendix~\ref{appendix:MSSC}.
\end{proof}

We can derandomize Algorithm 2 similar to Lemma~\ref{lem:ANDprecsbipartite:derandomize}.
Interpret $\overline{x}$ as a preemptive schedule that assigns jobs to intervals.
Note that each job is preempted at most once per time interval $(\tau_{l-1},\tau_l]$.
So the number of $\beta$-schedules is bounded from above by $n \cdot L$ which is polynomially bounded in the input size.
This proves the second part of Theorem~\ref{thm:2DeltaApproximation}.

Algorithms 1 and 2 can be further extended to release dates.
To do so, we need to add constraints to LP~(\ref{ANDprecsbipartiteunitp:LP}) and LP~(\ref{ANDprecsMSSC:generalprocessing:LP}) that ensure that no job completes too early.
More precisly, fix $x_{jt} = 0$ for all $j \in N$ and $t < r_j + p_j$ in LP~(\ref{ANDprecsbipartiteunitp:LP}) and $x_{jl} = 0$ for all $j \in N$ and $\tau_{l-1} < r_j + p_j$ in LP~(\ref{ANDprecsMSSC:generalprocessing:LP}), respectively.
When scheduling the jobs according to $\prec$, we might have to add idle time in order to respect the release dates.
This increases the approximation factor slightly.
%Note that the derandomization of Lemma~\ref{lem:ANDprecsbipartite:derandomize} also applies to release dates, since each job gets preempted at most once per time slot/interval.

\begin{lemma}\label{lem:ANDprecsMSSC:releasedates}
There is a $(2\Delta + 2)$- and $(2\Delta + 2 + \varepsilon)$-approximation algorithm for $1 \, | \, r_j, \, ao\text{-}prec=A \dot{\vee} B, \allowbreak \, p_j \in \{0,1\} \, | \, \sum w_j C_j$ and $1 \, | \, r_j, \, ao\text{-}prec=A \dot{\vee} B \, | \, \sum w_j C_j$, respectively.
\end{lemma}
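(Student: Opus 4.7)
The plan is to instantiate exactly the modification described in the paragraph preceding the lemma, and then redo the completion-time bound from Lemma~\ref{lem:ANDprecsbipartiteunitp} with release dates. Concretely, in LP~(\ref{ANDprecsbipartiteunitp:LP}) fix $x_{jt} = 0$ for every $j \in N$ and $t < r_j + p_j$, and extend the time horizon to $T := \max_{j \in N} r_j + \sum_{j \in N} p_j$; analogously in LP~(\ref{ANDprecsMSSC:generalprocessing:LP}) fix $x_{jl} = 0$ whenever $\tau_{l-1} < r_j + p_j$ and enlarge $L$ so that $(1+\varepsilon')^{L-1}$ dominates the same bound. These modifications remain valid relaxations, since any feasible schedule must obey $C_j \geq r_j + p_j$. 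The key consequence is that now $t^\alpha_j \geq r_j + p_j$ for every $0 < \alpha \leq 1$ and every $j \in N$ (and $\tau_{l^\alpha_j - 1} \geq r_j$ in the interval version), because $\overline{x}$ places no mass before $r_j + p_j$.

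Next, run Algorithm~1 (respectively Algorithm~2) on the solution of the modified LP and build the schedule by processing jobs in order $\prec$, inserting idle time whenever the next job's release date has not yet arrived. Enumerating the jobs $j_1 \prec j_2 \prec \dots \prec j_k = b$ and scheduling them as-early-as-possible yields the standard bound
\begin{equation*}
C_b(\beta) \;\leq\; \max_{j \preceq b} r_j \;+\; \sum_{j \preceq b} p_j.
\end{equation*}
For each $j \preceq b$, write $\alpha(j) = \alpha$ if $j \in A$ and $\alpha(j) = \beta$ if $j \in B$; then $r_j + p_j \leq t^{\alpha(j)}_j$ by the release-date-enforced zeros, and $t^{\alpha(j)}_j \leq t^\beta_b$ by the definition of $\prec$. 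Hence $\max_{j \preceq b} r_j \leq t^\beta_b$. Combining this with the inequality $\sum_{j \preceq b} p_j \leq \tfrac{\Delta}{\beta}\, t^\beta_b$ from the proof of Lemma~\ref{lem:ANDprecsbipartiteunitp} gives
\begin{equation*}
C_b(\beta) \;\leq\; \left(1 + \tfrac{\Delta}{\beta}\right) t^\beta_b.
\end{equation*}

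Taking expectations with density $f(\beta) = 2\beta$ and using (\ref{ineq:integral:t}) together with $\beta \leq 1$ yields
\begin{equation*}
\mathbb{E}[C_b(\beta)] \;\leq\; \int_0^1 (2\beta + 2\Delta)\, t^\beta_b\, d\beta \;\leq\; (2\Delta + 2)\, \overline{C}_b,
\end{equation*}
which proves the randomized $(2\Delta+2)$-approximation for 0/1 processing times. Derandomization is identical to Lemma~\ref{lem:ANDprecsbipartite:derandomize}: the number of distinct $\beta$-schedules is still bounded by the number of preemptions in $\overline{x}$, which remains polynomial in the input size (since $T$ is polynomial in the input). For the interval-indexed LP~(\ref{ANDprecsMSSC:generalprocessing:LP}), the identical argument with $l^\alpha_j$ and $\tau_{l-1}$ in place of $t^\alpha_j$ and $t$ gives $\mathbb{E}[C_b(\beta)] \leq (2\Delta + 2)(1+\varepsilon')\, \overline{C}_b$; choosing $\varepsilon' = \varepsilon/(2\Delta+2)$ yields the $(2\Delta + 2 + \varepsilon)$-approximation. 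The only point that requires care is keeping the two sources of slack (idle time from release dates, and the $(1+\varepsilon')$ interval discretization) additive rather than multiplicative, which is handled by the preceding bound that isolates $\max_{j \preceq b} r_j$ from the processing-time sum.
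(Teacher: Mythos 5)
Your proof is correct and follows essentially the same approach as the paper: modify the LP by forcing $\overline{x}$ to place no mass before $r_j + p_j$, use $\max_{j \preceq b} r_j \leq t^\beta_b$ (resp.\ $\tau_{l^\beta_b}$) from the definition of $\prec$, and then bound $C_b(\beta) \leq (1 + \Delta/\beta)\,t^\beta_b$ before integrating against $f(\beta)=2\beta$. The paper writes out the interval-indexed case and remarks that the $0/1$ case is analogous, while you do the reverse; your slightly different arithmetic ($2\beta + 2\Delta$ versus the paper's $(\Delta+1)/\beta$ bound) uses $\beta \leq 1$ at the same point and yields the same $(2\Delta+2)$ and $(2\Delta+2+\varepsilon)$ guarantees.
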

\begin{proof}
The proof works similar to the proofs of Lemma~\ref{lem:ANDprecsbipartiteunitp} and Lemma~\ref{lem:ANDprecsMSSC:generalprocessingtime}, see Appendix~\ref{appendix:MSSC}. 
\end{proof}

%==============================================================================================================
\section{The Generalized Min-Sum Set Cover Problem}\label{sec:GMSSC}

Recall that, in GMSSC, we are given a hypergraph where each hyperedge $e \in \mathcal{E}$ has a certain covering requirement $\kappa(e) \in [\abs{e}]$. Given a linear ordering of the vertices, a hyperedge $e$ is covered as soon as $\kappa(e)$ of its incident vertices appeared in the linear ordering. The goal is to find a linear ordering that minimizes the sum of covering times over all hyperedges. We can model GMSSC as a single-machine scheduling problem to minimize the sum of weighted completion times with job set $N = A \dot{\cup} B$, processing times $p_j \in \{0,1\}$, and certain precedence requirements $\kappa(b)$ for each job~$b \in B$.

In this section, we prove Theorem~\ref{thm:4approximation:allbutoneMSSC}.
That is, we give a $4$-approximation algorithm for the special case of GMSSC where $\kappa(b) = \max\{d(b) - 1,1\}$ with $d(b) := \abs{\mathcal{P}(b)}$ for all $b \in B$.
So each job in $B$ requires all but one of its predecessors to be completed before it can start, unless it has only one predecessor (\emph{all-but-one MSSC}).
Suppose we want to schedule a job $b \in B$ with $d(b) \geq 2$ at time $t \geq 0$.
Then we need at least $d(b) - 1$ of its predecessors to be completed before $t$.
Equivalently, for each pair of distinct $i,j \in \mathcal{P}(b)$ at most one of the two jobs $i,j$ may complete after $t$.
This gives the following linear relaxation with the same time-indexed variables as before and time horizon $T =\sum_{j \in N} p_j \leq n$.
\begin{subequations}\label{GMSSC:allbutone:LP}
\begin{align}
&\min& \sum\limits_{b \in B} \sum\limits_{t = 1}^{T} w_b \cdot t \cdot x_{bt} \\
& \text{s.t.} & \sum\limits_{t=1}^T x_{jt} &= 1  & \forall &\, j \in N, \label{GMSSC:allbutone:LP:execute} \\
 				&& \sum\limits_{j \in N} \sum\limits_{s=t - p_j +1}^{t} x_{js} &\leq 1  & \forall &\, t \in [T], \label{GMSSC:allbutone:LP:nooverlap}\\
 				&& \sum\limits_{s=1}^{t+p_b} x_{bs} - \sum\limits_{s=1}^{t} ( x_{is} + x_{js}) &\leq 0 &\forall &\, b \in B, \ \forall i,j \in \mathcal{P}(b), \ \forall \, t \in [T -p_b], \label{GMSSC:allbutone:LP:OR} \\
 				&& \sum\limits_{s=1}^{t+p_b} x_{bs} - \sum\limits_{s=1}^{t} x_{is} &\leq 0 & \forall &\, b \in B: \, \mathcal{P}(b) = \{i\}, \ \forall \, t \in [T -p_b], \label{GMSSC:allbutone:LP:AND} \\
 				&& x_{jt} 	&\geq 0 & \forall &\, j \in N, \ \forall \, t \in [T] \label{GMSSC:allbutone:LP:relax}.
\end{align} 
\end{subequations}
Constraints~(\ref{GMSSC:allbutone:LP:execute}) and~(\ref{GMSSC:allbutone:LP:nooverlap}) again ensure that each job is processed and no jobs overlap, respectively.
Note that only jobs with non-zero processing time contribute to~(\ref{GMSSC:allbutone:LP:nooverlap}).
If $d(b) = 1$, then~(\ref{GMSSC:allbutone:LP:AND}) dominates~(\ref{GMSSC:allbutone:LP:OR}).
It ensures that the unique predecessor of $b \in B$ is completed before $b$ starts.
Note that this is a classical AND-precedence constraint which will not affect the approximation factor.

If $d(b) \geq 2$, then~(\ref{GMSSC:allbutone:LP:OR}) models the above observation.
Suppose at most $d(b) - 2$ predecessors of $b$ complete before $t$.
Then there are $i,j \in \mathcal{P}(b)$ such that $\sum_{s=1}^t (x_{is} + x_{js}) = 0 \geq \sum_{s=1}^{t} x_{bs}$, so $b$ cannot complete by time $t$.

Note that we can solve LP~(\ref{GMSSC:allbutone:LP}) in polynomial time.
Similar to the algorithms of Section~\ref{sec:MSSC}, we first solve LP~(\ref{GMSSC:allbutone:LP}) and let $\overline{x}$ be an optimal fractional solution.
We then draw $\beta$ randomly from $(0,1]$ with density function $f(\beta) = 2\beta$, and schedule the jobs in $A$ and $B$ in order of non-decreasing $\frac{\beta}{2}$-points and $\beta$-points, respectively.
Again, we break ties consistently with precedence constraints.
(Choosing $\frac{\beta}{2}$ for the jobs in $A$ ensures that at most one of the predecessors of a job $b \in B$ is scheduled after $b$ in the constructed schedule.)
This algorithm is called Algorithm~3.

\begin{lemma}\label{lem:GMSSC:allbutone:approximation}
Algorithm 3 is a randomized 4-approximation for all-but-one MSSC.
\end{lemma}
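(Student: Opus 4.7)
The plan is to mimic the structure of the proof of Lemma~\ref{lem:ANDprecsbipartiteunitp} with two changes: use $\frac{\beta}{2}$-points for jobs in $A$ (instead of $\frac{\beta}{\Delta}$-points), and exploit the pairwise OR-constraints~(\ref{GMSSC:allbutone:LP:OR}) for feasibility. First, polynomial running time is immediate since LP~(\ref{GMSSC:allbutone:LP}) has polynomial size and sorting by $\alpha$-points is linear.

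For feasibility, fix $0<\beta\le 1$. Consider $b\in B$ with $d(b)\ge 2$ and any pair $i,j\in\mathcal{P}(b)$. Constraint~(\ref{GMSSC:allbutone:LP:OR}) at $t=t_b^\beta-p_b$ gives
\[
\beta \;\le\; \sum_{s=1}^{t_b^\beta}\overline{x}_{bs} \;\le\; \sum_{s=1}^{t_b^\beta-p_b}(\overline{x}_{is}+\overline{x}_{js}),
\]
so at least one of the two partial sums is $\ge\beta/2$, meaning $t_i^{\beta/2}\le t_b^\beta-p_b$ or $t_j^{\beta/2}\le t_b^\beta-p_b$. Hence among all predecessors in $\mathcal{P}(b)$, at most one has its $\frac{\beta}{2}$-point strictly larger than $t_b^\beta-p_b$; by the tie-breaking rule, at most one predecessor of $b$ is scheduled after $b$, so at least $d(b)-1=\kappa(b)$ of its predecessors complete before $b$ starts. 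For $b$ with $d(b)=1$, the unique predecessor is handled by~(\ref{GMSSC:allbutone:LP:AND}) exactly as an AND-constraint in Lemma~\ref{lem:ANDprecsbipartiteunitp}. This establishes that the $\prec$-schedule is feasible.

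For the approximation ratio, fix $b\in B$ and $\beta$. For any $i\in N$ with $i\preceq b$ set $\eta_i^b(\beta)=\sum_{s=1}^{t_b^\beta}\overline{x}_{is}$. If $i\in A$ with $i\prec b$ then $t_i^{\beta/2}\le t_b^\beta$, hence $\eta_i^b(\beta)\ge\beta/2$; if $i\in B$ with $i\preceq b$ (including $i=b$) then $\eta_i^b(\beta)\ge\beta\ge\beta/2$. Since the fractional solution has no idle time, $\sum_{i\in N}\eta_i^b(\beta)\,p_i=t_b^\beta$, so
\[
\tfrac{\beta}{2}\,C_b(\beta)=\sum_{i\preceq b}\tfrac{\beta}{2}\,p_i\;\le\;\sum_{i\preceq b}\eta_i^b(\beta)\,p_i\;\le\;t_b^\beta.
\]
Taking expectation with density $f(\beta)=2\beta$ and using the same integral estimate~(\ref{ineq:integral:t}) as before,
\[
\mathbb{E}[C_b(\beta)]\;\le\;\int_0^1 2\beta\cdot\frac{2\,t_b^\beta}{\beta}\,d\beta\;=\;4\int_0^1 t_b^\beta\,d\beta\;\le\;4\,\overline{C}_b.
\]
Summing $w_b\,\mathbb{E}[C_b(\beta)]$ over $b\in B$ and using that $\sum_b w_b\overline{C}_b$ lower-bounds the optimum yields the desired $4$-approximation.

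The only genuinely new ingredient (and the main obstacle) is the feasibility argument: one must realize that the pairwise constraint~(\ref{GMSSC:allbutone:LP:OR}) is strong enough to force at least one of any two predecessors to have its $\beta/2$-point at most $t_b^\beta-p_b$, which is exactly what allows $\frac{\beta}{2}$-points (rather than $\frac{\beta}{d(b)-1}$-points) to suffice. Everything else is a direct adaptation of the $\alpha$-point analysis from Lemma~\ref{lem:ANDprecsbipartiteunitp}, and derandomization can be carried out as in Lemma~\ref{lem:ANDprecsbipartite:derandomize}.
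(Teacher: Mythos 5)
Your proof is correct and follows essentially the same approach as the paper: solve LP~(\ref{GMSSC:allbutone:LP}), use $\frac{\beta}{2}$-points for jobs in $A$ and $\beta$-points for jobs in $B$, argue from the pairwise constraints~(\ref{GMSSC:allbutone:LP:OR}) that at most one predecessor of each $b$ can come after $b$, and then apply the standard no-idle-time and $\int_0^1 2\beta\cdot\frac{2}{\beta}\,d\beta = 4$ calculation. The only cosmetic difference is that you argue feasibility directly (for any pair, at least one has its $\beta/2$-point early enough) whereas the paper phrases it as a contradiction; these are logically identical.
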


\begin{proof}
The proof is fairly similar to the proof of Lemma~\ref{lem:ANDprecsbipartiteunitp}.
Let $\overline{x}$ be an optimal fractional solution to LP~(\ref{GMSSC:allbutone:LP}) and let $\overline{C}_j := \sum_t t \cdot \overline{x}_{jt}$ be the fractional completion time of job $j$.
Let $b \in B$ with $d(b) \geq 2$ and $0 < \beta \leq 1$ and $\alpha = \frac{\beta}{2}$.
Then at least $d(b) - 1$ predecessors of $b$ have their $\alpha$-point before $t^\beta_b$.
Suppose not, and let $i,j \in \mathcal{P}(b)$ such that $t^\alpha_i, t^\alpha_j > t^\beta_b$. Then
\begin{equation}
\sum_{s=1}^{t^\beta_b - p_b} (\overline{x}_{is} + \overline{x}_{js} ) \leq \sum_{s=1}^{t^\beta_b} \overline{x}_{is} + \sum_{s=1}^{t^\beta_b} \overline{x}_{js} < \alpha + \alpha = \beta \leq \sum_{s=1}^{t^\beta_b} \overline{x}_{bs},
\end{equation}
contradicts~(\ref{GMSSC:allbutone:LP:OR}).
So the schedule returned by Algorithm 3 is feasible.

Similar to~(\ref{ineq:integral:t}) in the proof of Lemma~\ref{lem:ANDprecsbipartiteunitp}, we observe that $\int_0^1 t^\beta_b d\beta \leq \overline{C}_b$.
Let $C_j(\beta)$ be the completion time of $j$ in the resulting schedule for a realization of $\beta$, and let $\prec$ be the order of the jobs in this schedule.
Observe that $C_b(\beta) = \sum_{i \preceq b} p_i \leq  \frac{t^\beta_b}{\alpha} = \frac{2}{\beta} t^\beta_b$, as in~(\ref{ineq:completion:t}).
If we draw $\beta$ randomly from $(0,1]$ with density function $f(\beta) = 2\beta$, the expected completion time of $b \in B$ is
\begin{equation}
\mathbb{E}[C_b(\beta)] = \int_0^1 f(\beta) \frac{2}{\beta} t^\beta_b d\beta = 4 \int_0^1 t^\beta_b d\beta \leq 4 \overline{C}_b.
\end{equation}
Since only jobs in $B$ contribute to the objective function this proves the claim.
\end{proof}

One can derandomize Algorithm 3 similar to Lemma~\ref{lem:ANDprecsbipartite:derandomize}, which proves Theorem~\ref{thm:4approximation:allbutoneMSSC}.
Note that Algorithm 3 also works if jobs in $B$ have unit processing time.
It can be generalized to release dates and arbitrary processing times, if we use an interval-indexed formulation similar to LP~(\ref{ANDprecsMSSC:generalprocessing:LP}).
If we choose $\varepsilon' = \frac{\varepsilon}{4}$ and solve the corresponding interval-indexed formulation instead of LP~(\ref{GMSSC:allbutone:LP}), then Algorithm 3 is a $(4 +\varepsilon)$-approximation for any $\varepsilon > 0$.
Again, AND-precedence constraints do not affect the approximation factor, similar to Lemma~\ref{lem:ANDprecsbipartiteunitp} and Lemma~\ref{lem:ANDprecsMSSC:generalprocessingtime}.
The following lemma shows that the analysis of Algorithm 3 is tight.

\begin{lemma}\label{lem:GMSSC:integralitygap}
The integrality gap of LP~\emph{(\ref{GMSSC:allbutone:LP})} is equal to 4.
\end{lemma}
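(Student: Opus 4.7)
The plan is to establish both inequalities. The upper bound is immediate from Lemma~\ref{lem:GMSSC:allbutone:approximation}: Algorithm~3 rounds any feasible fractional solution $\overline{x}$ of LP~(\ref{GMSSC:allbutone:LP}) to a feasible integer schedule of expected cost at most $4\sum_{b \in B} w_b \overline{C}_b$, so the integer optimum is bounded by four times the LP optimum, yielding an integrality gap of at most 4.

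For the matching lower bound I would exhibit a one-parameter family of instances whose gap tends to 4. Fix an even $k$ and take $A = \{a_1, \dots, a_k\}$ with $p_a = 1$, $w_a = 0$, together with a single $b \in B$ satisfying $p_b = 0$, $w_b = 1$, and $\mathcal{P}(b) = A$. Then $d(b) = k$ and $\kappa(b) = k-1$, so any feasible schedule must first complete at least $k-1$ jobs of $A$; this forces $C_b \geq k-1$ and hence the integer optimum equals $k-1$. On the fractional side, I would use the symmetric solution
\[
\overline{x}_{a_i,t} = \frac{1}{k} \quad (i \in [k],\; t \in [k]), \qquad \overline{x}_{b,t} = \frac{2}{k} \quad (t \in [k/2]),
\]
with all other variables set to zero. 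Verifying (\ref{GMSSC:allbutone:LP:execute})--(\ref{GMSSC:allbutone:LP:AND}) is routine; the only interesting check is (\ref{GMSSC:allbutone:LP:OR}), whose right-hand side equals $2t/k$ for every pair $i,j \in \mathcal{P}(b)$, matching $\sum_{s \leq t} \overline{x}_{b,s}$ with equality for $t \leq k/2$ and strictly exceeding it for $t > k/2$. A direct computation gives $\overline{C}_b = (k+2)/4$, so the integrality gap of this instance equals $4(k-1)/(k+2)$, which tends to $4$ as $k \to \infty$.

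There is no real obstacle beyond the feasibility verification; conceptually, the construction exploits the fact that the pairwise inequality (\ref{GMSSC:allbutone:LP:OR}) lets the LP complete $b$ using only a $2/k$-mass of its $k$ predecessors, whereas any integer schedule needs $k-1$ of them finished first. Combining the two bounds shows that the integrality gap equals 4.
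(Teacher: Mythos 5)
Your proof is correct. For the upper bound you correctly invoke Lemma~\ref{lem:GMSSC:allbutone:approximation}, which the paper also relies on implicitly (its proof of Lemma~\ref{lem:GMSSC:integralitygap} only exhibits the matching lower bound). For the lower bound you use a genuinely different, and in fact simpler, family of instances: a single job $b \in B$ with $\mathcal{P}(b) = A$, $|A| = k$, and $\kappa(b) = k-1$, whereas the paper takes $n$ jobs in $B$ with $\mathcal{P}(b_i) = A \setminus \{a_i\}$ for each $i$. Both constructions rest on the same mechanism (the pairwise constraint~(\ref{GMSSC:allbutone:LP:OR}) allows the LP to ``finish'' $b$ with only a $2/k$-fraction of predecessor mass at each time step, producing $\overline{C}_b \approx k/4$, whereas the integer schedule must complete $k-1$ predecessors and so pays $\approx k$), and your feasibility check of the symmetric fractional solution is accurate. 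Your single-hyperedge variant gives the gap $4(k-1)/(k+2)$ directly, avoiding the $n^2$-scaled objective values the paper's $n$-copy instance produces; the paper's version has the cosmetic advantage of matching the hard instance shown in Figure~\ref{fig:GMSSC:integralitygap}, but mathematically your route is cleaner. One small point of rigor that applies to both your write-up and the paper: the gap $4(k-1)/(k+2)$ is strictly less than $4$ for every finite $k$, so what is actually shown is that the integrality gap is at least $4 - \varepsilon$ for every $\varepsilon > 0$, i.e.\ the supremum over instances equals $4$; this is the intended reading of the lemma and your statement ``tends to $4$'' captures it correctly.
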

\begin{proof}
Let $n \in \mathbb{N}$ be even, and let $A = \{a_1,\dots,a_n\}$ and $B = \{b_1,\dots,b_n\}$ with $\mathcal{P}(b_i) = A \setminus \{a_i\}$ and $\kappa(b_i) = n-2$ for all $i \in [n]$, see Figure~\ref{fig:GMSSC:integralitygap}.
Further $p_a = w_b = 1$ and $w_a = p_b = 0$ for all $a \in A$ and $b \in B$.
Note that this is an instance of all-but-one MSSC.

An optimal solution would schedule the jobs in $A$ in any arbitrary order, and then process each job in $B$ as early as possible.
W.l.og., we assume that the jobs in $A$ are scheduled such that job $a_i \in A$ completes at time $i$.
So jobs $b_{n-1},b_n \in B$ both can complete at time $n-2$, and all other jobs in $B$ complete at time $n-1$.
The objective value of this schedule is equal to $2 (n-2) + (n-2)(n-1) = (n-2) (n+1)$.

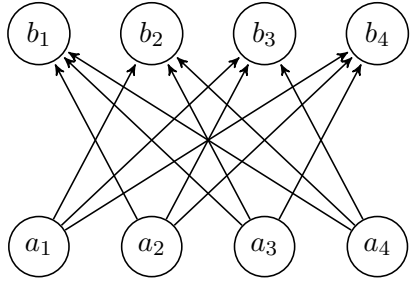
\begin{figure}
\centering
\begin{tikzpicture}[->,>=stealth',shorten >=1pt,auto,node distance=1.5cm,semithick]

		\node[circle, draw, minimum size=8mm] (a1) {$a_1$};
		\node[circle, draw, minimum size=8mm] (a2) [right of=a1] {$a_2$};
		\node[circle, draw, minimum size=8mm] (a3) [right of=a2] {$a_3$};
		\node[circle, draw, minimum size=8mm] (a4) [right of=a3] {$a_4$};
		
		\node[circle, draw, minimum size=8mm] (b1) [above=2cm of a1] {$b_1$};
		\node[circle, draw, minimum size=8mm] (b2) [right of=b1] {$b_2$};
		\node[circle, draw, minimum size=8mm] (b3) [right of=b2] {$b_3$};
		\node[circle, draw, minimum size=8mm] (b4) [right of=b3] {$b_4$};
		
		\draw (a1) -- (b2);
		\draw (a1) -- (b3);
		\draw (a1) -- (b4);
		
		\draw (a2) -- (b1);
		\draw (a2) -- (b3);
		\draw (a2) -- (b4);
				
		\draw (a3) -- (b1);
		\draw (a3) -- (b2);
		\draw (a3) -- (b4);
		
		\draw (a4) -- (b1);
		\draw (a4) -- (b2);
		\draw (a4) -- (b3);
			
\end{tikzpicture}
\caption{{\small Structure of the instances (here $n=4$) for which the integrality gap of LP~(\ref{GMSSC:allbutone:LP}) approaches 4.}}
\label{fig:GMSSC:integralitygap}
\end{figure}
Now consider the following fractional solution where $x_{at} = \frac{1}{n}$ for all $a \in A$ and $1 \leq t \leq n = T$.
For $b \in B$, set $x_{bt} = \frac{2}{n}$ for $1 \leq t \leq \frac{n}{2}$ and $x_{bt} = 0$ else.
One can easily verify that this solution is feasible for LP~(\ref{GMSSC:allbutone:LP}).
Its objective value is equal to
\begin{equation}
\sum_{b \in B} \sum_{t = 1}^n t \cdot x_{bt} = n \sum_{t=1}^{\frac{n}{2}} t \cdot \frac{2}{n} = \frac{n}{2}\left(\frac{n}{2} + 1 \right) = \frac{1}{4} n(n+2).
\end{equation}
So the integrality gap of LP~(\ref{GMSSC:allbutone:LP}) approaches~$4$ as $n$ goes to infinity.
\end{proof}

%==================================================================================================
\section{Generalizing the Greedy Algorithm for Pipelined Set Cover}\label{sec:bipartite}

In this section, we generalize the greedy algorithm of Munagala et al.~\cite{Munagala2005} for pipelined set cover to a 4-approximation for $1 \, | \, or\text{-}prec=bipartite \, | \, \sum w_j C_j$.
Recall that pipelined set cover is a special case of $1 \, | \, or\text{-}prec=bipartite \, | \, \sum w_j C_j$, where $p_b = 0$ for all $b \in B$.
Prior to this, no constant-factor approximation was known for the general problem of scheduling with bipartite OR-precedence constraints.

For MSSC, the greedy algorithm always chooses the vertex with most uncovered hyperedges next~\cite{FeigeLovaszTetali2002,FeigeLovaszTetali2004}.
For pipelined set cover~\cite{Munagala2005}, it chooses the vertex that maximizes the ratio $\frac{w(B_v)}{c_v}$, where $B_v \subseteq \{e \in \mathcal{E} \, | \, v \in e\}$ is the set of uncovered hyperedges incident to $v$ and $w(B_v) := \sum_{e \in B_v} w_e$.
We can view $1 \, | \, or\text{-}prec=bipartite \, | \, \sum w_j C_j$ as if hyperedges, i.e.~jobs in $B$, are associated with positive processing times.
In this case, we successively schedule a $\rho$-maximizing feasible starting set, where $\rho(S) := \frac{w(S)}{p(S)}$.
We call $S \subseteq N$ a \textit{feasible starting set}, if we can schedule the jobs in $S$ without violating any OR-precedence constraints.
The set of feasible starting sets is denoted by $\mathcal{S}$.
That is, $S \in \mathcal{S}$, if $b \in B \cap S$ implies $\mathcal{P}(b) \cap S \not= \emptyset$.

Formally, the algorithm works as follows.
Let $\mathcal{A} = A \cup \{b \in B \, | \, \mathcal{P}(b) = \emptyset \}$ be the set of \emph{available jobs}, i.e.~jobs that can start in a feasible schedule. 
First, we compute a feasible starting set $S \in \mathcal{S}$ such that $\rho(S) \geq \rho(S')$ for all $S' \in \mathcal{S}$.
Then, we append the jobs in $S$ in any order at the end of the current schedule, and remove the jobs in $S$ from the instance.
Finally, we update the set of available jobs $\mathcal{A}$, and repeat.
Note that this algorithm generalizes the greedy algorithms of~\cite{FeigeLovaszTetali2002,FeigeLovaszTetali2004,Munagala2005}.
The technique is similar to results in expanding search~\cite{AlpernLidbetter2013,FokkinkLidbetterVegh2019}, and its analysis uses the same neat histogram argument as in~\cite{FeigeLovaszTetali2004}.

\begin{lemma}\label{lem:rhomaximizingset}
A $\rho$-maximizing feasible starting set can be computed in polynomial time.
\end{lemma}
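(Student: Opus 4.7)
The plan is to reduce the constrained density-maximization over $\mathcal{S}$ to polynomially many \emph{unconstrained} maximum-ratio subset problems, exploiting the bipartite OR-structure ($E_{\wedge} = \emptyset$, $E_{\vee} \subseteq A \times B$).

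First I would establish the structural claim that there is a $\rho$-maximizing $S^* \in \mathcal{S}$ with $\abs{S^* \cap A} \leq 1$. Given any optimal $S^*$, assign to each $b \in S^* \cap B$ with $\mathcal{P}(b) \neq \emptyset$ a fixed predecessor $\pi(b) \in \mathcal{P}(b) \cap S^*$ (which exists by feasibility). Setting $T_a := \{a\} \cup \{b \in S^* \cap B : \pi(b) = a\}$ for $a \in S^* \cap A$ and $S_0 := \{b \in S^* \cap B : \mathcal{P}(b) = \emptyset\}$ partitions $S^*$ into pairwise disjoint feasible starting sets, each with at most one $A$-element. The elementary inequality $\bigl(\sum_i y_i\bigr) / \bigl(\sum_i x_i\bigr) \leq \max_i y_i/x_i$ (for positive denominators) applied to the $(w,p)$-pairs of these pieces then forces one of them to have density at least $\rho(S^*)$. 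Degenerate cases where a piece has zero total processing time but positive total weight correspond to sets of infinite density, which are trivially identified in preprocessing and scheduled first.

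Second, for each $a \in A$ in the current residual instance I would compute
$\max_{S_B \subseteq B_a} (w_a + w(S_B))/(p_a + p(S_B))$
where $B_a := \{b \in B : a \in \mathcal{P}(b)\}$, and analogously the maximum of $w(S_B)/p(S_B)$ over $S_B \subseteq \{b \in B : \mathcal{P}(b) = \emptyset\}$. These are classical unconstrained fractional subset-maximization problems solvable in polynomial time via parametric search: for any threshold $\lambda$, the Lagrangian subproblem $\max_{S_B}\bigl((w_a - \lambda p_a) + \sum_{b \in S_B}(w_b - \lambda p_b)\bigr)$ is trivially solved by including each $b$ with $w_b > \lambda p_b$, and the optimal $\lambda$ must coincide with some ratio $w_b/p_b$, so binary search over the $\mathcal{O}(\abs{B_a})$ candidate thresholds suffices (alternatively, Dinkelbach's algorithm converges in polynomially many iterations). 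Returning the best of the at most $\abs{A}+1$ computed densities yields a $\rho$-maximizing element of~$\mathcal{S}$.

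The main obstacle is the structural reduction in the first step, in particular verifying that the decomposition $S^* = S_0 \,\dot\cup\, \bigcup_{a} T_a$ genuinely produces feasible starting sets (this uses both $E_{\wedge} = \emptyset$ and $E_{\vee} \subseteq A \times B$) and correctly handling the corner cases with zero processing times. Once the reduction to $\abs{S \cap A} \leq 1$ is settled, the algorithmic part is a well-studied fractional optimization that runs in polynomial time.
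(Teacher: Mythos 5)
Your proof is correct, and it takes a genuinely different route from the paper. The paper's argument works ``from the candidates outward'': it greedily constructs, for each available job $j$, a specific candidate set $S_j$ (adding successors of $j$ in decreasing $\rho$-order as long as each addition increases the cumulative ratio), and then proves via an inclusion-minimality and exchange argument that every inclusion-minimal $\rho$-maximizer $S$ must equal one of the $S_j$. Your argument works ``from the optimum inward'': you first partition an arbitrary $\rho$-maximizer $S^*$ into pieces $T_a = \{a\} \cup \{b : \pi(b) = a\}$ and $S_0$, each of which is a feasible starting set with at most one $A$-element, then invoke the mediant inequality to conclude that some piece already attains density $\geq \rho(S^*)$; this reduces the search to $\abs{A}+1$ \emph{unconstrained} fractional subset problems, each solvable by parametric search or Dinkelbach. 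The two approaches end up computing the same family of candidates (a vertex of $A$ together with the highest-ratio prefix of its out-neighbors), since the Lagrangian threshold rule ``include $b$ iff $w_b/p_b > \lambda^*$'' coincides with the paper's greedy stopping rule, but the justifications differ substantially: your decomposition cleanly separates the structural claim ($\abs{S^* \cap A} \leq 1$ w.l.o.g.) from the algorithmic part, whereas the paper proves optimality of its greedy sets directly. One should note, as you do, that the bipartite structure ($E_{\wedge} = \emptyset$, $E_{\vee} \subseteq A \times B$) is essential for the partition into feasible starting sets, and that the zero-processing-time corner cases need the preprocessing step you mention. Both proofs are valid; yours is arguably more modular and would generalize more transparently to, say, fixed $\abs{S \cap A}$ bounds, while the paper's exchange argument gives a slightly more self-contained combinatorial certificate.
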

\begin{proof}
For technical reasons, we define $\rho(\emptyset) := -1 < \rho(S)$ for any non-trivial $S \subseteq N$.
We define a set $S_j \in \mathcal{S}$ for every job $j \in \mathcal{A}$, and show that the set $\argmax_{j \in \mathcal{A}} \rho(S_j)$ is a $\rho$-maximizing set.
For $j \in B \cap \mathcal{A}$, let $S_j = \{j\}$.
For $j \in A$, let $S_j = \{j\} \cup B_j$ where $B_j = \{b^j_1,\dots,b^j_l\} \subseteq \{b \in B \, | \, j \in \mathcal{P}(b)\}$ is ordered such that $\rho(\{b^j_1\}) \geq \rho(\{b^j_2\}) \geq \cdots \geq \rho(\{b^j_l\}) \geq \rho(\{b\})$ for all $b \in \{B \setminus B_j \, | \, j \in \mathcal{P}(b)\}$, and $\rho(\{b^j_k\}) > \rho(\{j,b^j_1,\dots,b^j_{k-1}\})$ for all $k \in [l]$.
%Note that $S_j$ is the inclusion-minimal $\rho$-maximizing feasible starting set that contains $j \in \mathcal{A}$ by construction.
We can construct the sets $S_j$ for $j \in \mathcal{A}$ in polynomial time by successively adding a successor with highest $\rho$-value to $S_j$, if this increases the overall $\rho$-ratio.
Let $j^* \in \mathcal{A}$ such that $\rho(S_{j^*}) \geq \rho(S_j)$ for all $j \in \mathcal{A}$.

Let $S \in \mathcal{S}$ be an inclusion-minimal $\rho$-maximizing feasible starting, i.e.~$\rho(S) \geq \rho(S')$ for all $S' \in \mathcal{S}$.
We will show that $\rho(S) = \rho(S_{j^*})$.
Note that $\rho(\{b\}) \leq \rho(S)$ for every $b \notin S$ with $\mathcal{P}(b) \cap S \not= \emptyset$ or $b \in \mathcal{A}$, because otherwise we could add $b$ to $S$ and obtain $\rho(S \cup \{b\}) > \rho(S)$.
Since $S$ is a feasible starting set, $\mathcal{A} \cap S \not= \emptyset$.
We claim that $S_j \subseteq S$ for all $j \in \mathcal{A} \cap S$.
This is trivially true for any $j \in B \cap \mathcal{A} \cap S$.
So let $j \in A \cap S$ and suppose that $S_j \not\subseteq S$.
Let $k$ be minimal such that $b^j_k \in B_j \setminus S$. 
Then by the above observations and the construction of $S_j$, it holds $\rho(S) \geq \rho(\{b^j_k\}) > \rho(\{j,b^j_1,\dots,b^j_{k-1}\})$ and $\rho(S) \geq \rho(\{b^j_k\}) \geq \rho(\{b^j_i\}) \geq \rho(\{b\})$ for all $i \geq k$ and $b \in B \setminus B_j$ with $j \in \mathcal{P}(b)$.
But then $S \setminus (\{j\} \cup \{b \in B \, | \, j \in \mathcal{P}(b)\})$ is a feasible starting set with strictly higher $\rho$-value.
Hence, $S_j \subseteq S$ for all $j \in \mathcal{A} \cap S$.

Suppose that there is $j \in A \cap S$, and let $S= S_j \dot{\cup} \overline{S}$ where $\overline{S} \not= \emptyset$.
Note that $\rho(S) \geq \rho(S_j) \geq \rho(\{b\})$ for all $b \in \{B \setminus B_j \, | \, j \in \mathcal{P}(b)\}$ implies that $S \cap (B \setminus B_j) = \emptyset$, since $S$ was chosen to be an inclusion-minimal $\rho$-maximizing set.
Hence, $\overline{S} \in \mathcal{S}$.
Further, $\rho(S) \geq \rho(S_j)$ is equivalent to
\begin{align}\label{rho:equivalence}
\frac{w(S)}{p(S)} = \frac{w(S_j) + w(\overline{S})}{p(S_j) + p(\overline{S})} \geq \frac{w(S_j)}{p(S_j)} \ \Longleftrightarrow \  w(\overline{S}) p(S_j) \geq w(S_j) p(\overline{S}) \ \Longleftrightarrow \ \rho(\overline{S}) \geq \rho(S_j).
\end{align}
Similarly, $\rho(S) \geq \rho(\overline{S})$ implies $\rho(S_j) \geq \rho(\overline{S})$, so $\rho(S_j) = \rho(\overline{S})$.
But then a similar transformation as in~(\ref{rho:equivalence}) yields $\rho(S) = \rho(S_j)$, which contradicts the inclusion-minimality of $S$.
If $\mathcal{A} \cap S \subseteq B$, then $S \subseteq B$, so inclusion-minimality of $S$ implies that $S = \{j^*\}$.
\end{proof}

With Lemma~\ref{lem:rhomaximizingset} in place, it is obvious that the greedy algorithm runs in polynomial time, and that we can schedule the $\rho$-maximizing set optimally. It remains to show that this gives the desired approximation factor.
\begin{theorem}\label{thm:noAND:4approximation}
The greedy algorithm is a $4$-approximation for $1 \, | \, or\text{-}prec=bipartite\, | \, \sum w_j C_j$.
\end{theorem}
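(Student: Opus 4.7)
The plan is to mimic the histogram-style analysis of Feige, Lov\'asz, and Tetali~\cite{FeigeLovaszTetali2004} for min-sum set cover, adapted for arbitrary processing times and the bipartite OR-precedence structure. Denote the sequence of $\rho$-maximizing starting sets chosen by greedy as $S_1, \dots, S_m$, and set $p_k := p(S_k)$, $w_k := w(S_k)$, $\rho_k := w_k/p_k$, $J_k := \bigcup_{i \leq k} S_i$, $T_k := p(J_k)$, and $W_k := \sum_{i \geq k} w_i$. I would first establish the monotonicity $\rho_1 \geq \cdots \geq \rho_m$: right before greedy picks $S_k$, the union $S_k \cup S_{k+1}$ is itself a feasible starting set in the residual (any OR-predecessor requirement of a $B$-job in $S_{k+1}$ is met either inside $S_k \cup S_{k+1}$ or inside the already-removed $J_{k-1}$), so greedy's maximality gives $\rho(S_k) \geq \rho(S_k \cup S_{k+1})$, which algebraically yields $\rho_k \geq \rho_{k+1}$. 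Since every $j \in S_k$ finishes by $T_k$, the greedy cost is bounded by
\[
\mathrm{GRE} \;\leq\; \sum_{k=1}^m w_k T_k \;=\; \sum_{k=1}^m p_k W_k
\]
(the last equality by swapping the order of summation).

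Next I would derive a lower bound on OPT via greedy's maximality. Put $F(\tau) := \{j : C_j^{\mathrm{OPT}} \leq \tau\}$ and $h_O(\tau) := w(N) - w(F(\tau))$, so that $\mathrm{OPT} = \int_0^\infty h_O(\tau)\,d\tau$. The key structural inequality
\[
h_O(\tau) \;\geq\; W_k - \rho_k \tau \qquad \forall k, \ \forall \tau \geq 0
\]
holds because $F(\tau) \setminus J_{k-1}$ is a feasible starting set in the residual instance obtained by discarding $J_{k-1}$: any OR-predecessor requirement of a $B$-job in $F(\tau) \setminus J_{k-1}$ is satisfied either within $F(\tau) \setminus J_{k-1}$ itself or within $J_{k-1}$. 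Greedy's $\rho$-maximality in this residual then gives $w(F(\tau) \setminus J_{k-1}) \leq \rho_k \cdot p(F(\tau) \setminus J_{k-1}) \leq \rho_k \tau$, hence $w(F(\tau)) \leq w(J_{k-1}) + \rho_k \tau$, which rearranges to the claim.

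The last step combines the two estimates via a FLT-style histogram packing to obtain $\int_0^\infty h_O(\tau)\,d\tau \geq \tfrac{1}{4}\sum_k p_k W_k$, from which $\mathrm{GRE} \leq 4\cdot\mathrm{OPT}$ follows. Concretely, one packs disjointly, for each $k$, a rectangle of width $p_k/2$ and height $W_k/2$ under the graph of $h_O$. The main obstacle will be the precise placement: putting the $k$-th rectangle at $[T_{k-1}/2, T_k/2]$ does not automatically succeed against the pointwise envelope $\max_{k'}(W_{k'} - \rho_{k'}\tau)^+$, because near the right end of that interval the envelope can dip below $W_k/2$ when later suffix weights drop rapidly. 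The remedy is to exploit the monotonicity of the $\rho_k$ to select, slice by slice, a suitable witness index $k'$ certifying the envelope bound on that slice, so that the rectangles still fit.
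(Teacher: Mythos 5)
Your setup (the greedy cost bound $\mathrm{GRE} \le \sum_k p_k W_k$, the structural inequality $h_O(\tau) \ge W_k - \rho_k \tau$ derived from greedy's $\rho$-maximality, and the monotonicity $\rho_1 \ge \cdots \ge \rho_m$) is all correct and is essentially the same machinery the paper uses. The gap is in the packing step, and it is not a cosmetic one: your rectangle placement is wrong in a way that your proposed remedy cannot repair.

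Concretely, take $m=2$, $\rho_1 = 2$, $\rho_2 = 1$, $p_1 = 1$, $p_2 = 3$ (so $w_1 = 2$, $w_2 = 3$, $W_1 = 5$, $W_2 = 3$, $T_2 = 4$). Your second rectangle is supposed to sit on $\tau \in [T_1/2, T_2/2] = [1/2, 2]$ with height $W_2/2 = 3/2$, but the envelope is
\[
g(2) = \max\bigl((5 - 2\cdot 2)^+,\ (3 - 2)^+\bigr) = 1 < \tfrac{3}{2}.
\]
The envelope $g(\tau) = \max_{k'}(W_{k'} - \rho_{k'}\tau)^+$ is already the best you can extract from the structural inequality; ``selecting a different witness $k'$'' is just re-expressing which term achieves the max, and cannot raise $g$ above itself. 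Nor can you salvage the idea by stacking the same rectangles in the weight direction instead: the total height $\tfrac{1}{2}\sum_k W_k = \tfrac{1}{2}\sum_\ell \ell\, w_\ell$ can exceed the available weight budget $W_1 = \sum_\ell w_\ell$.

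The fix is to reshape the rectangles, which is what the paper does. For each stage $k$, use width $\tfrac{W_k}{2\rho_k}$ in the time direction and height $\tfrac{w_k}{2}$ in the weight direction (same area $\tfrac{p_k W_k}{4}$). Stack these disjointly in the weight dimension — rectangle $k$ occupies weight slab $\bigl[\tfrac{1}{2}W_{k+1}, \tfrac{1}{2}W_k\bigr]$ — with every rectangle starting at $\tau = 0$. Each one fits because the witness $k$ gives $g\bigl(\tfrac{W_k}{2\rho_k}\bigr) \ge W_k - \rho_k\cdot\tfrac{W_k}{2\rho_k} = \tfrac{W_k}{2}$, and $g$ is non-increasing, so $g(\tau) \ge \tfrac{W_k}{2}$ on the whole base $\tau \in [0, \tfrac{W_k}{2\rho_k}]$. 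The slabs $\bigl[\tfrac{1}{2}W_{k+1}, \tfrac{1}{2}W_k\bigr]$ are disjoint and lie below the curve, giving $\int_0^\infty h_O \ge \tfrac{1}{4}\sum_k p_k W_k$. With this replacement for your last step, the rest of your argument goes through.
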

\begin{proof}
The proof is fairly similar to the histogram proof of Feige et al.~\cite{FeigeLovaszTetali2004}.
Suppose that the greedy algorithm terminates after $m$ stages, where we added a $\rho$-maximizing set to the end of the current schedule in each stage.
For $i \in [m]$, let $S_i$ be the set of jobs that are scheduled in stage $i$, and let $R_i := \bigcup_{l=i}^m S_l$ be the set of remaining jobs at the beginning of stage $i$.
That is, $\rho(S_i) \geq \rho(S)$ for all feasible starting sets $S$ of the remaining instance on~$R_i$.
We denote the completion time of job $j$ in the greedy schedule and an arbitrary, but fixed, optimal schedule by $C^G_j$ and $C^*_j$, respectively. 
Further, set $\phi_i := \frac{w(R_i)}{w(S_i)} p(S_i) = \frac{w(R_i)}{\rho(S_i)}$ and note that
\begin{align}\label{histogramGreedy}
\sum_{j \in N} w_j C^G_j \leq \sum_{l=1}^m \sum_{j \in S_l} w_j \sum_{i = 1}^{l} p(S_i) = \sum_{i = 1}^m p(S_i) \sum_{l = i}^m w(S_l) = \sum_{i = 1}^m p(S_i) w(R_i) = \sum_{i = 1}^m w(S_i) \phi_i.
\end{align}

We will construct two histograms that represent the objective values of the optimal solution and of the greedy solution, respectively.
We show that if we shrink the second one by a factor of 4, it fits into the first one.
This then yields the claim.
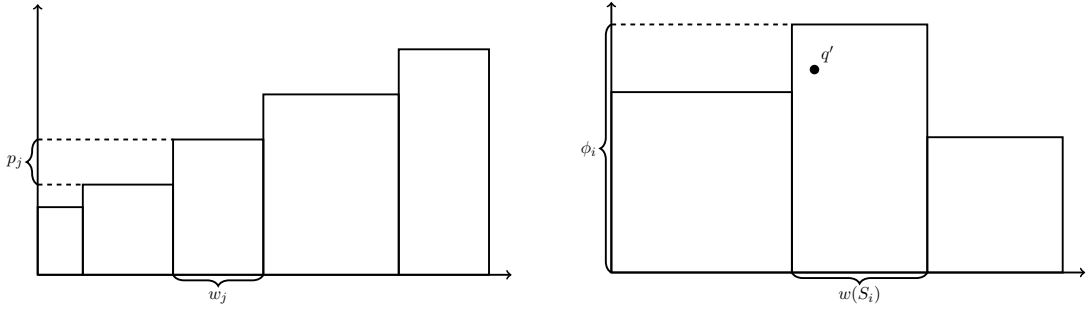
\begin{figure}
\centering
\scalebox{0.6}{
\begin{tikzpicture}[very thick,decoration={brace,amplitude=7pt,mirror}]
	% histogram 1
	\draw (0,0) rectangle (1,1.5);
	\draw (1,0) rectangle (3,2);
	\draw (3,0) rectangle (5,3);
	\draw (5,0) rectangle (8,4);
	\draw (8,0) rectangle (10,5);
	
	% legend
	\draw[dashed] (0,3) -- (3,3);
	\draw[dashed] (0,2) -- (1,2);
	\draw[decorate] (0,3) -- (0,2) node[midway,xshift=-0.5cm] {$p_j$};
	\draw[decorate] (3,0) -- (5,0) node[midway,yshift=-0.5cm] {$w_j$};
	
		% coordinate axes
	\draw[->] (0,0) -- (10.5,0);
	\draw[->] (0,0) -- (0,6);
\end{tikzpicture}}
\hspace*{0.5cm}
\scalebox{0.6}{
\begin{tikzpicture}[very thick,decoration={brace,amplitude=7pt,mirror}]
	% histogram 2
	\draw (0,0) rectangle (4,4);
	\draw (4,0) rectangle (7,5.5);
	\draw (7,0) rectangle (10,3);
	
	% legend
	\draw[dashed] (0,5.5) -- (4,5.5);
	\draw[decorate] (0,5.5) -- (0,0) node[midway,xshift=-0.5cm] {$\phi_i$};
	\draw[decorate] (4,0) -- (7,0) node[midway,yshift=-0.5cm] {$w(S_i)$};
	
	\fill (4.5,4.5) circle (3pt) node[above right] {$q'$};
	
	% coordinate axes
	\draw[->] (0,0) -- (10.5,0);
	\draw[->] (0,0) -- (0,6);
\end{tikzpicture}}
\caption{{\small Histogram corresponding to an optimal solution (left) and to the greedy solution (right).}}
\label{fig:histogram}
\end{figure}
The first histogram contains a column for each job $j \in N$ with width $w_j$ and height $C^*_j$ in the order the jobs appear in the optimal solution.
Note that the height of the columns is non-decreasing, and that the total area of the histogram is equal to $\sum_j w_j C^*_j$, see Figure~\ref{fig:histogram} (left).
The second histogram consists of $m$ columns, one for each stage, in the order the stages appear in the greedy schedule.
The width of column $i \in [m]$ is $w(S_i)$, and its height is equal to $\phi_i$, see Figure~\ref{fig:histogram} (right).
The total area of the second histogram is equal to $\sum_i w(S_i) \phi_i \geq \sum_j w_j C^G_j$, see~(\ref{histogramGreedy}).
\begin{figure}
\centering
\scalebox{0.6}{
\begin{tikzpicture}[very thick,decoration={brace,amplitude=7pt,mirror}]	
	% histogram 1
	\draw[gray] (0,0) rectangle (1,1.5);
	\draw[gray] (1,0) rectangle (3,2);;
	\draw[gray] (3,0) rectangle (5,3);
	\draw[gray] (5,0) rectangle (8,4);
	\draw[gray] (8,0) rectangle (10,5);
	
	% shrunk histogram 2
	\draw (5,0) rectangle (7,2);
	\draw (7,0) rectangle (8.5,2.75);
	\draw (8.5,0) rectangle (10,1.5);
	
	\fill (7.25,2.25) circle (3pt) node[above right] {$q$};
	
	% coordinate axes
	\draw[->] (0,0) -- (10.5,0);
	\draw[->] (0,0) -- (0,6);
\end{tikzpicture}}
\caption{{\small Shrunk histogram (black) aligned right inside the first histogram (gray).}}
\label{fig:histogram:shrink}
\end{figure}
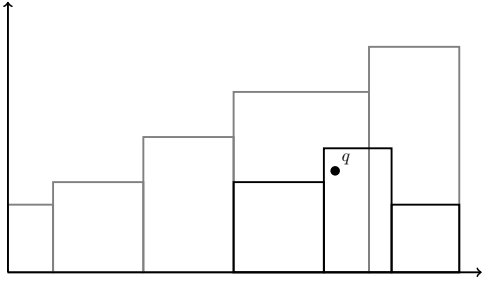

Now, we shrink the second histogram by a factor 2 in height, and a factor 2 in width, and align it to the right, see Figure~\ref{fig:histogram:shrink}.
So the total area of the shrunk histogram is equal to $\frac{1}{4} \sum_i w(S_i) \phi_i$.
We claim that each point of the shrunk histogram is contained in the first histogram.
This then implies that the area of the shrunk histogram is less or equal than $\sum_j w_j C^*_j$, which yields
\begin{equation}
\sum_j w_j C^G_j \leq \sum_i w(S_i) \phi_i \leq 4 \sum_j w_j C^*_j.
\end{equation}
To prove the claim, let $q'$ be a point in the second histogram, and suppose it is contained in column~$i$.
Let $q$ be the corresponding point in the shrunk histogram. 
So the height of $q$ is at most $\frac{1}{2} \phi_i= \frac{w(R_i)}{2 \rho(S_i)}$, and its distance to the right is at most $\frac{1}{2} \sum_{l = i}^m w(S_l) = \frac{1}{2} w(R_i)$.

Recall that $S_i$ satisfies $\rho(S_i) \geq \rho(S)$ for all feasible starting sets $S$ of the remaining instance on~$R_i$.
That is, no feasible schedule (even not the optimal one) can cover more than an amount of $\lambda \rho(S_i)$ of weight of the jobs in $R_i$ during $\lambda$ time units, even if it processes only jobs in $R_i$.
If the schedule processes also jobs that are not in $R_i$ during that time, it can process even less weight of $R_i$, since all processing times are non-negative.
Hence, within $\frac{1}{2} \phi_i$ time units, the optimal solution cannot cover more than an amount of $\frac{1}{2} \phi_i \rho(S_i)$ of weight of jobs in $R_i$.
So at time $\frac{1}{2} \phi_i$ there is at least an amount of
\begin{equation}
w(R_i) - \frac{1}{2} \phi_i \rho(S_i) = w(R_i) - \frac{1}{2} \frac{w(R_i)}{\rho(S_i)} \rho(S_i) = \frac{1}{2} w(R_i)
\end{equation}
of weight of jobs in $R_i$ unscheduled.
Thus the point $(\frac{\phi_i}{2},\frac{w(R_i)}{2})$ is contained in the first histogram. Note that $q$ is to the lower right of $(\frac{\phi_i}{2},\frac{w(R_i)}{2})$, so $q$ is contained in the first histogram. This proves the statement.
\end{proof}

Note that the histogram of~\cite{FeigeLovaszTetali2004} in Figure~\ref{fig:histogram} (left) is just the flipped two-dimensional Gantt chart of~\cite{EastmanEvenIsaacs1964}.
Feige, Lovász and Tetali~\cite{FeigeLovaszTetali2004} observed that the analysis of the greedy algorithm is tight for MSVC, which is a special case of $1 \, | \, or\text{-}prec=bipartite\, | \, \sum w_j C_j$.
Although successively scheduling a $\rho$-maximizing set is similar to Sidney's decomposition~\cite{Sidney1975}, we do not get optimality of the greedy algorithm, because the $\rho$-maximizing set may not be unique, see~\cite{FeigeLovaszTetali2004}.

%==============================================================================================================
\section{Integrality Gaps for Other LP Relaxations}\label{sec:otherformulations}

In this section, we analyze other standard linear programming relaxations that have been useful for various scheduling problems, and show that they fail in the presence of OR-precedence constraints.
More precisely, we show that the natural LPs in linear ordering variables (Section~\ref{sec:linearordering}) and completion time variables (Section~\ref{sec:completiontime}) both exhibit integrality gaps that are linear in the number of jobs, even on instances where $E_{\wedge} = \emptyset$ and $\Delta = 2$. 

\subsection{Linear Ordering Formulation}\label{sec:linearordering}

The following relaxation for single-machine scheduling problems was proposed by Potts~\cite{Potts1980}.
It is based on linear ordering variables $\delta_{ij}$, which indicate whether job $i$ precedes job $j$ ($\delta_{ij} = 1$) or not ($\delta_{ij} = 0$).
This LP has played an important role in better understanding Sidney's decomposition~\cite{Sidney1975,CorreaSchulz2005}, and in uncovering the connection between AND-scheduling and vertex cover~\cite{ChudakHochbaum1999,CorreaSchulz2005,AmbuhlMastrolilli2009,AmbuhlMastrolilliMutsanasSvensson2011}.
A nice feature of this formulation is that we can model OR-precedence constraints in a very intuitive way with constraints $\sum_{a \in \mathcal{P}(b)} \delta_{ab} \geq 1$ for all $b \in B$.
Together with the total ordering constraints ($\delta_{ij} + \delta_{ji} = 1$), standard transitivity constraints ($\delta_{ij} + \delta_{jk} + \delta_{ki} \geq 1$) and AND-precedence constraints ($\delta_{ij} = 1$) we thus obtain a polynomial size integer program for $1 \, | \, ao\text{-}prec=A \dot{\vee} B \, | \, \sum w_j C_j$.
The LP-relaxation is obtained by relaxing the integrality constraints to $\delta_{ij} \geq 0$.
\begin{subequations} \label{IPformulation:linearordering:LP}
\begin{align}
& \min & \sum\limits_{j \in N} \sum\limits_{i \in N} w_j p_i \delta_{ij}  \\
& \text{s.t.} &  \delta_{ij} + \delta_{ji} &= 1  &\forall & \, i,j \in N: \, i \not= j  \label{IPformulation:linearordering:linearorder}\\
 				&& \delta_{ij} + \delta_{jk} + \delta_{ki} &\geq 1  &\forall & \, i,j,k \in N \label{IPformulation:linearordering:transitivity} \\
 				&& \sum\limits_{a \in \mathcal{P}(b)} \delta_{ab} &\geq 1  &\forall & \, b \in B: \, \mathcal{P}(b) \not= \emptyset, \label{IPformulation:linearordering:OR} \\
 				&& \delta_{ij} 					&= 1 & \forall &\, (i,j) \in E_{\wedge}, \label{IPformulation:linearordering:AND} \\
 				&& \delta_{ii} 					&= 1 & \forall &\, i \in N, \label{IPformulation:linearordering:consistency} \\
 				&&  \delta_{ij}					&\geq 0 & \forall &\, i,j \in N.
\end{align}
\end{subequations}
We set $\delta_{ii}  = 1$ in~(\ref{IPformulation:linearordering:consistency}) so the completion time of job $j$ is $C_j = \sum_i p_i \delta_{ij}$.
Note that every feasible single-machine schedule without idle time corresponds to a feasible integer solution of LP~(\ref{IPformulation:linearordering:LP}), and vice versa.
If $E_{\vee} = \emptyset$, i.e., $\mathcal{P}(b) = \emptyset$ for all $b \in B$, then this relaxation has an integrality gap of 2 (lower and upper bound of 2 due to~\cite{ChekuriMotwani1999} and~\cite{Schulz1996}, respectively).
However, in the presence of OR-precedence constraints, the gap of LP~(\ref{IPformulation:linearordering:LP}) grows linearly in the number of jobs, even if $E_{\wedge} = \emptyset$ and $\Delta = 2$.

\begin{lemma}\label{lem:IPformulation:linearordering:gap}
There is a family of instances such that the integrality gap of LP \emph{(\ref{IPformulation:linearordering:LP})} is $\Omega(n)$.
\end{lemma}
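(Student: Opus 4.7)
The plan is to exhibit an explicit family of instances with $E_{\wedge} = \emptyset$, $\Delta = 2$, and a feasible fractional solution to LP~(\ref{IPformulation:linearordering:LP}) whose value is a factor $\Omega(n)$ below the integer optimum.

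For the construction I would take $A = \{a_1,\dots,a_n\}$ and $B = \{b_1,\dots,b_n\}$ with cyclic OR-predecessors $\mathcal{P}(b_i) = \{a_i, a_{i+1 \bmod n}\}$, together with $p_a = 1$, $w_a = 0$ for $a \in A$ and $p_b = 0$, $w_b = 1$ for $b \in B$. Here clearly $E_{\wedge} = \emptyset$ and $\Delta = 2$. For the integer lower bound, I would observe that in any integer schedule the $k$-th job of $A$ to be processed triggers at most two of the $b_i$'s becoming available, so after $k$ units of processing $A$ at most $k+1$ jobs in $B$ can have completed. From this a counting/exchange argument gives $\sum_{b} C_b = \Theta(n^2)$; concretely, the interleaved schedule $a_1, b_n, a_2, b_1, a_3, b_2, \dots, a_n, b_{n-1}$ is feasible and attains the value $n(n+1)/2$, which matches the lower bound.

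For the fractional upper bound I would define
\begin{equation*}
\overline{\delta}_{a_p b_r} = \tfrac{1}{2} \text{ if } p \in \{r, r+1\}, \qquad \overline{\delta}_{a_p b_r} = 0 \text{ otherwise,}
\end{equation*}
and complete this to a full solution by setting $\overline{\delta}_{ij} = 1/2$ for every other unordered pair $\{i,j\}$ of distinct jobs (with $\overline{\delta}_{ji} = 1 - \overline{\delta}_{ij}$ determined by~(\ref{IPformulation:linearordering:linearorder}) and $\overline{\delta}_{ii} = 1$). Then (\ref{IPformulation:linearordering:OR}) becomes $\overline{\delta}_{a_i b_i} + \overline{\delta}_{a_{i+1} b_i} = 1$ and the objective evaluates to $\sum_{b_r} w_{b_r} \sum_{a_j} p_{a_j} \overline{\delta}_{a_j b_r} = n \cdot 1 = n$, since only $a_r$ and $a_{r+1}$ contribute to $C_{b_r}$.

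The only step that requires care is verifying the transitivity inequalities~(\ref{IPformulation:linearordering:transitivity}). I would go through the cases by job type of the triple $(i,j,k)$: for the pure-$A$ and pure-$B$ triples, all three variables equal $1/2$, summing to $3/2$; for the mixed triples $(A,A,B)$, $(A,B,B)$, $(B,A,A)$, etc., a short case distinction on whether the $A$-index belongs to $\{r,r+1\}$ for the relevant $b_r$ shows the sum is always at least $1$, and the tight case is when exactly one $A$-job is a predecessor of the involved $b_r$. This case analysis is the only real obstacle: it is routine but has to be carried out exhaustively to rule out a bad triple. Once feasibility is established, the gap is $\tfrac{n(n+1)/2}{n} = \Omega(n)$, proving the lemma.
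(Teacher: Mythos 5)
Your construction is genuinely different from the paper's, but the underlying idea is the same, and it does work once a few details are repaired. The paper uses $m=n/3$ \emph{disjoint} copies of a single triangle $\{i_q,k_q,j_q\}$ with $\mathcal{P}(j_q)=\{i_q,k_q\}$, and sets $\delta_{i_{q_1}j_{q_2}}=\delta_{k_{q_1}j_{q_2}}=0$ for $q_1\neq q_2$ while keeping all other variables at $1/2$. You instead put the OR-predecessors on a cycle, $\mathcal{P}(b_r)=\{a_r,a_{r+1}\}$, and zero out $\delta_{a_p b_r}$ for $p\notin\{r,r+1\}$. In both cases the crux is identical: the OR-constraint $\delta_{a b}+\delta_{a' b}\geq 1$ lets both terms be $1/2$, so the fractional $C_b=\sum_i p_i\delta_{ib}$ collapses to a constant, while the integer optimum stays $\Theta(n^2)$. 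The disjoint-copy construction makes both the integer lower bound and the transitivity check essentially trivial (cross-copy triples are unconstrained); your cyclic version pays for its economy by requiring a real coverage bound and a real, if routine, transitivity check.

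Two concrete issues in your integer-side argument. First, the claim that "the $k$-th job of $A$ triggers at most two $b_i$'s" becoming available does \emph{not} imply that after $k$ units at most $k+1$ jobs in $B$ are completed. Since non-adjacent vertices of the cycle have disjoint edge-neighborhoods, $k$ processed $a$-jobs can make up to $\min(2k,n)$ jobs of $B$ available; the correct lower bound on the integer optimum is therefore $\sum_{k\geq 0}\max(n-2k,0)=\Theta(n^2/4)$, not $\Theta(n^2/2)$. Second, the interleaved schedule you exhibit with value $n(n+1)/2$ does \emph{not} match the optimum: scheduling the odd-indexed $a$-jobs first, $a_1,a_3,\dots,a_{n-1}$, makes two new $b$'s available per time unit and attains $\approx n^2/4$. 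Neither error kills the proof — the integer optimum is still $\Omega(n^2)$ and the fractional value is $n$, so the gap is $\Omega(n)$ — but the tightness claim is wrong and should be dropped.

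Finally, you flag the transitivity check as "routine but has to be carried out," and indeed that is where the construction could have failed, so it cannot be waved away. It does go through, and here is the clean reason: under your assignment, $\overline{\delta}_{ij}=0$ only when $i\in A$, $j\in B$, and $i\notin\mathcal{P}(j)$; in every other case $\overline{\delta}_{ij}\geq 1/2$. In a triple $\overline{\delta}_{ij}+\overline{\delta}_{jk}+\overline{\delta}_{ki}$, at most one term can be of the zero type (the zero type forces its first index into $A$ and its second into $B$, and these roles cannot chain around the cyclic triple), so the sum is at least $0+1/2+1/2=1$. With that argument supplied and the integer-side bound corrected to the $\min(2k,n)$ coverage estimate, your proof is complete.
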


\begin{proof}
Let $n$ be a multiple of 3.
Consider an instance that consists of $m = \frac{n}{3}$ copies of the following directed graph on three jobs $\{i,k,j\}$.
The processing times and weights are equal to $p_{i} = p_k = 1$, $p_j = 0$ and $w_i = w_k = 0$, $w_j = 1$.
The jobs $i,k$ do not have predecessors, and $\mathcal{P}(j) = \{i,k\}$.
%Figure~\ref{fig:IPformulation:linearordering:gap} depicts the graph that is copied.
We indicate the job sets of copy $q \in [m]$ by $N_q = \{i_q,k_q,j_q\}$, and set $N = N_1 \cup \dots \cup N_m$.
That is, $A = \{i_1,\dots,i_m,k_1,\dots,k_m\}$ and $B = \{j_1,\dots,j_m\}$.
Note that $\abs{N} = 3m = n$, $E_{\wedge} = \emptyset$ and $\Delta = 2$, see Figure~\ref{fig:IPformulation:linearordering:gap} for an example.

\begin{figure}
\centering
\begin{tikzpicture}[->,>=stealth',shorten >=1pt,auto,node distance=1.5cm,semithick]

	% first copy
	\node [circle, draw, minimum size=8mm] (i1) {$i_1$};
	\node [circle, draw, minimum size=8mm] (j1) [above right of=i1] {$j_1$};
	\node [circle, draw, minimum size=8mm] (k1) [below right of=j1] {$k_1$};
	
	\path (i1) edge (j1);
	\path (k1) edge (j1);
	
	% second copy
	\node [circle, draw, minimum size=8mm] (i2) [right of=k1]{$i_2$};
	\node [circle, draw, minimum size=8mm] (j2) [above right of=i2] {$j_2$};
	\node [circle, draw, minimum size=8mm] (k2) [below right of=j2] {$k_2$};
	
	\path (i2) edge (j2);
	\path (k2) edge (j2);
	
	%dots
	\node (dots1) [right of=k2] {$\dots$};
	\node (dots2) [above=0.7cm of dots1] {$\dots$};
	
	% third copy
	\node [circle, draw, minimum size=8mm] (i3) [right of=dots1] {$i_m$};
	\node [circle, draw, minimum size=8mm] (j3) [above right of=i3] {$j_m$};
	\node [circle, draw, minimum size=8mm] (k3) [below right of=j3] {$k_m$};
	
	\path (i3) edge (j3);
	\path (k3) edge (j3);

\end{tikzpicture}
\caption{{\small Instance for which LP~(\ref{IPformulation:linearordering:LP}) exhibits an integrality gap that is linear in the number of jobs. The processing times and weights are $p_{j_q} = 0$, $p_{i_q} = p_{k_q} = 1$ and $w_{j_q} = 1$, $w_{i_q} = w_{k_q} = 0$ for all $q \in [m]$.}}
\label{fig:IPformulation:linearordering:gap}
\end{figure}
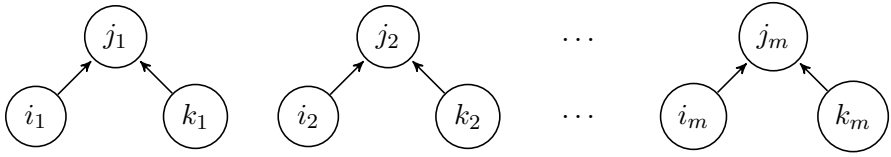

Any feasible schedule has to schedule $i_q$ or $k_q$ before $j_q$ for all $q \in [m]$.
Further, any optimal schedule would always schedule $j_q$ immediately after $i_q$ or $k_q$, whichever completes first.
Since $p_{i_q} = p_{k_q} = 1$ for all $q \in [m]$, it does not matter whether $i_q$ or $k_q$ precedes $j_q$, and the order of the copies does not matter either.
So the optimal integer solution has an objective value of $\sum_{q = 1}^m q = \frac{m(m+1)}{2} \in  \Omega(n^2)$.

Now consider the following fractional solution. For all $q \in [m]$ set $\delta_{i_q j_q} =  \delta_{k_q j_q} = \delta_{i_q k_q} = \frac{1}{2}$.
For distinct $q_1, q_2 \in [m]$, set $\delta_{i_{q_1} i_{q_2}} = \delta_{k_{q_1} k_{q_2}} = \delta_{j_{q_1} j_{q_2}} = \delta_{i_{q_1} k_{q_2}} = \frac{1}{2}$ and $\delta_{i_{q_1} j_{q_2}} = \delta_{k_{q_1} j_{q_2}} = 0$.
Further let $\delta_{ij} = 1 - \delta_{ji}$ and $\delta_{ii} = 1$ for all distinct $i,j \in N$.
Note that $\delta$ is a feasible solution of LP~(\ref{IPformulation:linearordering:LP}).

As for the objective value, recall that $p_{j_q} = w_{i_q} = w_{k_q} = 0$ and $p_{i_q} = p_{k_q}  = w_{j_q} = 1$ for all $q \in [m]$.
So the only variables that contribute to the objective function with a non-zero coefficient are $\delta_{i_{q_1} j_{q_2}}$ and $\delta_{k_{q_1} j_{q_2}}$ for all $q_1,q_2 \in [m]$.
The coefficient of these variables is equal to~$1$.
Further $\delta_{i_{q_1} j_{q_2}} = \delta_{k_{q_1} j_{q_2}} = 0$ for distinct $q_1,q_2 \in [m]$.
Hence the objective value of $\delta$ is equal to $\sum_{q_1 = 1}^m \sum_{q_2=1}^m \left(\delta_{i_{q_1} j_{q_2}} + \delta_{k_{q_1} j_{q_2}} \right) = \sum_{q=1}^m \left( \frac{1}{2} + \frac{1}{2} \right) = m \in \mathcal{O}(n)$.
Since $\delta$ is feasible, the optimal objective value of LP~(\ref{IPformulation:linearordering:LP}) is $\mathcal{O}(n)$.
Thus the integrality gap of LP~(\ref{IPformulation:linearordering:LP}) is $\Omega(n)$.
\end{proof}

Note that the instance in the proof of Lemma~\ref{lem:IPformulation:linearordering:gap} satisfies $\abs{\mathcal{P}(b)} \leq 2$ for all $b \in B$.
For this special case, we exhibit facet-defining inequalities in the remainder of this section.
If $\abs{\mathcal{P}(b)} \leq 2$ for all $b \in B$, then LP~(\ref{IPformulation:linearordering:LP}) can be written as

\begin{subequations} \label{IPformulation:linearordering:LP2}
\begin{align}
& \min & \sum\limits_{j \in N} \sum\limits_{i \in N} w_j p_i \delta_{ij}   \\
& \text{s.t.} & \delta_{ij} + \delta_{ji} &= 1 & \forall &\, i,j \in N: \, i \not= j ,\label{IPformulation:linearordering:LP2:linearorder}\\
 				&& \delta_{ij} + \delta_{jk} + \delta_{ki} &\geq 1  &\forall &\, i,j,k \in N, \label{IPformulation:linearordering:LP2:transitivity}\\
 				&& \delta_{ab} + \delta_{a'b} &\geq 1 & \forall&\, b \in B: \, \mathcal{P}(b)=\{a,a'\}, \label{IPformulation:linearordering:LP2:OR} \\
 				&& \delta_{ij} 					&= 1 & \forall &\, (i,j) \in E_{\wedge}, \text{ or } \mathcal{P}(j) = \{i\}, \label{IPformulation:linearordering:LP2:AND}\\
 				&& \delta_{ii}				&= 1 & \forall &\, i \in N, \label{IPformulation:linearordering:LP2:consistency}\\
 				&& \delta_{ij}					&\geq 0 & \forall &\, i,j \in N.
\end{align}
\end{subequations}

Note that constraints~(\ref{IPformulation:linearordering:LP2:linearorder}), (\ref{IPformulation:linearordering:LP2:transitivity}), and~(\ref{IPformulation:linearordering:LP2:consistency}) coincide with the corresponding constraints in LP~(\ref{IPformulation:linearordering:LP}).
Constraints~(\ref{IPformulation:linearordering:LP2:OR}) model the OR-precedence constraints for jobs $b \in B$ with $\abs{\mathcal{P}(b)} = 2$. For $b \in B$ with $\abs{\mathcal{P}(b)}= 1$, the corresponding OR-precedence constraint is equivalent to an AND-constraint and is included in~(\ref{IPformulation:linearordering:LP2:AND}).

\begin{theorem}\label{thm:ORfacetdefining}
For all $b \in B$ and $\mathcal{P}(b) = \{a,a'\}$, the constraints
\begin{equation}\label{IPformulation:linearordering:ineq:OR}
\delta_{aa'} + \delta_{a'b} \geq 1
\end{equation}
are valid for the integer hull of LP~\emph{(\ref{IPformulation:linearordering:LP2})}. Moreover, if they are tight, then they are either facet-defining or equality holds for all feasible integer solutions of LP~\emph{(\ref{IPformulation:linearordering:LP2})}.
\end{theorem}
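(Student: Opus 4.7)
My plan splits into validity and facetness.

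For \textbf{validity}, let $\delta$ be any feasible integer point and suppose towards a contradiction that $\delta_{aa'}=\delta_{a'b}=0$. Then the linear-order equalities (\ref{IPformulation:linearordering:LP2:linearorder}) give $\delta_{a'a}=\delta_{ba'}=1$. Applying (\ref{IPformulation:linearordering:LP2:transitivity}) to the triple $(a,a',b)$ in the form $\delta_{aa'}+\delta_{a'b}+\delta_{ba}\ge 1$ forces $\delta_{ba}=1$ and hence $\delta_{ab}=0$. But then the OR-precedence constraint (\ref{IPformulation:linearordering:LP2:OR}) yields $\delta_{ab}+\delta_{a'b}=0$, a contradiction. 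So $\delta_{aa'}+\delta_{a'b}\ge 1$ on every integer feasible point.

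For \textbf{facetness}, let $P_I$ denote the integer hull of LP~(\ref{IPformulation:linearordering:LP2}) and $F:=\{\delta\in P_I:\delta_{aa'}+\delta_{a'b}=1\}$. If every integer feasible point already satisfies $\delta_{aa'}+\delta_{a'b}=1$, we are in the equality alternative of the statement. Otherwise some integer feasible $\bar\delta$ has $\bar\delta_{aa'}=\bar\delta_{a'b}=1$, so $F$ is a proper face of $P_I$; the task is to show $\dim F=\dim P_I-1$. I would do this by the standard dual route: assume an equation $\sum_{(i,j)}\alpha_{ij}\delta_{ij}=\beta$ holds on every integer point of $F$, and prove it is a linear combination of the implicit equalities of LP~(\ref{IPformulation:linearordering:LP2}) plus a scalar multiple of $\delta_{aa'}+\delta_{a'b}=1$.

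The mechanism is pairwise swapping. The integer points of $F$ split into Case~1 schedules with $\delta_{aa'}=0,\delta_{a'b}=1$ (equivalently $a'\prec a$ and $a'\prec b$) and Case~2 schedules with $\delta_{aa'}=1,\delta_{a'b}=0$ (equivalently $a\prec b\prec a'$, the second inequality being forced by (\ref{IPformulation:linearordering:LP2:OR})). For any pair of schedules $\pi,\sigma\in F$ that differ only by interchanging two adjacent jobs $i,j\notin\{a,a',b\}$ not separated by a forced arc, subtracting the two equations leaves only $\alpha_{ij}-\alpha_{ji}$ against the implicit equality $\delta_{ij}+\delta_{ji}=1$, pinning $\alpha_{ij}+\alpha_{ji}$ to a common constant. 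Analogous swaps between one of $a,a',b$ and a fourth job $k$, performed inside a fixed case, pin down the coefficients with exactly one endpoint in $\{a,a',b\}$. Finally, one carefully chosen Case 1/Case 2 pair that agrees outside $\{a,a',b\}$ relates $\alpha_{aa'},\alpha_{a'a},\alpha_{a'b},\alpha_{ba'},\alpha_{ab},\alpha_{ba}$ and leaves one unpinned degree of freedom, absorbed into the multiplier of $\delta_{aa'}+\delta_{a'b}$.

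The \textbf{main obstacle} is ensuring that enough such swap pairs actually survive in $F$ given the AND-arcs of $E_\wedge$ and the forced arcs from singleton OR-predecessors in (\ref{IPformulation:linearordering:LP2:AND}). Concretely, one must verify that for every variable $\delta_{ij}$ whose coefficient we wish to eliminate there exists a pair of integer schedules in $F$ differing only on that variable (together with its partner $\delta_{ji}$), and simultaneously that at least one Case 1 $\leftrightarrow$ Case 2 pair is available. If the forced precedences are so restrictive that no such pairs exist, the same forced arcs already entail $\delta_{aa'}+\delta_{a'b}=1$ throughout $P_I$, putting us in the equality branch. Making this dichotomy precise by describing exactly which combinations of forced arcs collapse the face is the technical heart of the argument, after which the two alternatives of the theorem follow.
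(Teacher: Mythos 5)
Your validity argument is correct and essentially the same as the paper's: both reduce $\delta_{aa'}=\delta_{a'b}=0$ via the total-order and transitivity constraints to $\delta_{ab}=\delta_{a'b}=0$, contradicting~(\ref{IPformulation:linearordering:LP2:OR}).

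For facetness, however, your plan has a genuine gap that you yourself flag but do not close. You propose the indirect (dual) route directly on the integer hull of LP~(\ref{IPformulation:linearordering:LP2}) and observe correctly that the pairwise-swap mechanism may fail to produce enough schedule pairs inside $F$ when AND-arcs or singleton-OR arcs from~(\ref{IPformulation:linearordering:LP2:AND}) forbid swaps, and that in such degenerate cases the constraint may in fact hold with equality everywhere. You then state that ``making this dichotomy precise'' is ``the technical heart of the argument'' --- precisely the part the theorem demands, and precisely the part your proposal leaves undone. As written, there is no characterization of which configurations of forced arcs eliminate a given swap pair, no verification that the remaining pairs still pin down every $\alpha_{ij}$, and no proof that when enough pairs are missing the face degenerates to the whole polytope rather than to something of intermediate dimension.

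The paper sidesteps this difficulty with a clean polyhedral reduction that your proposal misses. It defines $Q$ as the integer hull of the system with all constraints~(\ref{IPformulation:linearordering:LP2:AND}) dropped, so that every $b\in B$ with a predecessor has exactly two OR-predecessors and there are no forced arcs at all. It then proves~(\ref{IPformulation:linearordering:ineq:OR}) is facet-defining for $Q$ by explicitly constructing $n(n-1)/2$ affinely independent tight schedules via induction on $n$ (Lemma~\ref{lem:ORfacetdefining}). Finally, since each dropped constraint $\delta_{kl}=1$ is a supporting hyperplane $H_{kl}$ at $Q$, intersecting a facet $F$ of $Q$ with $H_{kl}$ yields either $\emptyset$, all of $Q\cap H_{kl}$, or a facet of $Q\cap H_{kl}$; iterating this across all dropped constraints produces exactly the facet-or-equality dichotomy asserted in the theorem, with no case analysis of the forced arcs required. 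If you want to salvage your dual approach, you would need to first restrict attention to the relaxed polytope $Q$ as the paper does (where every swap is available and the indirect method goes through smoothly), and only then import the supporting-hyperplane argument to handle~(\ref{IPformulation:linearordering:LP2:AND}); working directly on the fully constrained polytope is where your argument stalls.
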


It can easily be verified that the fractional solution in the proof of Lemma~\ref{lem:IPformulation:linearordering:gap} satisfies~(\ref{IPformulation:linearordering:ineq:OR}), and is feasible for LP~(\ref{IPformulation:linearordering:LP2}).
Hence the integrality gaps of LPs~(\ref{IPformulation:linearordering:LP}) and (\ref{IPformulation:linearordering:LP2}) remain linear even if we add constraints~(\ref{IPformulation:linearordering:ineq:OR}).
The proof of Theorem~\ref{thm:ORfacetdefining} is deferred to Appendix~\ref{appendix:linearordering}. 

Recall that the instance in the proof of Lemma \ref{lem:IPformulation:linearordering:gap} satisfies constraints~(\ref{IPformulation:linearordering:ineq:OR}), so the integrality gap remains linear, even with these additional constraints.
In GMSSC each job $b$ requires at least $\kappa(b) \in [\abs{\mathcal{P}(b)}]$ of its predecessors to be completed before it can start.
This can also be easily modeled with linear ordering variables by introducing a constraint $\sum_{a \in \mathcal{P}(b)} \delta_{ab} \geq \kappa(b)$.
However, note that the instance in the proof of Lemma~\ref{lem:IPformulation:linearordering:gap} is an instance of MSVC (which is a special case of MSSC and all-but-one MSSC).
So already for $\kappa(b) = 1$ or $\kappa(b) = \max\{\abs{\mathcal{P}(b)} -1,1\}$ and $\Delta = 2$ this formulation has an unbounded integrality gap.

%==============================================================================================================
\subsection{Completion Time Formulation}\label{sec:completiontime}

The LP relaxation examined in this section contains one variable $C_j$ for every job $j \in N$, which indicates the completion time of this job.
In the absence of precedence constraints, Wolsey~\cite{WolseyISMPTalk} and Queyranne~\cite{Queyranne1993} showed that the convex hull of all feasible completion time vectors can be fully described by the set of vectors $\{C \in \mathbb{R}^n \, | \, \sum_{j \in S} p_j C_j \geq f(S) \ \forall \, S \subseteq N\}$, where $f(S) := \frac{1}{2}\left(\sum_{j \in S} p_j \right)^2 + \frac{1}{2} \sum_{j \in S} p_j^2$ is a supermodular function.
One should note that, although there is an exponential number of constraints, one can separate them efficiently~\cite{Queyranne1993}.
In the presence of AND-precedence constraints, Schulz~\cite{Schulz1996} proposed the following 2-approximation algorithm.
The algorithm solves the corresponding linear program with additional constraints $C_j \geq C_i + p_j$ for $(i,j) \in E_{\wedge}$ and schedules the jobs in non-decreasing order of their LP-values.

For OR-precedence constraints, we use the concept of \emph{minimal chains}, see e.g.~\cite{Happach2019}, to generalize the parallel inequalities of~\cite{WolseyISMPTalk,Queyranne1993}.
More specifically, we present a class of inequalities that are valid for all feasible completion time vectors of an instance of $1 \, | \, ao\text{-}prec=A \dot{\vee} B \, | \, \sum w_j C_j$, and that, in the absence of precedence constraints, coincide with the parallel inequalities.
We add inequalities for AND-precedence constraints in the obvious way, $C_j \geq C_i + p_j$ for $(i,j) \in E_{\wedge}$, so we assume $E_{\wedge} = \emptyset$ for the moment.
Recall that $S \in \mathcal{S}$ is a feasible starting set if $j \in B \cap S$ implies $\mathcal{P}(j) \cap S \not= \emptyset$.
The length of a minimal chain of a job $k$ w.r.t.~a set $S \subseteq N$ is defined as

\begin{equation}\label{minchain}
mc(S,k) := \min\{ \sum_{j \in T} p_j \ | \ T \subseteq N: \exists \, U \subseteq S \cup T \text{ with } k \in U \in \mathcal{S}\}.
\end{equation}

Intuitively, the value $mc(S,k)$ is the minimal amount of time that we need to schedule job $k$ in a feasible way, if we can schedule the jobs in $S$ for free, i.e.~if we assume all jobs in $S$ have zero processing time.
We call $T \in \argmin(mc(S,k))$ a \textit{minimal chain of $k$ w.r.t.~$S$}.
If $k \notin S$ then $k \in T$ for all minimal chains $T$.
Let $2^N$ be the power set of $N$. For all $k \in N$, we define the set function
\begin{equation}\label{ineq:minchain:rhs}
f_k(S): 2^N \to \mathbb{R}_{\geq 0}, \quad f_k(S) := \frac{1}{2} \bigg(\sum_{j \in S} p_j + mc(S,k)\bigg)^2 + \frac{1}{2} \bigg(\sum_{j \in S} p_j^2 + mc(S,k)^2\bigg).
\end{equation}

Note that if $k \in S \in \mathcal{S}$, then $mc(S,k) = 0$, so $f_k(S) = f(S) = \frac{1}{2}\left(\sum_{j \in S} p_j \right)^2 + \frac{1}{2} \sum_{j \in S} p_j^2$.
In particular, (\ref{ineq:minchain:rhs}) generalizes the function $f: 2^N \to \mathbb{R}_{\geq 0}$ of~\cite{WolseyISMPTalk,Queyranne1993} to OR-precedence constraints. One can also show that $mc(\cdot,k)$ and $f_k(\cdot)$ are supermodular, for any $k$. The proof of the next theorem is deferred to Appendix~\ref{appendix:completiontime}.

\begin{theorem}\label{thm:singlemachinepolytope:validrelaxation}
For any $k \in N$ and $S \subseteq N$ the inequality 
\begin{equation}\label{ineq:minchain}
\sum\limits_{j \in S} p_j C_j + mc(S,k) \, C_k \geq f_k(S)
\end{equation}
is valid for all feasible completion time vectors. Moreover, if there is $T \in \argmin(mc(S,k))$ such that $S \cup T \in \mathcal{S}$ is a feasible starting set, then~\emph{(\ref{ineq:minchain})} is tight.
\end{theorem}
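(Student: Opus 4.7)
The plan is to mimic Queyranne's exchange argument for the parallel inequalities~\cite{WolseyISMPTalk,Queyranne1993}, treating $q_k := mc(S,k)$ as an ``effective'' processing time for~$k$. We may assume $k \notin S$, since otherwise $mc(S,k)=0$ and~(\ref{ineq:minchain}) reduces to the classical parallel inequality $\sum_{j\in S} p_j C_j \geq f(S)$. Fix a feasible schedule $C$, set $q_j := p_j$ for $j \in S$, and order $S \cup \{k\}$ as $\pi(1),\ldots,\pi(m)$ with $m := \abs{S}+1$ in non-decreasing completion time; let $i^*$ denote the index with $\pi(i^*)=k$. The target inequality then reads $\sum_{r=1}^m q_{\pi(r)} C_{\pi(r)} \geq f_k(S)$.

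The crux is the pointwise bound $C_{\pi(i)} \geq \sum_{r=1}^i q_{\pi(r)}$ for every $i \in [m]$. For $i < i^*$ all of $\pi(1),\ldots,\pi(i)$ lie in $S$ and are completed by $C_{\pi(i)}$, so this is the standard machine-capacity bound. For $i \geq i^*$, let $W := \{j \in N : C_j \leq C_{\pi(i)}\}$. Since $E_\wedge = \emptyset$ and the schedule is feasible, every $b \in B \cap W$ with $\mathcal{P}(b)\neq \emptyset$ has at least one OR-predecessor completed before $b$ and hence inside~$W$, so $W \in \mathcal{S}$. With $k \in W$ and $T := W\setminus S$, we have $k \in T$ and $W \subseteq S\cup T$ with $W \in \mathcal{S}$, so $T$ is admissible in the $mc(S,k)$ minimization and $\sum_{j \in W\setminus S} p_j \geq q_k$. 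Combining $C_{\pi(i)} \geq \sum_{j\in W} p_j$ with the inclusion $\{\pi(r) : r\leq i,\, r\neq i^*\}\subseteq W\cap S$ yields $C_{\pi(i)} \geq \sum_{r\leq i,\, r\neq i^*} p_{\pi(r)} + q_k = \sum_{r\leq i} q_{\pi(r)}$.

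Given the pointwise bound, validity follows from the standard rearrangement $\sum_r q_{\pi(r)}C_{\pi(r)} \geq \sum_r q_{\pi(r)}\sum_{s\leq r}q_{\pi(s)} = \tfrac{1}{2}\bigl(\sum_r q_{\pi(r)}\bigr)^2 + \tfrac{1}{2}\sum_r q_{\pi(r)}^2 = f_k(S)$. For tightness, given $T \in \argmin(mc(S,k))$ with $S\cup T \in \mathcal{S}$, one may assume $T \cap S = \emptyset$ (replace $T$ by $T\setminus S$, which remains in $\argmin(mc(S,k))$ and satisfies $S\cup T \in \mathcal{S}$); then the schedule that processes $T$ in a precedence-respecting order with $k$ scheduled last, followed by $S$ in any order, followed by the remaining jobs, is feasible, achieves $C_k = q_k$ and $C_{\pi(r)} = \sum_{s\leq r}q_{\pi(s)}$ for all $r$, so (\ref{ineq:minchain}) holds with equality.

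The principal obstacle is the case $i \geq i^*$ of the pointwise bound: one must recognise that the downward-closed set $W$ of jobs completed by any fixed time is itself in $\mathcal{S}$ (which relies on $E_\wedge = \emptyset$), and this is precisely what justifies using $W\setminus S$ as a minimal-chain witness and lets one absorb $q_k$ into the lower bound uniformly in~$i$, regardless of whether $\pi(i)$ is $k$ itself or lies much later in the order.
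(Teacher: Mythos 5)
Your validity argument is correct and, in fact, a genuinely different and cleaner route than the paper's. The paper proceeds by reducing to idle-free schedules and $S \notin \mathcal{S}$, partitioning $S$ into $S_1 \cup S_2 \cup S_3$ with $S_1$ an inclusion-maximal feasible starting set, and then arguing optimality of a specific block schedule via Smith's rule. Your approach instead establishes the uniform pointwise bound $C_{\pi(i)} \geq \sum_{r \le i} q_{\pi(r)}$ along the completion-time order and invokes the standard rearrangement identity, with the key observation that the down-set $W = \{j : C_j \le C_{\pi(i)}\}$ is itself a feasible starting set (using $E_\wedge = \emptyset$) and hence a valid minimal-chain witness, forcing $\sum_{j \in W \setminus S} p_j \ge mc(S,k)$. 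This directly generalizes Queyranne's proof with no case analysis and no appeal to Smith's rule, and it handles idle time and the case $S \in \mathcal{S}$ uniformly. It is a nice argument and I believe it is correct.

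Two flaws remain, however. First, a minor one: your reduction to $k \notin S$ is justified, but not for the reason you state. If $k \in S$ it does \emph{not} follow that $mc(S,k)=0$; for instance $k \in B$ with $\mathcal{P}(k) \cap S = \emptyset$ gives $k \in S$ yet $mc(S,k) > 0$. The correct justification (as in the paper) is that for $k \notin S'$ one has $mc(S',k) = p_k + mc(S' \cup \{k\}, k)$, so the left-hand sides of~(\ref{ineq:minchain}) for $S'$ and $S' \cup \{k\}$ coincide while $f_k(S') \geq f_k(S' \cup \{k\})$; hence the constraint with $k \in S$ is dominated by the one for $S \setminus \{k\}$ and may be dropped. (When $mc(S,k)=0$ and $k \in S$, the constraint does reduce to the parallel inequality as you say; it just is not the only subcase.)

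Second, and more seriously, your tightness construction is not always feasible. You schedule ``$T$ in a precedence-respecting order with $k$ last, then $S$, then the rest,'' but this implicitly assumes $T \in \mathcal{S}$, which need not hold. Concretely, take $k \in B$ with $\mathcal{P}(k) \cap S \neq \emptyset$ (all predecessors of $k$ in $S$ with positive processing time). Then $mc(S,k) = p_k$ and $T = \{k\}$ is the unique minimizer disjoint from $S$; also $S \cup T \in \mathcal{S}$ holds whenever $S \in \mathcal{S}$, so the theorem's hypothesis is met. But $T = \{k\} \notin \mathcal{S}$, so scheduling $k$ before its predecessors in $S$ is infeasible, and $C_k = q_k$ cannot be achieved. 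The fix is to realize that tightness does not require $k$ to be first in the $\pi$-ordering: it only requires $C_{\pi(r)} = \sum_{s \le r} q_{\pi(s)}$ along the completion-time order. Schedule a feasible prefix of $S$ that contains a predecessor of $k$ (in the extreme, all of $S$, which lies in $\mathcal{S}$ since $S \cup \{k\} \in \mathcal{S}$ and $k \in B$ cannot be an OR-predecessor), then $T \setminus S$ with $k$ last, then the remaining jobs of $S$, then the rest of $N$. This is precisely what the paper's $S_1 \to T \to S_2 \to N \setminus (S \cup T)$ construction does, and a direct check shows the cumulative-prefix condition holds termwise along $\pi$, giving equality in~(\ref{ineq:minchain}).
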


Theorem~\ref{thm:singlemachinepolytope:validrelaxation} suggests the following natural LP-relaxation for $1 \, | \, ao\text{-}prec=A \dot{\vee} B \, | \, \sum w_j C_j$:
\begin{subequations}\label{Completiontime:LP}
\begin{align}
& \min & \sum_{j \in N} w_j C_j & \\
& \text{s.t.} & \sum_{j \in S} p_j C_j + mc(S,k) \, C_k &\geq f_k(S) & \forall &\, k \in N, \ \forall \, S \subseteq N, \label{Completiontime:minchain}\\
				&& C_j - C_i &\geq p_j & \forall &\, (i,j) \in E_{\wedge} \label{Completiontime:AND}.
\end{align}
\end{subequations}

Note that it is not clear how to separate constraints~(\ref{Completiontime:minchain}) in polynomial time.
The gap of LP~(\ref{Completiontime:LP}) can grow linearly in the number of jobs, even for instances of $1 \, | \, ao\text{-}prec=A \dot{\vee} B \, | \, \sum w_j C_j$ with $E_{\wedge} = \emptyset$ and $\Delta = 2$.

\begin{lemma}\label{lem:generalminchain:gap}
There is a family of instances such that the gap between an optimal solution of LP~\emph{(\ref{Completiontime:LP})} and an optimal schedule is $\Omega(n)$.
\end{lemma}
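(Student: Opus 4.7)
The plan is to recycle the instance family from the proof of Lemma~\ref{lem:IPformulation:linearordering:gap}: take $m := n/3$ disjoint copies of the three-job gadget $(i_q, k_q, j_q)$ with $p_{i_q} = p_{k_q} = 1$, $p_{j_q} = 0$, $w_{i_q} = w_{k_q} = 0$, $w_{j_q} = 1$, and $\mathcal{P}(j_q) = \{i_q, k_q\}$. These instances satisfy $E_{\wedge} = \emptyset$ and $\Delta = 2$.

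First I would argue that every feasible integer schedule has objective value at least $\frac{m(m+1)}{2}$. Since no jobs can overlap, the $2m$ unit jobs occupy time slots $1, \ldots, 2m$, and each $C_{j_q} \geq \min(C_{i_q}, C_{k_q})$. The $m$ values $\min(C_{i_q}, C_{k_q})$ are $m$ distinct elements of $\{1, \ldots, 2m\}$, so their sum is at least $1 + 2 + \cdots + m = \frac{m(m+1)}{2}$; equality is achieved by, e.g., placing $i_q$ at time $q$ and $k_q$ at time $2m-q+1$.

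Next I would exhibit a feasible LP solution of value $m$ by setting $\bar{C}_{i_q} = \bar{C}_{k_q} = m + \frac{1}{2}$ and $\bar{C}_{j_q} = 1$ for every $q$. Since $E_{\wedge} = \emptyset$, only the minimum-chain inequalities~(\ref{Completiontime:minchain}) need to be checked. Because every $p_j \in \{0,1\}$, one has $mc(S,k) \in \{0,1\}$ for all $k \in N$ and $S \subseteq N$, and $f_k(S) = \frac{1}{2}(P + mc(S,k))(P + mc(S,k) + 1)$, where $P := |S \cap A|$. Moreover $\sum_{j \in S} p_j \bar{C}_j = P(m + \frac{1}{2})$ because all unit jobs share the same fractional completion time.

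The case $mc(S,k) = 0$ collapses to $P \leq 2m$, and the case $mc(S,k) = 1$ with $k \in A$ collapses to $P \leq 2m-1$ (which holds because $k \notin S$). The only nontrivial case is $k = j_l$ with $i_l, k_l \notin S$: there the inequality becomes
\[
\tfrac{1}{2}\bigl(P^2 - 2(m-1)P + 2\bigr) \leq \bar{C}_{j_l} = 1 \quad \text{for all } P \in \{0,1,\ldots,2m-2\}.
\]
I expect this to be the only real calculation: the left-hand side is an upward-opening parabola in $P$ with vertex at $P = m-1$, so its maximum on the range is attained at the two endpoints, where it equals exactly $1$. This is precisely why the $\bar{C}_a$ must be chosen symmetric; any asymmetric choice would force some $\bar{C}_{j_l}$ to be larger, shrinking the gap. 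Putting the two bounds together yields an integrality gap of at least $\frac{m(m+1)/2}{m} = \frac{m+1}{2} \in \Omega(n)$.
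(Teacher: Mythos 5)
Your proof is correct, and it follows a genuinely different route from the paper's. The paper uses an asymmetric instance with a single ``hub'' job $a\in A$ of processing time $m/2$ shared by all $B$-jobs, together with singleton predecessors $i_q$ (so $\mathcal{P}(j_q)=\{a,i_q\}$ and $n=2m+1$); the fractional point it exhibits corresponds to an actual schedule with idle time, and its feasibility for~(\ref{Completiontime:minchain}) is verified by first invoking the Wolsey--Queyranne parallel inequalities and then checking the $mc$-terms case by case. You instead reuse the disjoint-gadget instance from Lemma~\ref{lem:IPformulation:linearordering:gap} (all unit/zero processing times, $n=3m$), and your fractional point $\bar C_{i_q}=\bar C_{k_q}=m+\tfrac12$, $\bar C_{j_q}=1$ is maximally symmetric and does not arise from a physical schedule; feasibility reduces to a clean three-way case split on $mc(S,k)\in\{0,1\}$ and a one-line parabola bound. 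Your verification is entirely self-contained, does not lean on Theorem~\ref{thm:singlemachinepolytope:validrelaxation}, and makes transparent exactly where equality is attained (at $P\in\{0,2m-2\}$). A bonus of your choice is that a single instance family witnesses the linear integrality gap simultaneously for the linear-ordering LP~(\ref{IPformulation:linearordering:LP}) and the completion-time LP~(\ref{Completiontime:LP}), which the paper's presentation does not make explicit. One small remark: for the lower bound you only need that the $m$ values $\min(C_{i_q},C_{k_q})$ are pairwise distinct integers in $\{1,\dots,2m\}$; the round-robin schedule you describe (which the paper also uses for the Lemma~\ref{lem:IPformulation:linearordering:gap} instance) shows this bound is tight, though tightness is not needed for $\Omega(n)$.
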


\begin{proof}
Let $m \in \mathbb{N}$.
Consider the example depicted in Figure~\ref{fig:generalminchain:gap} with $n = 2m+1$ jobs and sets $A = \{a,i_1,\dots,i_m\}$ and $B=\{j_1,\dots,j_m\}$.
The processing times and weights are $p_a = \frac{m}{2}$, $w_a = 0$, and $p_{i_q} = w_{j_q} = 1$, $w_{i_q} = p_{j_q} = 0$ for all $q \in [m]$.
The predecessors of jobs in $B$ are $\mathcal{P}(j_q) = \{a,i_q\}$ for all $q \in [m]$.
It holds $mc(\emptyset,i_q) = mc(\emptyset,j_q) = 1 < \frac{m}{2} = mc(\emptyset,a) $ for all $q \in [m]$.
Note that $E_{\wedge} = \emptyset$ and $\Delta = 2$.

\begin{figure}
\centering
\begin{tikzpicture}[->,>=stealth',shorten >=1pt,auto,node distance=1.5cm,semithick]

	% nodes
	\node [circle, draw,minimum size=8mm] (1) {$j_1$};
	\node [circle, draw,minimum size=8mm] (2) [right of=1] {$j_2$};
	\node  (dots) [right of=2] {$\cdots$};
	\node [circle, draw,minimum size=8mm] (d) [right of=dots] {$j_m$};
	
	\node [circle, draw,minimum size=8mm] (1a) [below=1cm of 1] {$i_1$};
	\node [circle, draw,minimum size=8mm] (2a) [right of=1a] {$i_2$};
	\node  (dotsa) [right of=2a] {$\cdots$};
	\node [circle, draw,minimum size=8mm] (da) [right of=dotsa] {$i_m$};
	
	\node [circle, draw,minimum size=8mm] (i) [above=1cm of dots] {$a$};

	\draw (1a) edge (1);
	\draw (2a) edge (2);
	\draw (da) edge (d);
	\draw (i) edge (1);
	\draw (i) edge (2);
	\draw (i) edge (d);
\end{tikzpicture}
\caption{{\small Instance for which LP~(\ref{Completiontime:LP}) exhibits an integrality gap that is linear in the number of jobs. The processing times and weights are $p_a = \frac{m}{2}$, $w_a = w_{i_q} = p_{j_q} = 0$, and $w_{j_q} = p_{i_q} = 1$ for all $q \in [m]$.}}
\label{fig:generalminchain:gap}
\end{figure}
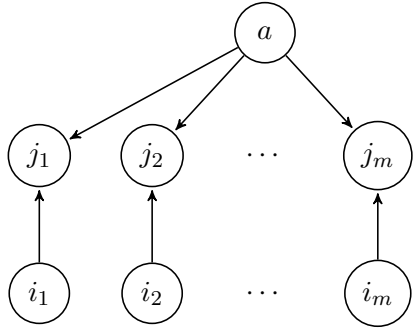

Due to the structure of the precedence relation, there are only two reasonable schedules that obey the precedence constraints.
Let $C'$ and $C''$ be the completion time vectors of the schedules $a \rightarrow \{j_1,\dots,j_m\} \rightarrow \{i_1,\dots,i_m\}$ and $\{i_q \rightarrow j_q\} \rightarrow a$, respectively.
The notion $\{i_q \rightarrow j_q\}$ indicates that we schedule pairs $i_q \rightarrow j_q$ for all $q \in [m]$ consecutively in arbitrary order.
One can easily verify that any other schedule has a strictly larger objective value than $C'$ or $C''$.
The objective function values of $C'$ and $C''$ are equal to $\frac{m}{2} \cdot m$ and $\sum_{q=1}^m q = \frac{m(m + 1)}{2}$, respectively.
Since $\frac{m^2}{2} < \frac{m(m+1)}{2}$, the optimal schedule is $C'$ with an objective value of $\Omega(n^2)$.
Now consider $C^*$ defined as $C^*_{j_q} = 1$, $C^*_{i_q} = q + 1$ for all $q \in [m]$ and $C^*_a = \frac{3}{2} m + 1$.
The objective function value of $C^*$ is equal to $m \in \mathcal{O}(n)$, so the gap of the objective function values of $C'$ and $C^*$ is $\Omega(n)$.

It remains to show that $C^*$ is feasible for LP~(\ref{Completiontime:LP}), i.e.~it satisfies constraints~(\ref{Completiontime:minchain}).
Note that $C^*$ corresponds to a schedule with idle time on the single machine (i.e.~no jobs overlap) of the following form: $idle \rightarrow \{j_1,\dots,j_m\} \rightarrow \{i_1,\dots,i_m\} \rightarrow idle \rightarrow a$.
Hence it satisfies the constraints $\sum_{l \in S} p_l \, C^*_l \geq f(S)$ for all $S \subseteq N$ of~\cite{WolseyISMPTalk,Queyranne1993}.
We now show that also $\sum_{l \in S} p_l \, C^*_l + mc(S,k) C^*_k \geq f_k(S)$ for all $k \in N$ and all $S \subseteq N$.
 
First let $k \in A$. The value of $mc(S,k)$ is then either equal to 0 (if $k \in S$) or $p_k$ (if $k \notin S$). In either case we obtain $f_k(S) = \frac{1}{2}\left( \sum_{l \in S \cup \{k\}} p_l \right)^2 + \frac{1}{2}\left(\sum_{l \in S \cup \{k\}} p_l^2 \right) = f(S \cup \{k\})$
and thus
\begin{equation}
 \sum_{l \in S} p_l \, C^*_l + mc(S,k) \, C^*_k = \sum_{l \in S \cup \{k\}} p_l \, C^*_l \geq f(S \cup \{k\}) = f_k (S).
\end{equation}

Now let $k = j_{q} \in B$, set $t:= \abs{S \cap \{i_1,\dots,i_m\}} \leq m$.
First suppose that $a \in S$, so $mc(S,k) = 0$. It holds
$f_k(S) = \frac{1}{2}\left(\frac{m}{2} + t \right)^2 + \frac{1}{2} \left(\frac{m^2}{4} + t \right) = \frac{m^2}{4} + \frac{1}{2} \, t^2 + \frac{m+1}{2} \, t$.
We obtain
\begin{align}
\begin{split}
\sum_{l \in S} p_l \, C^*_l + mc(S,k)C^*_k &= p_a \, C^*_a + \sum_{i \in S \cap \{i_1,\dots,i_m\}} p_{i} \, C^*_{i} \geq \frac{m}{2}\left(\frac{3}{2}m+1\right) + \sum_{q=1}^t (q+1) \geq \\
&\geq \frac{m^2}{4} + \left(\frac{m}{2} + 1\right) t + \frac{t(t+1)}{2} \geq \frac{m^2}{4} +  \frac{m+1}{2} t + \frac{1}{2} t^2 = f_k(S).
\end{split}
\end{align}
The first inequality holds with equality if the $t$ jobs in $S\cap \{i_1,\dots,i_m\}$ are those with lowest indices, otherwise it is strict.
For the second inequality we use $m+1 \geq t$.
If $a \notin S$, and $i_{q} \in S$, we get $mc(S,k) =0$ and $f_k(S) = \frac{1}{2}t^2 + \frac{1}{2} t = \frac{t(t+1)}{2}$. Similar to before, we obtain
\begin{equation}
\sum_{l \in S} p_l \, C^*_l + mc(S,k) \, C^*_k =\sum_{i \in S \cap \{i_1,\dots,i_m\}} p_{i} \, C^*_{i} \geq \sum_{q=1}^t (q+1) \geq \frac{t(t+1)}{2} = f_k(S).
\end{equation}
Finally, if $\mathcal{P}(k) \cap S = \emptyset$, then $mc(S,k) = 1$. So $f_k(S) = \frac{1}{2}(t + 1)^2 + \frac{1}{2} (t + 1) = \frac{t+1}{2} (t+2)$, and
\begin{align}
\begin{split}
\sum_{l \in S} p_l \, C^*_l + mc(S,k) \, C^*_k &= \sum_{i \in S \cap \{i_1,\dots,i_m\}} p_{i} \, C^*_{i} + C^*_k \geq \sum_{q=1}^t (q+1) + 1 =\\
&=  \frac{t(t+1)}{2} + t + 1 = \frac{t+1}{2} (t + 2) = f_k(S).
\end{split}
\end{align}
So $C^*$ satisfies constraints~(\ref{Completiontime:minchain}) for all $k \in N$ and $S \subseteq N$, and is feasible for LP~(\ref{Completiontime:LP}).
\end{proof}

%==============================================================================================================
\section{NP-Hardness of Restricted Special Cases}\label{sec:NPhardness}

As already indicated in the introduction, $1 \, | \, ao\text{-}prec=A \dot{\vee} B \, | \, \sum w_j C_j$ generalizes several NP-hard problems (see Figure~\ref{fig:overview}), so it is certainly NP-hard.
Theorem~\ref{thm:sumofcompletion:nphard} strengthens the NP-hardness result of Johannes~\cite{Johannes2005} for scheduling OR-precedence constrained jobs with unit processing times.

\begin{theorem}\label{thm:sumofcompletion:nphard}
$1 \, | \, or\text{-}prec=bipartite, \, p_j \in \{0,1\} \, | \, \sum C_j$ and $1 \, | \, or\text{-}prec=bipartite, \, p_j = 1 \, | \, \sum w_j C_j$ with $w_j \in \{0,1\}$ are strongly NP-hard.
\end{theorem}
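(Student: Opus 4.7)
The plan is to derive both statements by direct reductions from (strongly NP-hard) min-sum set cover, exploiting the embedding described at the very beginning of the paper. In both cases I will introduce a job $a_v \in A$ for every vertex $v$ and a job $b_e \in B$ for every hyperedge $e$, with $E_\wedge = \emptyset$ and $(a_v,b_e) \in E_\vee$ iff $v \in e$; only the processing times and weights differ between the two variants. The task in each case is to show that the optimal scheduling value coincides with the optimal MSSC value up to an additive constant depending only on the input.

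For the first variant $1 \,|\, or\text{-}prec{=}bipartite,\, p_j \in \{0,1\} \,|\, \sum C_j$, I set $p_{a_v}=1$, $p_{b_e}=0$, and all weights equal to $1$. Since only the vertex-jobs carry processing time and $\sum_v p_{a_v}=|V|$, in any feasible schedule the vertex-jobs complete at exactly the times $1,2,\dots,|V|$, so $\sum_{v} C_{a_v} = |V|(|V|+1)/2$ is a fixed constant. Moreover, because $p_{b_e}=0$, we can always place $b_e$ right after the first vertex-job $a_v$ with $v \in e$ completes, so an optimal schedule achieves $C_{b_e} = \min_{v \in e} C_{a_v}$, which is precisely the covering time of $e$ under the vertex ordering induced by the schedule. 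Hence minimizing $\sum_j C_j$ equals minimizing the MSSC covering-time sum plus the constant $|V|(|V|+1)/2$, and the claimed strong NP-hardness transfers from MSSC.

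For the second variant $1 \,|\, or\text{-}prec{=}bipartite,\, p_j = 1 \,|\, \sum w_j C_j$ with $w_j \in \{0,1\}$, I set $p_j=1$ for all $j$, $w_{a_v}=0$, and $w_{b_e}=1$. The key lemma to prove is that the minimum value of the scheduling objective equals the MSSC optimum plus the constant $|\mathcal{E}|(|\mathcal{E}|+1)/2$. For the lower bound, fix any feasible schedule, let $\pi$ be its induced vertex ordering, and let $p(\cdot)$ denote schedule positions. Writing
\[
\sum_{e} p(b_e) = |\mathcal{E}| + \sum_{e} \bigl|\{a_v : p(a_v) < p(b_e)\}\bigr| + \binom{|\mathcal{E}|}{2},
\]
it suffices to observe that if $v^*_e$ is the first endpoint of $e$ in $\pi$, OR-feasibility forces $p(a_{v^*_e}) < p(b_e)$, hence every vertex $v$ with $\pi(v) \le \pi(v^*_e)$ lies before $b_e$ in the schedule; so the middle sum is at least $\sum_e \pi(v^*_e) = \mathrm{MSSC}(\pi)$. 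For the matching upper bound, the canonical schedule that processes vertex-jobs in order $\pi$ and inserts each $b_e$ immediately after its first endpoint realizes equality: a short computation with $g_k := |\{e : \pi(v^*_e)=k\}|$ yields
\[
\sum_{e} p(b_e) = \sum_{k} g_k k \,+\, \sum_{k} g_k \sum_{j<k} g_j \,+\, \sum_{k} \tbinom{g_k+1}{2} = \mathrm{MSSC}(\pi) + \tfrac{|\mathcal{E}|(|\mathcal{E}|+1)}{2},
\]
since the last two sums telescope to $|\mathcal{E}|(|\mathcal{E}|+1)/2$ independently of $(g_k)$. Combining both bounds gives the equivalence and the reduction.

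The main obstacle is the lower-bound argument in the second variant: one has to resist the temptation to canonicalize an arbitrary schedule (which can break feasibility when edges and vertices are swapped), and instead argue position-by-position that every job preceding $b_e$ contributes to its completion time. The decomposition of $\sum_e p(b_e)$ into (i) the $\binom{|\mathcal{E}|}{2}$ edge-edge pairs, (ii) the vertex-before-edge count bounded from below by $\pi(v^*_e)$, and (iii) the trivial $|\mathcal{E}|$ from the $+1$ completion offsets, is the cleanest way to keep the accounting rigorous. Strong (as opposed to ordinary) NP-hardness then follows automatically, since MSSC has no numerical input and both reductions produce instances in which all processing times and weights are in $\{0,1\}$.
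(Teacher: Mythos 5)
Your proof is correct, but it takes a genuinely different route from the paper. The paper reduces from Exact 3-Set Cover, arguing that a YES-instance corresponds to a schedule made up of $q$ contiguous blocks $(S,e_1,e_2,e_3)$ achieving a precisely computable objective value, and that this value is not attainable otherwise; you instead reduce directly from MSSC and prove an exact value-preservation statement (opt scheduling value equals MSSC optimum plus an explicit additive constant). Your argument for variant (i) is notably clean: with zero-processing-time edge-jobs, the vertex-job contribution $|V|(|V|+1)/2$ drops out as a constant (for idle-time-free schedules, which suffices since idle time only hurts) and the edge-job contribution is literally the MSSC objective. For variant (ii), the paper's counting via exact-cover blocks and your position-by-position decomposition of $\sum_e p(b_e)$ into the $|\mathcal{E}|$ offsets, the vertex-before-edge count, and the $\binom{|\mathcal{E}|}{2}$ edge--edge pairs are comparable in effort; your $g_k$ computation and the lower bound via $p(a_{v^*_e}) < p(b_e)$ both check out. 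One trade-off worth noting: X3C's strong NP-hardness is a Garey--Johnson staple requiring no further argument, whereas your reduction relies on MSSC being (strongly) NP-hard, which is true (it has no numeric input, and even APX-hardness is known from Feige, Lov\'asz, and Tetali) but one step less self-contained. A small phrasing slip: ``in any feasible schedule the vertex-jobs complete at exactly the times $1,\dots,|V|$'' should be restricted to idle-time-free schedules, though your subsequent reasoning implicitly handles this correctly.
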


\begin{proof}
The reduction goes from Exact 3-Set Cover which is known to be strongly NP-hard~\cite{GareyJohnson1979}.
The input of an Exact 3-Set Cover instance consists of a positive integer $q$, a universe $U = \{e_1,\dots,e_{3q}\}$ and a collection of subsets $\mathcal{R}$ of $U$ where each $S \in \mathcal{R}$ is of size $\abs{S} = 3$.
The task is to decide whether or not there is an exact cover for $U$, i.e.~$\mathcal{T} \subseteq \mathcal{R}$ with $\abs{\mathcal{T}} = q$ such that $U = \bigcup_{S \in \mathcal{T}} S$.

Let $(q,U,\mathcal{R})$ be an instance of Exact 3-Set Cover.
We introduce one job for every set in $\mathcal{R}$ (set-jobs) and one for every element of $U$ (element-jobs).
The graph representing the precedence constraints is $G=(\mathcal{R} \cup U,E_{\vee})$ with $E_{\vee} =\{(S,e) \in \mathcal{R} \times U \, | \, e \in S\}$.
Note that there are no edges within jobs in $\mathcal{R}$ or within jobs in $U$ and that the out-degree of each set-job in $G$ is equal to three.
Hence, this is an instance of $1 \, | \, or\text{-}prec=bipartite \, | \, \sum w_j C_j$.
The weights and processing times depend on the initial scheduling problem:
\begin{enumerate}[(i)]
\item\label{unitw} for $1 \, | \,  or\text{-}prec=bipartite, \, p_j \in \{0,1\} \, | \, \sum C_j$, set $p_{S} = 1$ for all $S \in \mathcal{R}$ and $p_{e} = 0$ for all~$e \in U$.
\item\label{unitp} for $1 \, | \,  or\text{-}prec=bipartite, \, p_j = 1 \, | \, \sum w_j C_j$, set $w_{S} = 0$ for all $S \in \mathcal{R}$ and $w_{e} = 1$ for all $e \in U$.
\end{enumerate}
Note that an optimal schedule for any of the two problems will schedule all successors of a set-job immediately, since the element-jobs are the only ones with zero processing time or positive weight, respectively.

Suppose that $(q,U,\mathcal{R})$ is a YES-instance and let $\mathcal{T}= \{S_1,\dots,S_q\} \subseteq \mathcal{R}$ be an exact cover for~$U$.
So it is feasible to first schedule $S_1$ followed by the three elements it contains, then $S_2$ followed by the three elements it contains, and so on.
With $\abs{\mathcal{R}} = m$ the objective value of this schedule for~(\ref{unitw}) is equal to
\begin{equation}
\sum_{S \in \mathcal{R}} C_{S} + \sum_{j=1}^q C_{S_j} \abs{S_j} = \sum_{j=1}^m j + 3 \sum_{j=1}^q j = \frac{m(m+1)}{2} + \frac{3q(q+1)}{2} .
\end{equation}
For~(\ref{unitp}) the objective value is equal to
\begin{equation}
\sum_{j=1}^q \sum_{e \in S_j} C_{e} = \sum_{j =1}^q (4j - 2) + \sum_{j=1}^q (4j - 1) + \sum_{j=1}^q 4j =12 \frac{q(q+1)}{2} - 3 q = 6q^2+ 3q.
\end{equation}

These are in fact the lowest possible objective values, since every set-job \emph{activates} at most three element-jobs.
So each schedule with these objective values starts with $q$ contiguous and disjoint ``blocks'' of the form $(S_j,e_{1_j},e_{2_j},e_{3_j}) \in \mathcal{R} \times U^3$.
The corresponding $q$ set-jobs then form an exact cover for~$U$. 
If we could solve $1 \, | \, or\text{-}prec=bipartite, \, p_j \in \{0,1\} \, | \, \sum C_j$ or $1 \, | \,  or\text{-}prec=bipartite, \, p_j = 1 \, | \, \sum w_j C_j$ with $w_j \in \{0,1\}$ in polynomial time, then we could decide whether or not $(q,U,\mathcal{R})$ is a YES-instance of Exact 3-Set Cover.
\end{proof}

Note that the second problem in Theorem~\ref{thm:sumofcompletion:nphard} is a special case of the problem considered in~\cite{Johannes2005}, and that $1 \, | \, ao\text{-}prec=A \dot{\vee} B, \, p_j = 1 \, | \, \sum C_j$ is trivial.
%==============================================================================================================

\section{Conclusion}\label{sec:remarks}

In this paper, we analyze single-machine scheduling problems with certain AND/OR-precedence constraints that are extensions of min-sum set cover, precedence-constrained min-sum set cover, pipelined set cover, minimum latency set cover, and set cover.
Using machinery from the scheduling context, we derive new approximation algorithms for the general problem that rely on solving time-indexed linear programming relaxations and scheduling jobs according to random $\alpha$-points.
In a nutshell, one may say that the new key technique is to choose the value of $\alpha$ for jobs in $A$ dependent on the corresponding $\beta$-value of jobs in $B$.

This observation allows us also to derive the best constant-factor approximation algorithm for an interesting special case of the generalized min-sum set cover problem---the all-but-one MSSC problem---which in itself is a generalization of min-sum vertex cover.
This $4$-approximation algorithm may further support the conjecture of Im et al.~\cite{ImSviridenkoZwaan2014}, namely that GMSSC is $4$-approximable.
Further, we present the first constant-factor approximation for scheduling jobs subject to bipartite OR-precedence constraints, which generalizes the previously-known greedy algorithms for min-sum set cover and pipelined set cover.

It is easy to see that one can also include AND-precedence constraints between jobs in $A$ and $B$, i.e., allow $E_{\wedge} \subseteq (A \times N) \cup (B \times B)$.
This does not affect the approximation guarantees or feasibility of the constructed schedules, since $\alpha \leq \beta$ and constraints~(\ref{ANDprecsbipartiteunitp:LP:AND}) imply $t^\alpha_a \leq t^\beta_b$ for $(a,b) \in E_{\wedge}$.
Similarly, $l^\alpha_a \leq l^\beta_b$ for $(a,b) \in E_{\wedge}$ follows from~(\ref{ANDprecsMSSC:generalprocessing:LP:AND}).
Note that it is not clear whether the analyses of the algorithms in Section~\ref{sec:MSSC} are tight.

Besides deriving approximation algorithms based on time-indexed LPs, we analyze other standard LP relaxations, namely linear ordering and completion time formulations.
These relaxations facilitated research on scheduling with AND-precedence constraints, see e.g.~\cite{Potts1980,Queyranne1993,Schulz1996,HallSchulz1997,ChudakHochbaum1999,CorreaSchulz2005,AmbuhlMastrolilli2009,AmbuhlMastrolilliMutsanasSvensson2011}.
For the integer hull of the linear ordering relaxation we present a class of facet-defining valid inequalities and we generalize the well-known inequalities of~\cite{WolseyISMPTalk,Queyranne1993} for the completion time relaxation.
We show that, despite these additional constraints, both relaxations exhibit linear integrality gaps, even if $\Delta = 2$ and $E_{\wedge} = \emptyset$.
Thus, unless one identifies stronger valid inequalities, these formulations seem to fail as soon as OR-precedence constraints are incorporated.

The results in Section~\ref{sec:otherformulations} mostly also apply to arbitrary OR-networks.
One can show that constraints~(\ref{IPformulation:linearordering:ineq:OR}) are facet-defining or implicit inequalities if the graph $G$ is acyclic w.r.t.~$E_{\vee}$.
The  validity of constraints~(\ref{ineq:minchain}) transfers to general OR-networks.
However, the functions $mc(\cdot,k)$ and $f_k(\cdot)$ are not supermodular if the graph contains several OR-layers.
In view of the integrality gaps in Sections~\ref{sec:GMSSC} and~\ref{sec:otherformulations}, it would be interesting to obtain stronger bounds on the integrality gap of the time-indexed formulation considered in Section~\ref{sec:MSSC}.

\paragraph*{Acknowledgments.} The authors would like to thank Thomas Lidbetter for fruitful discussions and for suggesting the algorithm in Lemma~\ref{lem:rhomaximizingset}.

%==============================================================================================================
\printbibliography

\newpage
\appendix

\section{Proofs for Section~\ref{sec:MSSC}}\label{appendix:MSSC}

\begin{proof}[Proof of Lemma~\ref{lem:ANDprecsMSSC:generalprocessingtime}]
Note that $\varepsilon' = \frac{\varepsilon}{2\Delta}$ is polynomial in the input.
So we can solve LP~(\ref{ANDprecsMSSC:generalprocessing:LP}) in polynomial time and thus Algorithm 2 runs in polynomial time.
Let $\overline{x}$ be an optimal solution of LP~(\ref{ANDprecsMSSC:generalprocessing:LP}) and let $\overline{C}_j = \sum_l \tau_{l-1} \, \overline{x}_{jl}$ be the fractional completion time of $j \in N$.

Let $0 < \beta \leq 1$. Recall the defintion of $\Delta = \max_{b \in B} \abs{\mathcal{P}(b)}$ and set $\alpha = \alpha(\beta) = \frac{\beta}{\Delta}$.
For any $(i,j) \in E_{\wedge}$, observe that $l^{\alpha}_{i} \leq l^{\alpha}_j$ (for $i,j \in A$) and $l^{\beta}_{i} \leq l^{\beta}_j$ (for $i,j \in B$) due to~(\ref{ANDprecsMSSC:generalprocessing:LP:AND}).
Constraint~(\ref{ANDprecsMSSC:generalprocessing:LP:OR}) implies that for every $b \in B$ with $\mathcal{P}(b) \not= \emptyset$ there is $a_b \in \mathcal{P}(b)$ that satisfies $\sum_{s=1}^{l^\beta_b} \overline{x}_{a_b s} \geq \frac{\beta}{\abs{\mathcal{P}(b)}} \geq \alpha$. So $l^{\alpha}_{a_b} \leq l^\beta_b$.
Hence the ordering the jobs according to $\prec$ respects all precedence constraints due to the tie breaking rules.
So the schedule returned by Algorithm 2 is feasible.

As for the approximation factor, fix $j \in N$.
For $l \in \{0,\dots,L\}$ let $\alpha_l = \sum_{k=1}^l \overline{x}_{jk}$ be the fraction of job $j$ that is completed until time $\tau_{l}$. Note that $\alpha_0 = 0$, $\alpha_L = 1$ and $\tau_{l^\gamma_j -1} \leq \tau_{l-1}$ for $\gamma \leq \alpha_l$.
We obtain
\begin{align}\label{ineq:integral:tau}
\begin{split}
\int_0^1 \tau_{l^\gamma_j - 1} d\gamma &= \sum_{l=1}^{L} \int_{\alpha_{l-1}}^{\alpha_{l}} \tau_{l^\gamma_j -1} d \gamma \leq \sum_{l=1}^{L} (\alpha_{l} - \alpha_{l-1}) \tau_{l-1} =\\
&= \sum_{l=1}^{L} \left( \sum_{k=1}^{l} \overline{x}_{jk} - \sum_{k=1}^{l-1} \overline{x}_{jk} \right) \tau_{l-1} = \sum_{l=1}^L \overline{x}_{jl} \cdot \tau_{l-1} = \overline{C}_j.
\end{split}
\end{align}
For $i, j \in N$ and $0 < \beta \leq 1$, let $\eta_i^j(\beta) = \sum_{k = 1}^{l^\alpha_j} \overline{x}_{ik}$ be the fraction of $i$ that is processed by time $\tau_{l^\beta_j}$.
Let $b \in B$ and $i \prec b$. Then $\alpha \leq \eta^b_i (\beta)$ (if $i \in A$) and $\alpha \leq \beta \leq \eta^b_i(\beta)$ (if $i \in B$), respectively.
Further~(\ref{ANDprecsMSSC:generalprocessing:LP:nooverlap}) implies
\begin{equation}\label{ineq:completion:tau}
\alpha \sum_{i \preceq b} p_i \leq \sum_{i \preceq b} \eta^b_i(\beta) \, p_i = \sum_{i \preceq b} \sum_{k = 1}^{l^\beta_b} p_i \, \overline{x}_{ik} \leq \sum_{i \in N} \sum_{k=1}^{l^\beta_b} p_i \, \overline{x}_{ik} \leq \tau_{l^\beta_b}
\end{equation}
Let $C_j(\beta)$ be the completion time of job $j$ in the schedule returned by Algorithm 2 for a realization of $\beta$.
It holds $C_b(\beta) \leq \sum_{i \preceq b} p_i$ for all $b \in B$.
So $C_b(\beta) \leq \sum_{i \preceq b} p_i \leq \frac{1}{\alpha}\tau_{l^\beta_{b}}$ by~(\ref{ineq:completion:tau}).
If we draw $\beta$ randomly from $(0,1]$ with density function $f(\beta) = 2\beta$, then the expected completion time of $b \in B$ is 
\begin{align}\label{ineq:expectation:tau}
\begin{split}
 \mathbb{E}[C_b(\beta)] &\leq \int_0^1 f(\beta) \frac{\Delta}{\beta} \tau_{l^\beta_b} d \beta =\int_0^1 f(\beta) \frac{\Delta}{\beta} (1+\varepsilon') \tau_{l^\beta_b -1} d \beta =\\
 &= 2 \Delta(1+\varepsilon') \int_0^1 \tau_{l^\beta_b -1} d \beta \leq (2 \Delta +\varepsilon) \overline{C}_b,
 \end{split}
\end{align}
where the last inequality is due to~(\ref{ineq:integral:tau}) and the choice of $\varepsilon'$.
Since only jobs in $B$ contribute to the objective value this proves the claim.
\end{proof}

\begin{proof}[Proof of Lemma~\ref{lem:ANDprecsMSSC:releasedates}]
The proof is inspired by~\cite{HallShmoysWein1996}.
We formulate the proof only for the case of arbitrary processing times.
It can be easily adapted to the 0/1 case, by informally replacing $\tau_l$ and $l$ by $t$ and setting $\varepsilon' = 0$.
In particular, ordering the jobs according to $\prec$ respects all precedence constraints.
If we add idle time where necessary such that each job starts only after its release date, the schedule returned by the algorithm is feasible.

As for the approximation factor, assume that there are jobs with non-trivial release dates, and let $0 < \beta \leq 1$.
Let $b \in B$ and $i \in A$ with $i \prec b$.
Then $\beta > 0$ implies $\alpha = \frac{\beta}{\Delta} >0$, and thus $\tau_{l^\beta_b} \geq \tau_{l^\alpha_i }\geq r_i$.
Similarly $\tau_{l^\beta_b} \geq \tau_{l^\beta_i} \geq r_i$ for $i \in B$ and $i \prec b$.
Hence $\tau_{l^\beta_b} \geq \max_{i \preceq b} r_i$.
The completion time of $b \in B$ in the schedule returned by the algorithm for a realization of $\beta$ is
\begin{equation}
C_b(\beta) \leq \max_{i \preceq b} r_i + \sum_{i \preceq b} p_i \leq \tau_{l^\beta_b} + \frac{1}{\alpha} \tau_{l^\beta_b} = \left( 1 + \frac{\Delta}{\beta} \right) \tau_{l^\beta_b} \leq \frac{\Delta + 1}{\beta} \tau_{l^\beta_b},
\end{equation}
where the second inequality follows from~(\ref{ineq:completion:tau}).
Note that~(\ref{ineq:completion:tau}) is not affected by release dates, since we only bound $\alpha \leq \eta^b_i(\beta) = \sum_{k = 1}^{l^\beta_b} \overline{x}_{ik}$ and use constraint~(\ref{ANDprecsMSSC:generalprocessing:LP:nooverlap}).
If we choose $\varepsilon' = \frac{\varepsilon}{2\Delta +2}$, then similar to~(\ref{ineq:expectation:tau}), the expected value of the completion time of $b \in B$ is
\begin{equation}
\mathbb{E}[C_b(\beta)] \leq \int_0^1 f(\beta) \frac{1 + \Delta}{\beta} \tau_{l^\beta_b} d\beta = 2(\Delta + 1) (1 + \varepsilon') \int_0^1 \tau_{l^\beta_b -1} d \beta \leq (2\Delta + 2 + \varepsilon) \overline{C}_b.
\end{equation}
Note that only jobs in $B$ contribute to the objective value. We can derandomize similar to Lemma~\ref{lem:ANDprecsbipartite:derandomize}, since each job only gets preempted at most once per time slot/interval also in the presence of release dates. This proves the claim.
\end{proof}

\section{Proofs for Section~\ref{sec:otherformulations}}

\subsection{Proof of Theorem~\ref{thm:ORfacetdefining}}\label{appendix:linearordering}

In the following, we interchangeably use $\delta$ to denote a total order of the jobs, i.e.~a single-machine schedule, and the corresponding 0/1 vector.
First, we discuss why constraints~(\ref{IPformulation:linearordering:ineq:OR}) are valid for any feasible schedule.
Note that any schedule (whether it is feasible or not) violates at most one of the constraints $\delta_{aa'} + \delta_{a'b} \geq 1$ or $\delta_{a'a} + \delta_{ab} \geq 1$, since $\delta_{aa'} + \delta_{a'a} = 1$ by~(\ref{IPformulation:linearordering:LP2:linearorder}).
Hence, in order for one of these inequalities to be violated, we need $\delta_{ab} = \delta_{a'b} = 0$.
But then $b$ precedes $a$ and $a'$, so the precedence constraints of $b$ are violated, and the schedule is infeasible.
Note that constraints~(\ref{IPformulation:linearordering:ineq:OR}) together with~(\ref{IPformulation:linearordering:LP2:linearorder}) dominate constraints~(\ref{IPformulation:linearordering:LP2:OR}).

To prove the second part of Theorem~\ref{thm:ORfacetdefining}, we make use of the following polyhedral observation.
Let $Q$ be the integer hull of the feasible region of LP~(\ref{IPformulation:linearordering:LP2}) if we drop constraints~(\ref{IPformulation:linearordering:LP2:AND}), i.e.~remove all AND-precedence constraints and OR-precedence constraints with only one OR-predecessor from $G$. 
That is, $Q := \conv(\{\delta \in \{0,1\}^{n^2} \, | \, \text{(\ref{IPformulation:linearordering:LP2:linearorder}),(\ref{IPformulation:linearordering:LP2:transitivity}),(\ref{IPformulation:linearordering:LP2:OR}),(\ref{IPformulation:linearordering:LP2:consistency})}\})$.
The precedence graph of the resulting instance of $1 \, | \, ao\text{-}prec=A \dot{\vee} B \, | \, \sum w_j C_j$ with $E_{\wedge} = \emptyset$ is denoted by $G'$.
Note that this instance satisfies $\abs{\mathcal{P}(b)} \in \{0,2\}$ for all $b \in B$.

Clearly, all feasible vectors $\delta \in Q$ satisfy $0 \leq \delta_{kl} \leq 1$ for all $k,l \in N$.
That is, for all distinct $k,l \in N$ the removed constraint of~(\ref{IPformulation:linearordering:LP2:AND}) defines a supporting hyperplane, call it~$H_{kl}$, at~$Q$.
In particular, for any facet $F$ of $Q$, either $F \cap H_{kl} \in \{\emptyset, Q \cap H_{kl}\}$ or $F \cap H_{kl}$ is a facet of $Q \cap H_{kl}$.
So in order to prove Theorem~\ref{thm:ORfacetdefining} it suffices to show that constraints~(\ref{IPformulation:linearordering:ineq:OR}) are facet-defining for $Q$.
We will do so by exhibiting $\dim(Q)$ affinely independent feasible vectors of $Q$ that satisfy~(\ref{IPformulation:linearordering:ineq:OR}) with equality.
Similar to~\cite{GrotschelJungerReinelt1985,QueyranneSchulz1994}, it is easy to see that $Q$ is not contained in any lower dimensional affine subspace than the one spanned by constraints~(\ref{IPformulation:linearordering:LP2:linearorder}) and~(\ref{IPformulation:linearordering:LP2:consistency}). So the dimension of $Q$ is equal to $d := \frac{n(n-1)}{2}$.

\begin{lemma} \label{lem:ORfacetdefining}
Constraints~\emph{(\ref{IPformulation:linearordering:ineq:OR})} are facet-defining for $Q$.
\end{lemma}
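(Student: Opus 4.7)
The plan is to exhibit $d = n(n-1)/2$ affinely independent feasible integer solutions of LP~(\ref{IPformulation:linearordering:LP2}) (with~(\ref{IPformulation:linearordering:LP2:AND}) dropped) that satisfy~(\ref{IPformulation:linearordering:ineq:OR}) with equality; by the polyhedral remarks preceding the lemma, this proves the facet property. I first rewrite the tight constraint using $\delta_{aa'} + \delta_{a'a} = 1$ as $\delta_{a'a} = \delta_{a'b}$, which, together with the OR-feasibility inequality $\delta_{ab} + \delta_{a'b} \geq 1$, shows that the face $F$ consists of all feasible orderings in which either $a'$ precedes both $a$ and $b$ (Type~A), or $a'$ succeeds both $a$ and $b$ and additionally $a \prec b$ (Type~B).

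Fix the reference ordering $\pi_0 = (a', a, b, v_1, \ldots, v_{n-3})$ for some labeling $v_1, \ldots, v_{n-3}$ of $N \setminus \{a, a', b\}$; this is a Type-A ordering, hence in $F$. I construct two families of points in $F$. The first family consists of $\binom{n-1}{2} + 1$ affinely independent orderings with $a'$ in position~$1$: after fixing $\delta_{a'j} = 1$ for all $j \neq a'$, these orderings are in bijection with the vertices of the linear ordering polytope on $N \setminus \{a'\}$, whose dimension is $\binom{n-1}{2}$, so such a selection is possible, and every resulting difference from $\pi_0$ is supported entirely on coordinates $\delta_{ij}$ with $i, j \in N \setminus \{a'\}$. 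The second family consists, for $k = 3, 4, \ldots, n$, of the Type-B orderings $\tau_k := (a, b, v_1, \ldots, v_{k-3}, a', v_{k-2}, \ldots, v_{n-3})$, providing $n - 2$ further points, for a grand total of $\binom{n-1}{2} + 1 + (n - 2) = d$ orderings in $F$.

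The remaining task is affine independence. A careful position-by-position comparison shows that $\tau_k - \pi_0$ is nonzero only in the $\delta_{a'\cdot}$-block: $\delta_{a'a}$ and $\delta_{a'b}$ each drop from $1$ to $0$, and $\delta_{a'v_i}$ drops from $1$ to $0$ precisely when $i \leq k - 3$, while every coordinate $\delta_{ij}$ with $i, j \in N \setminus \{a'\}$ is preserved, because shifting $a$ and $b$ one slot forward leaves their order relative to any $v_i$ unchanged and the $v_i$'s themselves remain in the same relative order. Consequently $\tau_{k+1} - \tau_k$ equals a single unit vector in coordinate $\delta_{a'v_{k-2}}$, so the differences $\tau_3 - \pi_0, \ldots, \tau_n - \pi_0$ span an $(n-2)$-dimensional subspace whose coordinate support is disjoint from that of the first-family differences. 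Adding the two subspace dimensions gives $\binom{n-1}{2} + (n-2) = d - 1$, matching the expected dimension of a facet. I expect the main obstacle to lie precisely in the coordinate bookkeeping for $\tau_k - \pi_0$: one must confirm that no $\delta_{ij}$ with $i, j \neq a'$ changes, and this decoupling is what makes the affine independence transparent.
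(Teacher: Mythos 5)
Your construction is correct in outline and takes a genuinely different route from the paper: the paper proves the facet property by induction on $n$, removing a successor-free job, applying the inductive hypothesis, and carefully reinserting the job; you instead give a direct, non-inductive construction that splits the $d$ required points into two families whose difference vectors have disjoint coordinate supports, which is arguably more transparent. However, two steps in your write-up need to be patched before the argument is complete.

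First, the labeling $v_1,\ldots,v_{n-3}$ of $N\setminus\{a,a',b\}$ cannot be arbitrary: both $\pi_0$ and every $\tau_k$ must be \emph{feasible} for the reduced instance underlying $Q$, i.e., must satisfy the OR constraint of every $b''\in B\setminus\{b\}$ with $|\mathcal{P}(b'')|=2$. With a bad labeling some such $b''=v_i$ could have both of its predecessors after it in the sequence, so ``this is a Type-A ordering, hence in $F$'' is not a valid inference---membership in $F$ requires feasibility in addition to tightness. The fix is to choose the labeling so that every job of $A\setminus\{a,a'\}$ precedes every job of $B\setminus\{b\}$; then any $b''\in B\setminus\{b\}$ has at least one predecessor among $\{a\}\cup(A\cap\{v_1,\ldots,v_{n-3}\})$, and those jobs appear before $b''$ in $\pi_0$ and in every $\tau_k$, regardless of where $a'$ is inserted.

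Second, the claim that orderings with $a'$ in position $1$ are ``in bijection with the vertices of the linear ordering polytope on $N\setminus\{a'\}$'' overstates what you may use: once $a'$ is fixed first, the admissible orderings of $N\setminus\{a'\}$ must still respect the OR constraints of all $b''\in B$ with $a'\notin\mathcal{P}(b'')$, so they form a proper sub-polytope of the linear ordering polytope in general. What you actually need is that the face of $Q$ obtained by fixing $\delta_{a'j}=1$ for all $j\neq a'$ has dimension $\binom{n-1}{2}$, which holds because that face is affinely isomorphic to $Q$ for the reduced instance on $N\setminus\{a'\}$: dropping the constraint of each $b''$ with $a'\in\mathcal{P}(b'')$ leaves an instance with $|\mathcal{P}(b'')|\in\{0,2\}$, to which the paper's dimension statement applies. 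With that invocation, the existence of $\binom{n-1}{2}+1$ affinely independent tight points with $a'$ first is justified; combined with the $n-2$ points $\tau_3,\ldots,\tau_n$, whose differences from $\pi_0$ are supported disjointly on the $\delta_{a'\cdot}$ coordinates as you compute, the count $\binom{n-1}{2}+1+(n-2)=n(n-1)/2=d$ and the disjoint-support argument yield the facet claim.
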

\begin{proof}
Let $b \in B$ with $\mathcal{P}(b) = \{a,a'\}$.
We prove the statement by exhibiting $d = \frac{n(n-1)}{2}$ affinely independent integer feasible points for $Q$ that satisfy $\delta_{aa'} + \delta_{a'b} = 1$.
Recall that the instance on $G'$ that corresponds to $Q$ satisfies $E_{\wedge} = \emptyset$ and $\abs{\mathcal{P}(b')} \in \{0,2\}$ for all $b' \in B$.
The proof goes by induction on the number of jobs~$n$.

The base case is $n =3$, i.e.~$d = 3$. There is only one possible graph $G'$ that can occur, see Figure \ref{fig:proof:ORfacetdefining} (left).
All feasible schedules for $G'$ are $a \rightarrow a' \rightarrow b$, $a' \rightarrow a \rightarrow b$, $a \rightarrow b \rightarrow a'$ and $a' \rightarrow b \rightarrow a$.
Obviously, all but the first schedule, $a \rightarrow a' \rightarrow b$, satisfy $\delta_{aa'} + \delta_{a'b} = 1$, and their respective $\delta$-vectors are affinely independent.
So the claim holds for $n=3$.

\begin{figure}
\centering
\begin{tikzpicture}[->,>=stealth',shorten >=1pt,auto,node distance=1.5cm,semithick]

		\node[circle, draw, minimum size=8mm] (i) {$a$};
		\node[circle, draw, minimum size=8mm] (j) [above right of=i] {$b$};
		\node[circle, draw, minimum size=8mm] (i1) [below right of=j] {$a'$};
				
		\path (i) edge (j);
		\path (i1) edge (j);

\end{tikzpicture}
\hspace*{1.3cm}
\begin{tikzpicture}[->,>=stealth',shorten >=1pt,auto,node distance=1.5cm,semithick]

		\node[circle, draw, minimum size=8mm] (i) {$a$};
		\node[circle, draw, minimum size=8mm] (j) [above right of=i] {$b$};
		\node[circle, draw, minimum size=8mm] (i1) [below right of=j] {$a'$};
		\node[circle, draw, minimum size=8mm] (l) [above right of=i1] {$j$};
		\node[circle, draw, minimum size=8mm] (k) [below right of=l] {$i$};

		\path (i) edge (j);
		\path (i1) edge (j);
		\path (i1) edge (l);
		\path (k) edge (l);

\end{tikzpicture}
\hspace*{1.3cm}
\begin{tikzpicture}[->,>=stealth',shorten >=1pt,auto,node distance=1.5cm,semithick]

		\node[circle, draw, minimum size=8mm] (i) {$a$};
		\node[circle, draw, minimum size=8mm] (j) [above right of=i] {$b$};
		\node[circle, draw, minimum size=8mm] (i1) [below right of=j] {$a'$};
		\node[circle, draw, minimum size=8mm] (l) [right of=i1] {$j$};
				
		\path (i) edge (j);
		\path (i1) edge (j);
		\path (l) edge[bend right] (j);
		
		\node[red,draw,very thick,cross out, inner sep=7pt,yshift=-0.1cm] (cross) [above left of=l] {};

\end{tikzpicture}
\caption{{\small Graphs for $n = 3$ (left), and for $n \geq 4$ with $B \setminus \{b\} \not= \emptyset$ (middle) and $B \setminus \{b\} = \emptyset$ (right). The crossed out arc cannot occur, since $\abs{\mathcal{P}(b)} = 2$ by assumption.}}
\label{fig:proof:ORfacetdefining}
\end{figure}
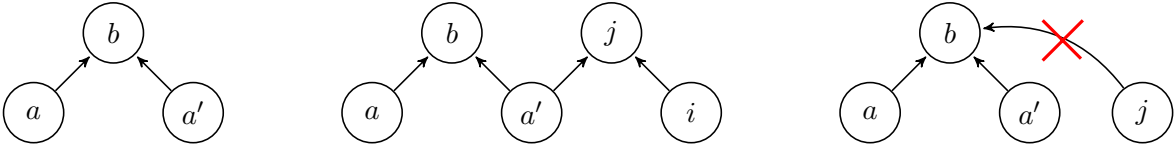

By induction hypothesis, we may assume that $\delta_{aa'} + \delta_{a'b} \geq 1$ is facet-defining for all instances on $n-1 \geq 3$ jobs with $\abs{\mathcal{P}(b')} \in \{0,2\}$ for all $b' \in B$ and $E_{\wedge} = \emptyset$.
Now consider an instance on $n \geq 4$ jobs with $\abs{\mathcal{P}(b')} \in \{0,2\}$ for all $b' \in B$ and $E_{\wedge} = \emptyset$.
We will remove a job $j \in N \setminus \{a,a',b\}$ from the instance in such a way that we can apply the induction hypothesis.
Then, we construct affinely independent feasible vectors based on the set of affinely independent vectors from the instance on $n-1$ jobs.
Feasibility of $\delta$ for $Q$ will follow from feasibility of the constructed schedule.

Note that the job $j$ can be chosen to have no successor.
Either $B \setminus \{b\} \not= \emptyset$ or, if $B \setminus \{b\} = \emptyset$, then there is a job in $A \setminus \{a,a'\}$ without successors, since $\abs{\mathcal{P}(b)} = 2$, see Figure~\ref{fig:proof:ORfacetdefining} (middle and right).
Hence we can choose $j \in B \setminus \{b\}$, or $j \in A \setminus \{a,a'\}$ has neither predecessors nor successors.
If we remove $j$ (and all arcs that end in $j$) from the instance, we are left with an instance on $n-1$ jobs.
By our choice of $j$, this instance satisfies $\abs{\mathcal{P}(b')} \in \{0,2\}$ for all $b' \in B \setminus \{j\}$.

So the induction hypothesis applies and there is a set $D'$ of $d' := \frac{(n-1)(n-2)}{2}$ affinely independent vectors that satisfy $\delta_{aa'} + \delta_{a'b} = 1$.
Note that these vectors correspond to feasible schedules on the instance without $j$.
If we add $j$ again, we add $n-1$ variables, which we will index by $\delta_{ij}$ for $i \in N \setminus \{j\}$.
Suppose that the last $n-1$ coordinates of the vectors correspond to these entries.
We show how to extend the affinely independent vectors in $D'$ to $d = d' + (n-1)$ affinely independent vectors $D$ in higher dimensional space.
For the sake of simplicity, we will omit ``transpose'' and assume that all vectors are column vectors.

First assume that $j$ was chosen to have neither predecessors nor successors.
Note that the vectors $D_1 := \{(\delta,0,\dots,0) \ | \ \delta \in D'\}$ are feasible ($j$ is scheduled first), affinely independent and satisfy~(\ref{IPformulation:linearordering:ineq:OR}) with equality.
It holds $\abs{D_1} = \abs{D'} = d'$.
Let $(\overline{\delta},0,\dots,0) \in D_1$ be a schedule where $j$ starts first.
We can successively move $j$ ``to the back'' of this schedule without loosing feasibility.
Thereby we obtain a set of vectors $D_2 := \{(\overline{\delta},1,0,\dots,0),(\overline{\delta},1,1,0,\dots,0),\dots,(\overline{\delta},1,\dots,1,0),(\overline{\delta},1,\dots,1)\}$ (up to permutation of the last $n-1$ coordinates).
Note that the components that appear in~(\ref{IPformulation:linearordering:ineq:OR}) are not changed, so all vectors in $D_2$ satisfy~(\ref{IPformulation:linearordering:ineq:OR}) with equality.
Obviously, $D_1 \cup D_2$ are affinely independent and $\abs{D_1 \cup D_2} = d' + n-1 = d$, which proves the claim.

Now assume that $j \in B \setminus \{b\}$, so it might not be feasible to schedule $j$ first, and we cannot move $j$ through the schedule as before.
Consider a schedule $\overline{\delta}$ that schedules the jobs in order $a' \rightarrow b \rightarrow a \rightarrow (A \setminus \{a,a'\}) \rightarrow (B \setminus \{j,b\})$, where the sets $A \setminus \{a,a'\}$ and $B \setminus \{j,b\}$ are scheduled in any arbitrary order.
Clearly, $\overline{\delta}$ is a feasible schedule for the instance on $n-1$ jobs and satisfies $\overline{\delta}_{aa'} + \overline{\delta}_{a'b} = 1$.
Hence, we may, w.l.o.g., assume that $\overline{\delta} \in D'$.
Define $\delta^j := (\overline{\delta},1,\dots,1)$ to be the schedule that first schedules $N \setminus \{j\}$ according to $\overline{\delta}$ and then $j$ last.
Note that the vectors in $D_0 := \{(\delta,1,\dots,1) \ | \ \delta \in D'\} \ni \delta^j$ are feasible ($j$ is scheduled last), affinely independent and satisfy the constraint with equality by induction hypothesis. 
It holds $\abs{D_0} = \abs{D'} = d'$. We construct a set $D_3$ of $\abs{D_3} = n-1$ vectors such that $D_0 \cup D_3$ are affinely independent as follows.

For every $i \in N \setminus \{j,a'\}$ let $\delta^i$ be the schedule that orders all jobs according to $\delta^j$, but shifts $i$ to the back of the schedule.
That is, $\delta^i$ swaps the order of $i$ and the set of jobs that appear in $\delta^j$ after $i$.
So in particular $\delta^i_{ij} = 0$.
Further, define $\delta^{a'}$ to be the schedule that orders the jobs $a \rightarrow b \rightarrow (A \setminus \{a,a'\}) \rightarrow (B \setminus \{b\}) \rightarrow j \rightarrow a'$. So, compared to $\delta^j$, job $a'$ is moved to the back and the order of $b$ and $a$ is reversed (this is crucial to maintain feasibility).
Set $D_3 := \{\delta^i \, | \, i \in N \setminus \{j\}\}$ with $\abs{D_3} = n-1$, and note that any $\delta^i \in D_3$ is feasible, since no job in $B$ has exactly one predecessor.

For $i \not= a'$ we did not swap the order of $a'$ and $\{b,a\}$ compared to $\overline{\delta}$, so $\delta^i$ satisfies~(\ref{IPformulation:linearordering:ineq:OR}) with equality.
For $i = a'$, it holds $\delta^{a'}_{aa'} + \delta^{a'}_{a'b} = 1 + 0 = 1$.
Further for $i \in N \setminus \{j\}$, $\delta^k_{ij} = 0$ iff $k = i$ for all $\delta^k \in D_3 $ and $\delta_{ij} = 1$ for all $\delta \in D_0$.
So $D_0 \cup D_3$ are $d' + n-1 = d$ affinely independent feasible vectors that satisfy~(\ref{IPformulation:linearordering:ineq:OR}) with equality.
This proves the claim.
\end{proof}

Lemma~\ref{lem:ORfacetdefining} together with the discussion before proves Theorem~\ref{thm:ORfacetdefining}.

\subsection{Proof of Theorem~\ref{thm:singlemachinepolytope:validrelaxation}}\label{appendix:completiontime}

\begin{proof}[Proof of Theorem~\ref{thm:singlemachinepolytope:validrelaxation}]
For $k \notin S$, it holds $mc(S,k) = p_k + mc(S \cup \{k\},k)$, and thus $f_k(S) \geq f_k(S \cup \{k\})$.
Note that the left-hand sides of~(\ref{ineq:minchain}) for $S$ and $S \cup \{k\}$ coincide in this case.
So for $k \in S$, inequality~(\ref{ineq:minchain}) is dominated by the corresponding constraint for $k$ and $S \setminus \{k\}$.
Further for $k \in S \in \mathcal{S}$, it holds $mc(S,k) = 0$, so~(\ref{ineq:minchain}) is equivalent to the inequality of~\cite{WolseyISMPTalk,Queyranne1993}:
\begin{equation}
 \sum_{j \in S} p_j C_j = \sum_{j \in S} p_j C_j + mc(S,k) C_k \geq f_k(S) = \frac{1}{2}\bigg( \sum_{j \in S} p_j \bigg)^2 + \frac{1}{2} \bigg( \sum_{j \in S} p_j^2 \bigg) = f(S).
\end{equation}
Note that all unit vectors have positive scalar product with the left-hand side of~(\ref{ineq:minchain}), so idle time in a schedule only increases the left-hand side of~(\ref{ineq:minchain}).
Hence, it suffices to show that all completion time vectors of schedules without idle time satisfy~(\ref{ineq:minchain}) for $k \in N$ and $S \subseteq N \setminus \{k\}$ with $S \notin \mathcal{S}$.

Fix $k \in N$ and $S \subseteq N \setminus \{k\}$ such that $S \notin \mathcal{S}$ is not a feasible starting set.
Let $T \in \argmin(mc(S,k))$ be a minimal chain and note that $k \in T$ and $\abs{T} \leq 2$.
We partition $S = S_1 \cup S_2 \cup S_3$, where $S_1 \in \mathcal{S}$ is an inclusion-maximal feasible starting set, and $S_2$ is inclusion-maximal such that $S_1 \cup T \cup S_2 \in \mathcal{S}$. All remaining jobs are contained in $S_3$.
The assumption $S \notin \mathcal{S}$ implies $S_2 \cup S_3 \not= \emptyset$.

First suppose that $S_3 = \emptyset$, i.e.~$S \cup T \in \mathcal{S}$.
Consider a feasible schedule that orders the jobs $S_1 \rightarrow T \rightarrow S_2 \rightarrow N \setminus (S \cup T)$.
Note that such a schedule exists, if we order the jobs within the sets suitably.
We will show that this schedule minimizes $\sum_{j \in N} w_j C_j$ with weights $w_j = p_j$ (for $j \in S$), $w_k = mc(S,k)$ and $w_j = 0$ (for $j \notin S$).
One can verify that its objective function value w.r.t.~these weights is indeed equal to~$f_k(S)$.

All jobs in $N \setminus (S \cup T)$ do not contribute to the objective function, so the objective value would only increase if we scheduled them earlier.
By Smith's rule~\cite{Smith1956}, permuting $T$ and the jobs $j \in S$ does not change the objective value (ratio $w_j/p_j = 1$ for all $j \in S_1 \cup S_2 \cup \{T\}$).
Here, we interpret $T$ as a single job, since, if $\abs{T} = 2$, we cannot schedule $k$ before its predecessor in $T$.
Since $S_1 \in \mathcal{S}$ was chosen to be inclusion-maximal, jobs in $S_2$ have to be scheduled after some job in $T$.
If $T = \{k\}$ this already proves that the schedule described above is optimal w.r.t.~the objective function.
Its objective value equals $f_k(S)$, so inequality~(\ref{ineq:minchain}) is valid and tight.

Now assume that $T = \{i,k\}$, i.e.~$k \in B$ and $i \in \mathcal{P}(k)$.
Note that $i \in \mathcal{P}(j)$ for any $j \in S_2$ by inclusion-maximality of $S_1$.
Suppose we move a job $j \in S_2$ between $i$ and $k$ (which would be feasible).
By Smith's rule, we can assume that $j$ is the job that directly succeeds $k$ in the schedule.
Its completion time decreases by $p_k \leq mc(S,k)$, whereas the completion time of $k$ increases by $p_j$.
Hence, the change in the objective value is equal to $mc(S,k) \, p_j - p_j \, p_k \geq 0$.
So the objective value is smaller, if $T$ precedes all jobs in $S_2$.
If we schedule a job $j \in S_1$ between $i$ and $k$, this only increases the objective value, since $C_j$ is increased, but $C_k$ and all completion times of jobs in $S_2$ remain.
Finally, $i$ was chosen to have minimal processing time such that $S_1 \cup T$ is a feasible starting set by definition of a minimal chain~(\ref{minchain}).
So altering $T$, e.g.~by exchanging $i$ with some other job in $\mathcal{P}(k)$, cannot decrease the completion time of $k$.
This proves that the schedule depicted above is optimal w.r.t.~the objective function. Since its objective value equals $f_k(S)$, inequality~(\ref{ineq:minchain}) is valid.
Further, equality holds for this particular schedule, so~(\ref{ineq:minchain}) is tight.

Now suppose that $S_3 \not= \emptyset$, i.e.~$T \cup S \notin \mathcal{S}$. 
Consider a schedule $S_1 \rightarrow T \rightarrow S_2 \rightarrow S_3 \rightarrow N \setminus (S \cup T)$, where the sets $S_1 \cup T \cup S_2$ are scheduled in a feasible way.
Define the weights as above, i.e.~set $w_j = p_j$ for $j \in S$, $w_k = mc(S,k)$ and $w_j = 0$ for $j \notin S$.
Note that the objective value $\sum_j w_j C_j$ of this schedule equals $f_k(S)$, but the schedule is not feasible due to $S_3$.
(If the schedule was feasible, then $S \cup T \in \mathcal{S}$ is a contradiction to the inclusion-maximality of $S_1$ or $S_2$.)
Similar to above, permuting jobs in $S \cup \{T\}$ does not change the objective value by Smith's rule~\cite{Smith1956}.
Also $S_1 \cup T \cup S_2$ have to appear in this block order, otherwise this would increase the objective function, see above.
Moreover, any feasible schedule has to schedule jobs in $N \setminus (S \cup T)$ earlier, to obey the precedence constraints of jobs in $S_3$.
Hence, to obtain a feasible schedule, we need to increase the completion time of jobs in $S_3$, which also increases the objective value.
So any feasible schedule has an objective value strictly greater than $f_k(S)$, and thus~(\ref{ineq:minchain}) is valid.
\end{proof}

\end{document}